\newcommand{\sat}{\text{{\bf SAT}}}
\newcommand{\imp}{\rightarrow}
\newcommand{\defeq}{\triangleq}
\newcommand{\automath}[1]{\relax\ifmmode{#1}\else{$#1$}\fi}
\newcommand{\oset}[3][0ex]{%
  \mathrel{\mathop{#3}\limits^{
    \vbox to#1{\kern-2\ex@
    \hbox{#2}\vss}}}}
\newcommand{\red}{\rightarrow}
\newcommand{\V}{\mathcal{V}}
\newcommand{\prfcase}[1]{\paragraph*{\normalfont\quad \textbf{\prfcasetext} #1}}
\renewcommand{\prfcase}[1]{\paragraph*{\normalfont\quad \underline{#1}}}
\newcommand{\real}{\Vdash}
\newcommand{\ureal}{\Vvdash}
\newcommand{\loopp}{\mathsf{loop}^+}
\definecolor{myblue}{rgb}{0,0.5,0.8}
\newenvironment{mycenter}{%
  \setlength\topsep{1pt}
  \setlength\parskip{1pt}
  \begin{center}
}{%
  \end{center}
}
\renewcommand{\P}{\mathcal{P}}
\newcommand{\uset}[3][0ex]{%
  \mathrel{\mathop{#3}\limits^{
    \vbox to #1{\kern-2\ex@
    \hbox{\tiny $#2$}\vss}}}}
\let\barredL\L
\newcommand{\Los}{{\barredL}o\'{s}\,}
\newcommand{\T}{\mathcal{T}}
\newcommand{\U}{\mathcal{U}}
\newcommand{\tval}[1]{|#1|}
\newcommand{\tv}[1]{\tval{#1}}
\newcommand{\M}{\mathcal{M}}
\newcommand{\F}{\mathcal{F}}
\newcommand{\FV}{{FV}}
\newcommand{\dom}{\mathop{\mathrm{dom}}}
\newcommand{\nomidem}{\vspace{-0.5em}}
\newcommand{\bvto}[1]{\{#1\}\mapsto}
\newcommand{\bvnat}[1]{\{\nat {#1}\}\mapsto}
\newcommand{\byv}{{\mathrm{v}}}
\newcommand{\canon}[1]{\overline{#1}}
\newcommand{\fa}[2]{\forall^{#1}{#2}}
\newcommand{\far}{\forall^{\N}\!}
\newcommand{\fas}[1]{\fa{\mathrm{st}}{#1}}
\newcommand{\fars}[1]{\fa{\canon{\mathrm{st}}}{#1}}
\newcommand{\fabvs}[1]{\fa{\{\!\mathrm{st}\!\}}\!{#1}}
\newcommand{\fabv}[1]{\forall^{\{\!\N\!\}}\!#1}
\newcommand{\ex}[2]{\exists^{#1}{#2}}
\newcommand{\exs}[1]{\ex{\mathrm{st}}{#1}}
\newcommand{\exbv}[1]{\exists^{\{\!\N\!\}}\!#1}
\newcommand{\exbvs}[1]{\ex{\{\!\mathrm{st}\!\}}\!{#1}}
\newcommand{\exr}{\exists^{\N}\!}
\newcommand{\TYP}[3]{#1~{\vdash}~#2~{:}~#3}
\def\Lam#1#2{\lambda#1\,.\,#2}
\newcommand{\N}{{\mathbb{N}}}
\newcommand{\llpost}{{LLPO$^{\mathrm{st}}$}}
\newcommand{\rec}{\mathop{\mathsf{rec}}}
\newcommand{\nat}[1]{\mathrm{Nat}(#1)}
\newcommand{\natp}[1]{\mathrm{Nat'}(#1)}
\newcommand{\tvr}[1]{\tv{#1}_\rho}
\newcommand{\foint}[1]{\llbracket #1 \rrbracket}
\renewcommand{\st}[1]{\mathrm{st}(#1)}
\newcommand{\stto}[1]{\st{#1}\to}
\newcommand{\rcase}{\noindent$\underline{\Rightarrow}\!\raisebox{.875pt}{$\rfloor$}~$}
\newcommand{\lcase}{\noindent$\underline{\Leftarrow }\!\raisebox{.875pt}{$\rfloor$}~$}
\newcommand{\lequiv}{\Leftrightarrow}
\newcommand{\get}{\mathop{\mathsf{get}}}
\newcommand{\set}{\mathop{\mathsf{set}}}
\newcommand{\matchop}{\mathop{\mathsf{case}}}
\newcommand{\inl}[1]{\iota_1(#1)}
\newcommand{\inr}[1]{\iota_2(#1)}
\newcommand{\matchwith}[5]{\matchop\, #1\, \{\inl {#2}\mapsto #3|\inr {#4}\mapsto #5\}}
\newcommand{\mw}[3]{\matchwith{#1}{x_1}{#2}{x_2}{#3}}
\newcommand{\mww}[4]{\matchwith{#2}{#1_1}{#3}{#1_2}{#4}}
\newcommand{\incr}{\mathop{\mathsf{incr}}}
\newcommand{\ifop}{\mathop{\mathsf{if}}}
\newcommand{\thenop}{\mathop{\mathsf{then}}}
\newcommand{\elseop}{\mathop{\mathsf{else}}}
\newcommand{\ifte}[4]{\ifop\,#1 = #2\,\thenop\,#3\,\elseop\,#4}
\newcommand{\limp}{\Rightarrow}
\newcommand{\X}{\mathbb{X}}
\newlength{\myl}
\newlength{\mysep}
\newcommand{\twoelt}[2]{
  \settowidth{\myl}{$#1$}
  \setlength{\mysep}{0.5\textwidth}
  \addtolength{\mysep}{-\myl}
  \addtolength{\mysep}{-15pt}
\[
  #1
  \hspace{\mysep}
  #2
\]
}
\newcommand{\CA}{\automath{\mathrm{CA}}}
\newcommand{\Elim}{\automath{\mathrm{Elim}}}
\DeclareSymbolFont{symbols2}{LS1}{stixfrak}{m}{n}
\DeclareMathSymbol{\llangle}{\mathopen}{symbols2}{"28}
\DeclareMathSymbol{\rrangle}{\mathclose}{symbols2}{"29}
\newcommand{\xle}[1]{\xrightarrow{\!#1\!}}
\newcommand{\eid}{e_{\mathtt{id}}}
\newcommand{\ecomp}[2]{#1 \mathop{;} #2}
\newcommand{\etrue}{e_{\scriptscriptstyle\top}}
\newcommand{\epair}[2]{\llangle #1, #2 \rrangle}
\newcommand{\efst}{e_{\mathtt{fst}}}
\newcommand{\esnd}{e_{\mathtt{snd}}}
\newcommand{\elambda}[1]{\lambda #1}
\newcommand{\eeval}{e_{\mathtt{eval}}}
\newcommand{\power}{\mathcal{P}}
\newcommand{\efimp}{\mathop{\supset}}
\newcommand{\elimmark}{E}
\newcommand{\intromark}{I}
\newcommand{\autorule}[1]{\relax\ifmmode{\scriptstyle(#1)}\else$(#1)$\fi}
\newcommand{\axrule }{\autorule{\textsc{Ax}  }}
\newcommand{\impirule}{\autorule{\imp_\intromark}}
\newcommand{\imperule}{\autorule{\imp_\elimmark }}
\newcommand{\fiurule}{\autorule{\forall^1_\intromark}}
\newcommand{\feurule}{\autorule{\forall^1_{\!\elimmark }}}
\newcommand{\fidrule}{\autorule{\forall^2_\intromark}}
\newcommand{\fedrule}{\autorule{\forall^2_{\!\elimmark }}}
\newcommand{\exiurule}{\autorule{\exists^1_\intromark}}
\newcommand{\exidrule}{\autorule{\exists^2_\intromark}}
\newcommand{\andirule }{\autorule{\land_\intromark }}
\newcommand{\andeurule}{\autorule{\land^1_\elimmark}}
\newcommand{\andedrule}{\autorule{\land^2_\elimmark}}
\newcommand{\oriurule}{\autorule{\lor^1_\intromark}}
\newcommand{\oridrule}{\autorule{\lor^2_\intromark}}
\newcommand{\orerule }{\autorule{\lor_\elimmark   }}
\newcommand{\recrule}{\autorule{\rec}}
\newcommand{\inj}{*}
\newcommand{\tvs}[1]{\tv{#1}_\s}
\newcommand{\tvrs}[1]{\tv{#1}_\rho^\s}
\newcommand{\tvS}[1]{\tv{#1}^\S}
\newcommand{\suc}{{\mathsf{s}}} 
\newcommand{\s}{{\mathfrak{s}}} 
\renewcommand{\S}{{\mathfrak{S}}}
\def\cf{{\em cf.}}
\def\ie{{\em i.e.}\xspace}
\def\etal{{\em et al.}\xspace}
\keywords{realizabiliy, nonstandard arithmetic, 
stateful computations, ultrafilters, glueing}
\begin{document}
\title{Stateful realizers for nonstandard analysis}

\thanks{The authors would like to thank Alexandre Miquel for suggesting 
several ideas at the root of this work and Valentin Blot and Mikhail Katz, 
as well as the anonymous reviewers of \cite{DM21}, 
for their accurate remarks and suggestions.
The first author was supported by FCT - Funda\c{c}\~ao para a Ci\^encia e 
a Tecnologia, and the research centers  CMAFcIO - Centro de Matem\'atica, 
Aplica\c{c}\~oes Fundamentais e Investiga\c{c}\~ao Operacional and  
CIMA - Centro de Investigação em Matemática e Aplicações -- 
under the projects UIDB/04561/2020, UIDP/04561/2020 and UIDP/04674/2020.
}

\author[B.~Dinis]{Bruno Dinis\lmcsorcid{0000-0003-2143-3289}}[a]
\author[É.~Miquey]{Étienne Miquey\lmcsorcid{0000-0002-5987-6547}}[b]
\address{Escola de Ciência e Tecnologia, Universidade de Évora}	
\email{bruno.dinis@uevora.pt}

\address{Aix-Marseille Université, CNRS, I2M, Marseille, France}	
\email{etienne.miquey@univ-amu.fr}  

\begin{abstract}
In this paper we propose a new approach to realizability interpretations for nonstandard arithmetic.
We deal with nonstandard analysis in the context of (semi) intuitionistic realizability, focusing on the Lightstone-Robinson construction of a model for nonstandard analysis through an ultrapower.
In particular, we consider an extension of the $\lambda$-calculus with a memory cell,
that contains an integer (the state), in order to indicate in which slice of the ultrapower 
$\M^{\N}$ the computation is being done. 
We pay attention to the nonstandard principles (and their computational content) obtainable in this setting. In particular, we give non-trivial realizers to Idealization and a non-standard version of the LLPO principle.
We then discuss how to quotient this product to mimic the Lightstone-Robinson construction.
\end{abstract}
\maketitle

\section{Introduction}
In this paper we propose a new approach to realizability interpretations for nonstandard arithmetic.
On the one hand, we deal with nonstandard analysis in the context of (semi) intuitionistic realizability.
On the other hand, we focus on Lightstone and Robinson's construction of a model for nonstandard analysis through an ultrapower \cite{LightstoneRobinson}. This paper is an extended version of \cite{DM21}. 
The main novelties here are in \Cref{s:ef}, where we establish a connection with evidenced frames~\cite{CohMiqTat21}, and in \Cref{s:LLPO}, where we give a realizer for a nonstandard version of the Lesser Limited Principle of Omniscience (LLPO). We also now have a better understanding of why performing a quotient leads to some counter-intuitive properties (\cf\, \Cref{s:some_name}).

Throughout the history of mathematics, infinitesimals were crucial for the intuitive development of mathematical knowledge by authors such as Archimedes, Stevin, Fermat, Leibniz, Euler and Cauchy, to name but a few  (see e.g.\ \cite{Katz(13), Blairetal(18),Bairetal(20)}).
 In particular, in Leibniz's Calculus one may recognize calculation rules -- sometimes called the \emph{Leibniz rules} \cite{Lutz(87),Callot(92),DinisBerg(19)} -- which correspond to heuristic intuitions for how the infinitesimals should operate under calculations: the sum and product of infinitesimals is infinitesimal, the product of a limited number (i.e. not infinitely large) with an infinitesimal is infinitesimal,...

In \cite{Robinson61,Robinson66} Robinson showed that, in the setting of model theory, it is possible to extend usual mathematical sets ($\mathbb{N}$, $\mathbb{R}$, etc.) witnessing the existence of new elements, the so-called \emph{nonstandard} individuals. In this way, it is possible 
to deal consistently with infinitesimal and
infinitely large numbers via ultraproducts and ultrapowers, in a way that is consistent with the Leibniz rules. Since the extended structures are nonstandard models of the original structures, this new setting was dubbed \emph{nonstandard analysis}. 

These constructions are meant to simplify doing mathematics: notions like limits or continuity can for instance be given a simpler form in nonstandard analysis.
Later in the 70s, Nelson developed a syntactical approach to nonstandard analysis, introducing in particular three key principles: Idealization, Standardization and Transfer~\cite{Nelson77}. 
The validity of these principles for constructive mathematics has been studied in different settings, in particular, following some pioneer work by Moerdijk, Palmgren and Avigad \cite{Moerdijk(95),MoerPalm(97),Avigad(01)} in nonstandard intuitionistic arithmetic, several recent works, inspired by Nelson's approach, lead to interpretations of nonstandard theories in intuitionistic realizability models~\cite{BerBriSaf12,DinFer16,HadVDB17,DinGas18}.

The very first ideas of \emph{realizability} are to be found in the 
Brouwer-Heyting-Kolmogorov interpretation~\cite{Heyting34,Kolmogoroff1932},
which identifies evidences and computing proofs (the realizers).
Realizability was designed by Kleene to interpret the computational 
content of the proofs of Heyting arithmetic~\cite{Kleene45},
and was later extended to more expressive frameworks \cite{Godel58, Kreisel51, Krivine09}.
While the Curry-Howard isomorphism focuses on a syntactical correspondence
between proofs and programs, 
realizability rather deals with the (operational) semantics of programs: 
a \emph{realizer} of a formula $A$ is a program which \emph{computes} adequately
with the specification that $A$ provides.
As such, realizability constitutes a technique to develop new models of 
a wide class of theories (from Heyting arithmetic to Zermelo-Fraenkel set theory),
whose algebraic structures has been studied in~\cite{VanOosten08,Krivine16,Miquel20}.

With the development of his classical realizability, 
Krivine evidences the fact that extending the $\lambda$-calculus 
with new programming instructions may result in getting new reasoning principles:
\texttt{call/cc} to get classical logic~\cite{Griffin90,Krivine09}, 
\texttt{quote} for dependent choice \cite{Krivine01}, etc.
In this paper, we follow this path to show how the addition of a monotonic
reference allows us to get a realizability interpretation for 
nonstandard analysis.
The realizability interpretation proposed here can be understood
as a computational interpretation of the ultraproduct construction in \cite{LightstoneRobinson}, 
where the value of the reference indicates the slice of the product in which
the computation takes place.
In particular, we obtain a realizer for the Idealization principle 
whose computational behaviour increases the reference in the manner 
of a diagonalization process. Our setting turns out to be semi-intuitionistic since it allows to deduce a rather non-trivial realizer for a nonstandard version of the nonconstructive Lesser Limited Principle of Omniscience \cite{BridgesRichman}.

\paragraph*{Outline}
We start this paper by recalling the main ideas of the  
ultraproduct construction (\Cref{s:robinson}) 
and the definition of a standard realizability interpretation 
for second-order Heyting arithmetic (\Cref{s:realizability}).
We then introduce stateful computations and our notion of 
realizability with slices in \Cref{s:slices}. We also show that our stateful interpretation induces an evidenced frame thus providing a connection with the usual algebraic tools to deal with realizability interpretations. 
As shown in \Cref{s:nonstandard} and \Cref{s:LLPO}, this interpretation provides us with
realizers for several nonstandard reasoning principles.
We discuss the possibility of taking a quotient
for this interpretation in \Cref{s:some_name}.
We conclude the paper in \Cref{s:conclusion} with a comparison to related work
and with some questions left for future work.

\section{The ultrapower construction}
\label{s:robinson} 
The main contribution of this paper consists in defining a realizability interpretation
to give a computational content to the ultrapower construction of Robinson and Lightstone in~\cite{LightstoneRobinson}.
We shall begin by briefly explaining how this construction works in the realm of model theory.

{Let us start by recalling some definitions.
\begin{defi}
 Let $I$ be a set. We say that $\F\subset\P(I)$ is a \emph{filter} over $I$ if:
 
 \begin{enumerate}[$(i)$]
  \item $\F$ is non empty and $\emptyset\notin\F$\hfill(\emph{non triviality})
  \item for all $F_1,F_2\in\F$,  $F_1\cap F_2\in\F$\hfill(\emph{closure under intersection})
  \item for any $F,G\in\P(I)$, if $F\in\F$ and $F\subset G$, then $G\in\F$\hfill(\emph{upwards closure})
 \end{enumerate}
 An \emph{ultrafilter} is a filter $\U$ such that for any $F\in\P(I)$, 
 either $F$ or its complement $\overline{F}$ are in $\U$.
\end{defi}
}
For instance, the set of cofinite subsets of $\N$ defines 
the so-called \emph{Fréchet filter}, which is not an ultrafilter since it contains
neither the set of even natural numbers nor the set of odd natural numbers.
Nonetheless, it is well-known that any filter $\F$ over an infinite set $I$ 
is contained in an ultrafilter $\U$ over $I$: this is the so-called \emph{ultrafilter principle}.
An ultrafilter that contains the Fréchet filter is called a \emph{free ultrafilter}.
The existence of free ultrafilters was proved by Tarski in 1930 \cite{Tarski1930} and
is in fact a consequence of the Axiom of Choice. 

{\begin{defi}
Given  two sets $V$ and $I$ and an ultrafilter $\U$ over $I$, 
we can define an equivalence relation $\cong_\U$ over $V^I$ by  $u \cong_\U\! v  \defeq \{i\in I:u_i=v_i\}\in\U$.
We write  $V^{I}/\U$ for the set obtained by performing a quotient on the set $V^I$ by this equivalence relation,
which is called an \emph{ultrapower}.
\end{defi}
}

Consider a theory $\T$ (say ZFC) and its language $\mathcal{L}$,
for which we assume the existence of a model $\M$.
The goal is to build a nonstandard model $\M^*$ 
of the theory $\T$ that validates new principles.
Let us denote by $\V$ the set which interprets individuals in $\M$,
and let us fix a free ultrafilter $\U$ over $\N$. 
Roughly speaking, the new model $\M^*$ is defined as the ultrapower $\M^\N/\U$.
Individuals are interpreted by functions in $\V^\N$
while the validity of a relation $R(x_1,...,x_k)$ 
(where the $x_i$ are interpreted by $f_i$, for $i\in\{1,...,k\}$) is defined by
\[ \M^* \vDash R(f_1,...,f_k) \qquad\text{ iff }\qquad \{n\in\N:\M\vDash R(f_1(n),...,f_k(n))\} \in \U. \]
We can now extend the language with a new predicate $\st{x}$ to express that $x$ is \emph{standard}. 
Standard elements are defined as the ones that, with respect to $\cong_\U$, 
are equivalent to constant functions, i.e. 
$\M^* \vDash \st f$ if and only if there exists $p\in\N$ such that $\{n\in\N: f(n)=p\} \in \U$.
{\[ \M^* \vDash \st f \qquad\text{ iff }\qquad \exists p\in\N.\{n\in\N: f(n)=p\} \in \U. \]}
Formulas that involve this new predicate are called \emph{external},
while formulas of the original language $\mathcal{L}$ are called \emph{internal}.

Lightstone and Robinson's construction relies on the well-known \Los' theorem~\cite{Los60}
which states that if $\varphi$ is an internal formula (with parameters in $\V^\N$),
then $\M^*\vDash \varphi$ if and only $\{n\in\N:\M\vDash \overline{\varphi}^n\}\in\U$,
where $\overline{\varphi}^n$ refers to the formula $\varphi$ whose parameters have been 
replaced by their values in $n$. 
This construction indeed defines a model of $\T$ which satisfies other relevant properties,
namely Transfer, Idealization and Standardization.
As a consequence of \Los' theorem, to see that an internal formula $\varphi(x)$ 
holds for all elements, it is enough to see that it holds for all standard elements:
this is the \emph{Transfer} principle.
In our setting, \emph{Idealization} amounts to a diagonalization process: it is for instance easy to see
that if one defines $\delta:n\mapsto n$ (where we, with abuse of notation, write $n$ for both the natural number $n$ 
and its interpretation in $\V$), then $\M^*\vDash \forall x.(\stto{x} x<\delta)$.
Finally, \emph{Standardization} is a sort of ``comprehension scheme'' which
states that we can specify subsets of standard sets by giving a membership criterion 
for standard elements (by means of an internal formula).

\section{Realizability in a nutshell}
\label{s:realizability}
\subsection{Heyting second-order arithmetic}
\label{s:int_real}
We start by introducing the terms and formulas of Heyting second-order arithmetic (HA2), for which
we follow Miquel's presentation~\cite{Miquel11}. 
Second-order formulas are build on top of first-order arithmetical expressions, 
by means of logical connectives, first- and second-order quantifications 
and primitive predicates.
We use upper case letters for second-order variables and lower case letters for first-order ones.
We use a primitive predicate $\nat{e}$
to denote that $e$ is a natural number ($0$ then has type $\nat{0}$ and the term $\suc\,t$ has type
$\nat{S(e)}$ provided that $t$ has type $\nat{e}$). 
We consider the usual $\lambda$-calculus terms extended with pairs, 
projections (written $\pi_i$),
natural numbers and a recursion operator:
\[\begin{array}{lr@{~~}c@{~~}l}
\text{\bf 1\textup{st}-order expressions} & e &::= & x \mid 0 \mid S(e) \mid f(e_1,\dots,e_n)\\
\text{\bf Formulas} & A,B   &::= & \nat{e} \mid X(e_1,\dots,e_n)\mid A\imp B \mid A\land B\\ 
                    &       &  & \mid  \forall x.A \mid \exists x.A \mid \forall X.A \mid \exists X.A  \\
\text{\bf Terms   } & t,u   &::= & x \mid 0 \mid \suc \mid \rec \mid \lambda x.t\mid t\,u \mid (t,u) \mid \pi_1 (t) \mid \pi_2 (t) \\
\end{array}\]
where $f:\N^n \to \N$ is any arithmetical function. 
We write $\Lambda$ for the set of all closed $\lambda$-terms.

Note that we did not include disjunctions as formulae. 
In fact, for most of our purposes, disjunctions are not needed. 
The only exception is \Cref{s:LLPO} where we interpret a nonstandard version of 
the Lesser Limited Principle of Omniscience. 
It is however possible to include disjunctions all the way but, 
since this would mostly only add some unnecessary technicalities to all our proofs,
we delay the introduction of disjunction until \Cref{s:LLPO}.

To simplify the use of existential quantifiers, as in~\cite{Miquel11},
we introduce a congruence relation on formulas defined by the following rules
\begin{equation}
(\exists x. A) \imp B \cong \forall x.( A \imp B)
\qquad\qquad\qquad
(\exists X. A) \imp B \cong \forall X.( A \imp B)
\label{eq:congruences}
\end{equation}
This congruence relation allows us to, given any typed term, to (re)type it with any 
formula congruent to the original one. 
In particular, this means that we do not need the elimination rules for the existential 
quantifiers, which results in a simplified type system. 
This type system, which is given in \autoref{fig:HA2_types},
corresponds to the usual rules of natural deduction.
The reader may observe that we do not give computational content to quantifications.

In the sequel, we make use of the following usual abbreviations:
\[
\begin{array}{rcl}
\suc^{n+1} 0   &\defeq& \suc\,(\suc^n 0)            \\
\overline{n}   &\defeq& \suc^n 0                    \\
\end{array}
\quad\vrule \quad 
\begin{array}{rcl}
\top     &\defeq& \exists X.X                       \\
\bot     &\defeq& \forall X.X                       \\
\neg A   &\defeq& A \to \bot \\
\end{array}
\quad\vrule\quad 
\begin{array}{rcl}
e = e'  &\defeq& \forall Z. (Z(e) \imp Z(e'))      \\
\far x.A &\defeq& \forall x.(\nat{x} \imp A)        \\
\exr x.A &\defeq& \exists x.(\nat{x} \land A)       \\
\end{array}
\]
It is well-known that the above definition of equality 
(often called \emph{Leibniz law}) enjoys the usual expected properties 
(reflexivity, symmetry, transitivity) and allows to perform substitution of equal terms.
The quantifications $\far x.A$ and $\exr x.A$ are often said to be \emph{relativized}
to natural numbers.

\renewcommand{\TYP}[3]{#1\vdash_{\mathrm{NJ}} #2:#3}
\renewcommand{\TYP}[3]{#1\vdash #2:#3}
\newcommand{\dblinefill}{
\leavevmode \cleaders \hbox to.5em{\hss=\hss}\hfill\kern0pt
}
\begin{figure}[t]
    {{
    \hrulefill
    \vspace{.8em}
    \begin{flushleft}
      \textbf{Natural numbers}   
    \end{flushleft}
 \vspace{-3em}

    \begin{mathpar}
    
    \hspace{2cm}
    
    \infer[\autorule{0}]{\TYP{\Gamma}{0}{\nat{0}}}{}
    
    \infer[\autorule{S}]{\TYP{\Gamma}{\suc}{\far x.\nat{S(x)}}}{}
    
    \infer[\recrule]{\Gamma\vdash \rec:\forall Z. Z(0) \imp (\far y.(Z(y)\imp Z(S(y)))) \imp \far x.Z(x)}{}
    \end{mathpar}

    \hrulefill
    \vspace{0.3em}

    \begin{flushleft}
    \textbf{Logical rules}
    \end{flushleft}
     \vspace{-3em}
    
    \begin{mathpar}
    
      \hspace{0.5cm}
      
    \infer[\axrule]{\TYP{\Gamma}{x}{A}}{(x:A)\in\Gamma}
    
    \infer[\imperule]{\TYP{\Gamma}{t\,u}{B}}{
          \TYP{\Gamma}{t}{A\to B} &\quad \TYP{\Gamma}{u}{A}
        }
        
    \infer[\impirule]{\TYP{\Gamma}{\Lam{x}{t}}{A\to B}}{
         \TYP{\Gamma,x:A}{t}{B}
    }

    \infer[\andirule] {\TYP{\Gamma}{(t,u)   }{A \land B}}{\TYP{\Gamma}{t}{A} & \TYP{\Gamma}{u}{B}} 
    
    \infer[\andeurule]{\TYP{\Gamma}{\pi_1(t)}{ A        }}{\TYP{\Gamma}{t}{A \land B}}
    
    \infer[\andedrule]{\TYP{\Gamma}{\pi_2(t)}{ B        }}{\TYP{\Gamma}{t}{A \land B}}

    \infer[\exiurule]       {\TYP{\Gamma}{ t}{\exists x.A           }}{\TYP{\Gamma}{t}{A[x:=n]}    }
    
    \infer[\feurule]        {\TYP{\Gamma}{ t}{ A[x:=n]              }}{\TYP{\Gamma}{t}{\forall x.A}}
    
    \infer[\fiurule]        {\TYP{\Gamma}{ t}{\forall x.A           }}{\TYP{\Gamma}{t}{A} & x\notin \FV(\Gamma)}
    
    \infer[\exidrule]       {\TYP{\Gamma}{ t}{\exists X.A           }}{\TYP{\Gamma}{t}{A[X(x_1,\dots,x_n):=B]}}
    
    \infer[\fedrule]        {\TYP{\Gamma}{ t}{A[X(x_1,\dots,x_n):=B]}}{\TYP{\Gamma}{t}{\forall X.A}}
    
    \infer[\fidrule]        {\TYP{\Gamma}{ t}{\forall X.A           }}{\TYP{\Gamma}{t}{ A  }& X\notin \FV(\Gamma)}
    
    \infer[\autorule{\cong}]{\TYP{\Gamma}{ t}{A                     }}{\TYP{\Gamma}{t}{ A' }& A\cong A'}
    
    \end{mathpar}
    \hrulefill
    }}
\caption{Type system}     \label{fig:HA2_types}
\end{figure}
\renewcommand{\red}{\rightarrowtail}
\renewcommand{\red}{\rightarrow_\beta}
\newcommand{\stepb}[2]{#1 \triangleright_\beta #2}
\newcommand{\axinf}[2]{\overline{\stepb{#1}{#2}}}
\newcommand{\unaryinf}[4]{\infer{\stepb{#1}{#2}}{\stepb{#3}{#4}}}
\newcommand{\binaryinf}[6]{\infer{\stepb{#1}{#2}}{\stepb{#3}{#4} & \stepb{#5}{#6}}}

The one-step (weak) reduction over terms is defined by the following rules:

\begin{mathpar}
  
\axinf{(\lambda x.t)u}{t[u/x]}

\axinf{\rec\,u_0\,u_1\,0}{u_0}

\axinf{\rec\,u_0\,u_1\,(\suc\,t)}{u_1\,t\,(\rec\,u_0\,u_1\,t)}

\axinf{\pi_1(t,u)}{t}

\axinf{\pi_2(t,u)}{u}

\end{mathpar}
We write $\red$ for the congruent reflexive-transitive closure of $\stepb{}{}$. 
The reduction $\red$ is known to be confluent, type-preserving and normalizing on typed terms~\cite{Barendregt93}.

\subsection{Realizability interpretation of HA2}\label{s:real}
 In this subsection we define the realizability interpretation of the type system defined in \autoref{fig:HA2_types}, 
 in which formulas are interpreted as \emph{saturated sets of terms}, i.e.\ as sets of closed terms 
 $S\subseteq \Lambda$ such that ${t}\red{t'}$ and $t'\in S$ imply that $t\in S$. 
 We write {\sat} to denote the set of all saturated sets and, given a formula $A$, we
 call \emph{truth value} its realizability interpretation.

 \renewcommand{\gets}{\mapsto}
 \begin{defi}[Valuation]
   A \emph{valuation} is a function~$\rho$ that associates a
  natural number $\rho(x)$ to every first-order variable~$x$ and
  a \emph{truth value function} $\rho(X)$, i.e.\ a function in $\N^k\to\sat$ to every
  second-order variable~$X$ of arity~$k$.
  \begin{enumerate}
  \item Given a valuation~$\rho$, a first-order variable~$x$ and a
    natural number $n$, we denote by $\rho,x\gets n$ the
    valuation defined by $(\rho,x\gets n)~\defeq~\rho_{|\dom(\rho)\setminus\{x\}}\cup\{x\gets n\}\,$.
  \item Given a valuation~$\rho$, a second-order variable~$X$ of
    arity~$k$ and a truth value function $F:\N^k\to\sat$, the valuation defined by $(\rho,X\gets F)~\defeq~
    \rho_{|\dom(\rho)\setminus\{X\}}\cup\{X\gets F\}\,$
    will be denoted by $\rho,X\gets F$.
  \end{enumerate}
  We say that a valuation $\rho$ is \emph{closing} the formula $A$ if $\FV(A)\subseteq \dom(\rho)$.
  
\end{defi}

 \begin{defi}[Realizability interpretation]\label{def:realizability}
 We interpret closed arithmetical expressions $e$ in the standard model of first-order Peano arithmetic $\N$. Given a valuation $\rho$ and a first-order expression $e$ (whose variables are in the domain of $\rho$) we denote its interpretation by $\foint{e}_{\rho}$.
 The interpretation of a formula $A$ together with a valuation $\rho$ closing $A$
 is the set $\tvr{A}$ defined inductively according to the following clauses:
\[
\begin{array}{rcl}
 \tvr{\nat{e}         }& \defeq & \{t \in\Lambda : t \red \suc^n 0 \text{, where } n = \foint{e}_{\rho}\} \\[2mm]
 \tvr{X(e_1,\dots,e_n)}& \defeq & \rho(X)(\foint{e_1}_{\rho},\dots,\foint{e_n}_{\rho}) \\[2mm] 
 \tvr{A\imp B         }& \defeq & \{t\in\Lambda : \forall u\in\tvr{A}.(t\,u \in\tvr{B})\} \\[2mm] 
 \tvr{A_1\land A_2    }& \defeq & \{t\in\Lambda : \pi_1(t)\in\tvr{A_1} \land \pi_2(t)\in\tvr{A_2}\}  \\[2mm]

 \tvr{\forall x.A     }& \defeq & \bigcap_{n\in\N}\tv{A}_{\rho,x\gets n} 
 \qquad\qquad  \\[2mm] 
 \tvr{\exists x.A     }& \defeq & \bigcup_{n\in\N}\tv{A}_{\rho,x\gets n} \\[2mm]
 \tvr{\forall X.A     }&\defeq  &\bigcap_{F:\N^k\to\sat}\tv{A}_{\rho,X\gets F}\\[2mm]
 \tvr{\exists X.A     }&\defeq & \bigcup_{F:\N^k\to\sat}\tv{A}_{\rho,X\gets F}
 \end{array}
\]
\end{defi}
Observe that in the previous definition, the universal quantifications
cannot be seen as generalized conjunctions. 
Indeed, the conjunction is given computational content through pairs, while the universal quantifications are defined as intersections of truth values.

It is easy to see that for any formula $A$ and any valuation $\rho$ 
closing $A$, one has $\tvr{A}\in\sat$.
As it turns out, the congruences defined by \autoref{eq:congruences} are sound w.r.t.~the interpretation.
\begin{propC}[\cite{Miquel11}]\label{r:cong}
 If $A$ and $A'$ are two formulas of HA2 such that $A \cong A'$, then for all valuations $\rho$ 
closing both $A$ and $A'$ we have $\tvr{A} = \tvr{A'}$.
\end{propC}
\begin{proof}
 By induction on $A\cong A'$. Congruence easily goes through by induction, 
 we only prove the first-order case (the second-order case is analogous):
\[\begin{array}{rl}
\tvr{(\exists x. A) \imp B}
& = \{t\in\Lambda : \forall u \in \tvr{\exists x. A}, t\,u \in \tvr{B}\} \\
& = \{t\in\Lambda : \forall u \in \bigcup_{n\in\N}\tv{A}_{\rho,x\gets n}, t\,u \in \tvr{B}\} \\
& = \{t\in\Lambda : \forall u, (\exists n,u\in\tv{A}_{\rho,x\gets n}) \imp t\,u \in \tvr{B}\} \\
& = \{t\in\Lambda : \forall u, \forall n, (u\in\tv{A}_{\rho,x\gets n} \imp t\,u \in \tvr{B})\} \\
& = \bigcap_{n\in\N}\{t : \forall u, u \in\tv{A}_{\rho,x\gets n} \imp t\,u \in\tvr{B}\} \\
& = \tvr{\forall x.( A \imp B)}
\end{array}\vspace{-2em}\] 
\end{proof}

In order to show that the realizability interpretation is sound with respect to
the type system we need the following preliminary notions.

\begin{defi}[Substitution]
  A \emph{substitution} is a finite function~$\sigma$ from
  $\lambda$-variables to closed $\lambda$-terms.
  Given a substitution~$\sigma$, a $\lambda$-variable~$x$ and a closed
  $\lambda$-term~$u$, we denote by $(\sigma,x:=u)$ the substitution
  defined by
  $(\sigma,x:=u)\defeq\sigma_{|\dom(\sigma)\setminus\{x\}}\cup\{x:=u\}$.
\end{defi}

\begin{defi}\label{def:subst}
Given a context $\Gamma$ and a valuation $\rho$ closing the formulas in $\Gamma$, 
we say that a substitution~$\sigma$ \emph{realizes}~$\rho(\Gamma)$ and 
write $\sigma \real \rho(\Gamma)$ if $\dom(\Gamma)\subseteq \dom(\sigma)$ and
$\sigma(x)\in \tvr{A}$ ~for every declaration $(x:A)\in\Gamma$.
\end{defi}

\begin{defi}\label{def:adequate}
 A typing judgement $\TYP{\Gamma}{t}{A}$ is \emph{adequate}
 if for all valuations $\rho$ closing $A$ and $\Gamma$
 and for all
  substitutions $\sigma\real\rho(\Gamma)$ we have
  $\sigma(t)\in \tvr{A}$. 
More generally, we say that an inference rule
  $\infer{J_0}{J_1 &\cdots& J_n}$
  is adequate  if the adequacy of all typing
  judgements $J_1,\ldots,J_n$ implies the adequacy of the typing
  judgement $J_0$.
\end{defi}

\begin{thm}[Adequacy \cite{Miquel11}]\label{r:Adequacy}
  The typing rules of \autoref{fig:HA2_types} are adequate.
\end{thm}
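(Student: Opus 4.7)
The plan is to proceed by induction on the derivation of $\TYP{\Gamma}{t}{A}$, verifying rule by rule that each inference in \autoref{fig:HA2_types} is adequate in the sense of \Cref{def:adequate}. Throughout, I fix an arbitrary valuation $\rho$ closing $A$ and $\Gamma$, and an arbitrary substitution $\sigma \real \rho(\Gamma)$; the goal in every case is to show $\sigma(t) \in \tvr{A}$, using that each $\tvr{B}$ lies in $\sat$, hence is closed under anti-reduction.

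First I would dispatch the purely logical rules. The axiom rule $\axrule$ is immediate from $\sigma \real \rho(\Gamma)$. For the application rule $\imperule$, the induction hypothesis gives $\sigma(t) \in \tvr{A \imp B}$ and $\sigma(u) \in \tvr{A}$, and the definition of $\tvr{A \imp B}$ yields $\sigma(t\,u) \in \tvr{B}$ directly. The abstraction rule $\impirule$ is the first place where saturation matters: for any $u \in \tvr{A}$, applying the IH to the extended substitution $\sigma,x:=u$ gives $\sigma(t)[u/x] \in \tvr{B}$; since $(\lambda x.\sigma(t))\,u \red \sigma(t)[u/x]$ and $\tvr{B} \in \sat$, we obtain $\sigma(\lambda x.t)\,u \in \tvr{B}$, whence $\sigma(\lambda x.t) \in \tvr{A\imp B}$. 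The pair/projection rules $\andirule, \andeurule, \andedrule$ are handled analogously, with saturation used to pull back the $\pi_i(t,u) \red t, u$ reductions. The quantifier rules $\fiurule$, $\feurule$, $\exiurule$ and their second-order counterparts all follow transparently from the definitions as intersections/unions indexed over $n \in \N$ or $F : \N^k \to \sat$, since terms carry no computational content for quantifiers. Finally, the congruence rule $\autorule{\cong}$ is immediate from \Cref{r:cong}.

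The arithmetic rules are where the main work lies. The rule $\autorule{0}$ holds because $0 \red \suc^0\,0$. For $\autorule{S}$, given any $n\in\N$ and $u \in \tvr{\nat{x}}_{\rho,x\gets n}$, we have $u \red \suc^n\,0$, hence $\suc\,u \red \suc^{n+1}\,0$, placing $\suc\,u$ in $\tvr{\nat{S(x)}}_{\rho,x\gets n}$ by saturation. The main obstacle is the recursor rule $\recrule$: one must show $\rec \in \tvr{\forall Z. Z(0) \imp (\far y. (Z(y) \imp Z(S(y)))) \imp \far x.Z(x)}$. Fix $F : \N \to \sat$, $u_0 \in F(0)$, and $u_1$ realizing the step premise, so that for every $n$ and every $v \in \tvr{\nat{y}}_{\rho,y\gets n}$ and every $w \in F(n)$ one has $u_1\,v\,w \in F(n+1)$. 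Given $v \in \tvr{\nat{x}}_{\rho,x\gets n}$, i.e.\ $v \red \suc^n\,0$, I would show by an inner induction on $n$ that $\rec\,u_0\,u_1\,v \in F(n)$: the base case uses $\rec\,u_0\,u_1\,v \red \rec\,u_0\,u_1\,0 \red u_0 \in F(0)$ plus saturation, and the step case uses the reduction $\rec\,u_0\,u_1\,(\suc\,t) \red u_1\,t\,(\rec\,u_0\,u_1\,t)$ applied with $t = \suc^{n-1}\,0$, combined with the IH and saturation of $F(n)$.

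The only delicate point beyond routine bookkeeping is thus the recursor case, which requires this auxiliary induction on $n$ and careful use of saturation to ``pull back'' each reduction step into the relevant truth value; all remaining cases are one-line consequences of the interpretation clauses in \Cref{def:realizability} together with the fact that each $\tvr{B}$ is saturated.
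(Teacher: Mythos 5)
Your overall strategy is the same as the paper's: case analysis over the typing rules, with saturation of truth values providing closure under anti-reduction. Your detailed treatment of the recursor rule — an inner induction on $n$, pulling each $\rec$-reduction step back into $F(n)$ via saturation — is correct and is exactly the argument the paper leaves implicit.

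However, you have glossed over the one case the paper explicitly singles out. You claim that the quantifier rules ``follow transparently from the definitions,'' but the second-order elimination rule $\fedrule$ (and dually $\exidrule$) is precisely where care is needed. To derive $\sigma(t) \in \tvr{A[X(x_1,\dots,x_k):=B]}$ from $\sigma(t) \in \tvr{\forall X.A} = \bigcap_{F:\N^k\to\sat}\tv{A}_{\rho,X\gets F}$ you must (i) instantiate the intersection at the specific function $F_B : (n_1,\dots,n_k) \mapsto \tv{B}_{\rho,x_1\gets n_1,\dots,x_k\gets n_k}$, and (ii) invoke a substitution lemma identifying $\tv{A}_{\rho,X\gets F_B}$ with $\tvr{A[X(x_1,\dots,x_k):=B]}$. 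Step (i) is legitimate only if $F_B$ actually lands in $\N^k\to\sat$, i.e.\ only because the truth value of \emph{any} formula under \emph{any} valuation is a saturated set. This is not routine: it is the design constraint that makes second-order instantiation by arbitrary formulas sound, and it is what the paper devotes its entire proof sketch to pointing out. Your write-up uses the saturation of $\tvr{B}$ only to anti-reduce, and misses its second, more structural role as a witness that formula interpretations are admissible predicate instantiations; you should make this instantiation and the accompanying saturation check explicit.
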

\begin{proof}
The proof is standard, by case analysis.
We draw the reader's attention to the particular case of the second-order
elimination rule
\[ \infer[\fedrule]{\TYP{\Gamma}{ t}{A[X(x_1,\dots,x_n):=B]}}{\TYP{\Gamma}{t}{\forall X.A}} \]
which relies on the fact that the truth value of any formula (here $B$) is a saturated set.
To prove that this rule is indeed adequate, 
let us consider a valuation $\rho$ closing $\forall X.A$, $B$ and $\Gamma$ and a substitution
$\sigma\real\rho(\Gamma)$ such that $\sigma(t)\in\tvr{\forall X.A}$.
By definition, this implies that for any function $F:\N^k \to \sat$ (where $k$ is the arity of $X$),
we have $\sigma(t)\in\tv{A}_{\rho,X\mapsto F}$.
To conclude, it suffices to see that the function $n_1,\dots,n_n\mapsto \tv{B}_{\rho,x_1\mapsto n_1,\ldots,x_k\mapsto n_k}$ is indeed in $\N^k \to \sat$. \qedhere

\end{proof}

\begin{cor}
 If $\Gamma \vdash t:A$ is derivable, then it is adequate.
\end{cor}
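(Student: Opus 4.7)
The plan is to deduce this corollary directly from the Adequacy theorem (\Cref{r:Adequacy}) by a straightforward induction on the structure of the derivation of $\Gamma\vdash t:A$.

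First, I would recall the setup: the Adequacy theorem states that each individual typing rule of \autoref{fig:HA2_types} is adequate in the sense of \Cref{def:adequate}, meaning that adequacy of the premise judgements propagates to adequacy of the conclusion judgement. A derivation of $\Gamma\vdash t:A$ is, by definition, a finite tree whose nodes are instances of these typing rules and whose leaves are axiom rules (in particular $\axrule$ for variables, or $\autorule{0}$, $\autorule{S}$, $\recrule$ for the base constants).

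Then I would proceed by induction on the height of the derivation. For the base cases, the axiom rule $\axrule$ is adequate with no premises: given any valuation $\rho$ closing $A$ and $\Gamma$ and any substitution $\sigma\real\rho(\Gamma)$, we have $(x:A)\in\Gamma$ and so $\sigma(x)\in\tvr{A}$ by the very definition of $\sigma\real\rho(\Gamma)$. Similarly $\autorule{0}$, $\autorule{S}$, $\recrule$ are adequate as they have no premises, and the Adequacy theorem guarantees their adequacy. For the inductive step, suppose the last rule applied has premises $J_1,\ldots,J_n$ and conclusion $\Gamma\vdash t:A$; by induction hypothesis each $J_i$ is adequate, and by \Cref{r:Adequacy} the rule itself is adequate, so the conclusion $\Gamma\vdash t:A$ is adequate too.

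There is essentially no obstacle: the work has been done in the Adequacy theorem, and the corollary is just the observation that adequacy of each single rule, combined with a trivial induction on derivations, yields adequacy of every derivable judgement.
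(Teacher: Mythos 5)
Your argument is correct and is exactly the one the paper intends (the paper leaves the proof of this corollary implicit, as it follows by the obvious induction on derivations from \Cref{r:Adequacy}). The only tiny redundancy is that you re-derive the adequacy of the $\axrule$-rule by hand, whereas that case is already covered by the Adequacy theorem; it suffices to note that for a rule with no premises, adequacy of the rule (in the sense of \Cref{def:adequate}) just means adequacy of its conclusion, so all base cases are dispatched uniformly by \Cref{r:Adequacy}.
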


The adequacy theorem is the key result when defining realizability 
interpretations in that fundamental properties stem from it. For example, we have the following corollary.
\begin{cor}[Consistency]\label{r:consistency} 
 There is no proof term $t$ such that $\vdash t:\bot$.
\end{cor}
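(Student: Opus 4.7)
The plan is to derive a contradiction from the Adequacy Corollary together with the definition of $\bot$ as $\forall X.X$. Suppose, for contradiction, that $\vdash t : \bot$ for some closed term $t$. By the preceding corollary, this judgement is adequate, so for any valuation $\rho$ (vacuously closing $\bot$ since it has no free variables) and for the empty substitution $\sigma$ (which vacuously realizes $\rho(\emptyset)$), we obtain $t = \sigma(t) \in \tvr{\bot}$.

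Next, I unfold the definition of $\bot$. Since $\bot \defeq \forall X.X$ where $X$ is taken at arity $0$, the interpretation clause for second-order universal quantification gives
\[
  \tvr{\bot} \;=\; \bigcap_{F \in \sat} F.
\]
So to obtain a contradiction it suffices to exhibit a single saturated set $F \in \sat$ such that $t \notin F$. The cleanest choice is $F = \emptyset$: the saturation condition ``$t \red t'$ and $t' \in S$ imply $t \in S$'' is vacuously satisfied by the empty set, so $\emptyset \in \sat$. Hence $\tvr{\bot} \subseteq \emptyset$, which contradicts $t \in \tvr{\bot}$.

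No step is really an obstacle here; the only thing worth checking carefully is that the empty set indeed belongs to $\sat$ (a one-line vacuous verification) and that the second-order clause of \Cref{def:realizability} applies with $k=0$ so that the truth-value functions $F : \N^0 \to \sat$ reduce to elements of $\sat$. Everything else is a direct invocation of the adequacy theorem in its simplest (empty context, empty substitution) instance.
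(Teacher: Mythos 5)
Your proof is correct, and it takes a slightly different (and arguably more economical) route than the paper's. Both you and the paper reduce, via adequacy, to showing that $\tv{\bot} = \bigcap_{S\in\sat} S = \emptyset$. You observe that $\emptyset$ itself is (vacuously) saturated, so the intersection over all of $\sat$ is immediately empty; this is the shortest possible witness. The paper instead exhibits two disjoint \emph{nonempty} saturated sets, $S_0 = \{t\in\Lambda : t\red 0\}$ and $S_1 = \{t\in\Lambda : t\red \suc\,0\}$, and concludes from $S_0 \cap S_1 = \emptyset$. The paper's choice is mildly more robust in spirit: it shows that the argument does not depend on the degenerate fact that $\emptyset\in\sat$, and it would survive in variants of the framework where truth values are required to be inhabited. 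In the present definition of $\sat$, however, your argument is fully valid, and you correctly flag the two points that need checking (that $\emptyset\in\sat$ and that at arity $0$ the truth-value functions $\N^0\to\sat$ are just elements of $\sat$).
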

\begin{proof}
 The proof is by \emph{reductio ad absurdum}. If $\vdash t:\bot$, then by \autoref{r:Adequacy} one has $t\in\tv{\bot}=\tv{\forall X.X}=\bigcap_{S\in\sat} S = \emptyset$. To see that this intersection is indeed empty, one can take for example $S_0 = \{t\in\Lambda : t\red 0\} \in \sat$ and $S_1 = \{t\in\Lambda : t\red\suc 0\} \in \sat$, then clearly $S_0 \cap S_1 = \emptyset$.
\end{proof}

We would like to point out that the proof of adequacy is very flexible. Indeed, if one wants to add a new instruction to the language of terms via its typing rule, it is enough to check that this typing rule is adequate while the remainder of the proof is exactly the same.

\subsection{Introducing value restriction}
The realizability interpretation of \autoref{def:realizability} is also flexible 
regarding the set of formulas that are interpreted. 
We illustrate this point here by introducing a new construction extending formulas which
we shall use in the sequel to enforce value restriction in presence of stateful computations.
This will allow us to get a better handle on the operational semantics,
which will be crucial  afterwards since stateful computations break the confluence 
of the reduction system (see \autoref{ex:confluence}). 
Such a technique is reminiscent from ML value restriction that was introduced
to circumvent the incompatibility of Curry-style polymorphism and side-effects 
(see for instance~\cite{Zeilberger09}).

We start by defining the subset $\V\subseteq \Lambda$ of \emph{values} by the following grammar: 
\[
\text{\bf Values  } \qquad\qquad V  \quad::= \quad 0 \mid \suc\, V \mid \lambda x.t\mid (V_1,V_2) \qquad
\]

Observe that variables are not values, otherwise the system would not be stable by substitution.
In the remainder of this paper, we adopt the convention that $\lambda$-terms
are denoted by lowercase letters $t,u,...$ while 
uppercase letters $V,W,...$ refer to values.

Distinguishing the set of values allows for instance to
restrict the $\beta$-reduction rule to applications of functions to values:
\newcommand{\redv}{\triangleright_\byv}
\begin{mathpar}
\infer{(\lambda x.t)V \redv t[V/x]}{}

\infer{t\,u \redv t'\,u}{t \redv t'}

\infer{V\,u\redv V\,u'}{u \redv u'}
\end{mathpar}
The reflexive transitive closure $\to_\byv$ of the one-step reduction $\triangleright_\byv$ is
known as the (left-to-right) \emph{call-by-value} evaluation strategy. 
While it is well-known that the reduction system of the $\lambda$-calculus is confluent, 
so that the choice of a particular evaluation strategy does not have any consequence
in terms of expressiveness, this is no longer the case when 
side effects (such as stateful computations in the next sections)
come into play.

{
To enforce value restriction, let us now extend the language of formulas with a new construction:
\[
\text{\bf Formulas} \qquad\qquad A,B   \quad::= \quad \ldots \mid \bvto A B \qquad
\]
and the realizability interpretation accordingly by
\[ \tvr{\bvto A B  }\quad \defeq \quad \{t\in\Lambda : \forall V\in\tvr{A}.(t\,V \in\tvr{B})\} \]
In particular, we have
\[
\begin{array}{rcl}
\tvr{\bvto {\nat{e}} B  }& = & \{t\in\Lambda : t\,\overline{n}\in\tvr{B} \text{ where } n = \foint{e}_{\rho} \} \\
\tvr{\bvto {A_1\land A_2} B}& = & \{t\in\Lambda :\forall V_1\in\tvr{A_1},V_2\in\tvr{A_2}. t\,(V_1,V_2)\in\tvr{B} \text{ where } n = \foint{e}_{\rho} \} \\
\end{array}
\]
}
It is easy to check that for any formulas $A$ and $B$, 
$\tvr{\bvto A B  }$ is a saturated set, and the adequacy of the \fedrule-rule is thus preserved.

While there is currently no rule to type a term $t$ with a formula of the shape 
$\bvto A B$, we can nonetheless extend the type system with any rule as long as it is adequate with respect
to the realizability interpretation. 
Indeed, here the flexibility of the interpretation comes again into play, in the sense
that once the realizability interpretation of a new construct has been defined,
one could extend the type system with any rule related with that construct
as long as it is adequate.
For instance, the rules $\autorule{\mapsto_\intromark}$ and $\autorule{\mapsto_\elimmark}$ below
are adequate.
\begin{prop}\label{r:bvto}
 The following typing rules are adequate:
 \twoelt{
 \infer[\autorule{\mapsto_\intromark}]{\Gamma\vdash t:\bvto A B}{\Gamma\vdash t:A \imp B}
 }{
 \infer[\autorule{\mapsto_\elimmark}]{\Gamma\vdash t\,V:B}{\Gamma\vdash t:\bvto A  B & \Gamma \vdash V:A}
 }
 \end{prop}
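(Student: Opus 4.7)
The plan is to unfold the definition of adequacy (\Cref{def:adequate}) for each rule and reduce the goal to the definition of $\tvr{\bvto A B}$, relying only on the corresponding interpretation clauses already established in \Cref{def:realizability} together with the clause for the arrow $\imp$.

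For the introduction rule $\autorule{\mapsto_\intromark}$, I would fix a valuation $\rho$ closing $A\imp B$ (hence closing $\bvto A B$ and $\Gamma$) and a substitution $\sigma\real\rho(\Gamma)$. The adequacy of the premise yields $\sigma(t)\,u\in\tvr{B}$ for every $u\in\tvr{A}$, and since $\V\subseteq\Lambda$ any value in $\tvr{A}$ is in particular a term in $\tvr{A}$. Specialising the quantification to values $V\in\tvr{A}$ therefore gives exactly membership in $\tvr{\bvto A B}$, so this part amounts to the unconditional inclusion $\tvr{A\imp B}\subseteq\tvr{\bvto A B}$ holding for any formulas $A,B$ and any valuation $\rho$.

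For the elimination rule $\autorule{\mapsto_\elimmark}$, the key auxiliary observation is that substitution preserves the syntactic predicate ``being a value'': if $V$ is a (possibly open) value in the source syntax and $\sigma$ is a substitution into closed $\lambda$-terms, then $\sigma(V)$ is again a closed value. I would prove this by a direct structural induction on the grammar of values: $\sigma(0)=0$ and $\sigma(\lambda x.t)=\lambda x.\sigma(t)$ are values by inspection, while $\sigma(\suc\,V')=\suc\,\sigma(V')$ and $\sigma((V_1,V_2))=(\sigma(V_1),\sigma(V_2))$ use the induction hypothesis. Given this, adequacy follows mechanically: fixing $\rho$ closing $\bvto A B$, $A$ and $\Gamma$, and $\sigma\real\rho(\Gamma)$, the two premises yield $\sigma(t)\in\tvr{\bvto A B}$ and $\sigma(V)\in\tvr{A}$; since $\sigma(V)$ is a closed value in $\tvr{A}$, the definition of $\tvr{\bvto A B}$ gives $\sigma(t\,V)=\sigma(t)\,\sigma(V)\in\tvr{B}$, as required.

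The only genuinely substantive ingredient is the value-stability lemma for substitution: without it, the syntactic $V$ occurring in the conclusion of $\autorule{\mapsto_\elimmark}$ could be substituted into a non-value, and the premise $\Gamma\vdash V:A$ would not safely yield a value realizer of $A$. Everything else amounts to a straightforward unfolding of \Cref{def:realizability} and \Cref{def:adequate}, and requires no new reasoning specific to value restriction.
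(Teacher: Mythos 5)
Your proof is correct and follows essentially the same route as the paper: the introduction rule reduces to the inclusion $\tvr{A\imp B}\subseteq\tvr{\bvto A B}$, and the elimination rule hinges on the fact that $\sigma(V)$ is a value. The only difference is presentational — you state and prove the value-stability-under-substitution observation as an explicit lemma, whereas the paper dismisses it in a parenthetical remark ("because $\sigma(V)$ is necessarily a value"), relying on the earlier note that variables are deliberately excluded from the grammar of values precisely to make this hold.
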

\begin{proof}
 For the first rule it suffices to see that for any valuation $\rho$, we have
 \[ \{t\in\Lambda : \forall u\in\tvr{A}.(t\,u \in\tvr{B})\}\subseteq \{t\in\Lambda : \forall V\in\tvr{A}.(t\,V \in\tvr{B})\} \]
 As for the second one,
 if $\rho$ is a valuation and $\sigma$ a substitution such that $\sigma(V)\in\tvr{A}$,  $\sigma \Vdash \rho(\Gamma)$, and
 $\sigma(t)\in\tvr{\bvto A B}$.
 Then, by the definition of $\tvr{\bvto A B}$, we have that $\sigma(t)\,\sigma(V)=\sigma(t\,V)\in\tvr{B}$ (because $\sigma(V)$ is necessarily a value).
 \end{proof}

  We can also extend, maintaining the adequacy of the interpretation of $\bvto A B$ , the congruence relation with the following rules:
 \twoelt{
  \bvto {\exists x. A} B \cong {\forall x.\bvto { A} B}
  }{
  \bvto {\exists X. A} B \cong {\forall X.\bvto { A} B}
  }

 \begin{prop}\label{r:bv_cong} 
 For any formulas $A$ and $B$, we have
 
   \begin{enumerate}
  \item $\tvr{\bvto {\exists x. A} B} = \tvr{\forall x.\bvto { A} B}$
  \item $\tvr{\bvto {\exists X. A} B} = \tvr{\forall X.\bvto { A} B}$
 \end{enumerate}
\end{prop}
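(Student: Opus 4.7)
The plan is to mimic, almost verbatim, the chain of equalities used in the proof of Proposition \ref{r:cong}, substituting the value-restricted arrow $\bvto{}{}$ for the ordinary implication $\imp$. No new conceptual ingredient is required: the whole argument boils down to a routine commutation of quantifiers through the definition of the interpretation.

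For item (1), I would start by unfolding
\[
\tvr{\bvto{\exists x. A} B} = \{t\in\Lambda : \forall V \in \tvr{\exists x. A}.\, t\,V \in \tvr{B}\}
\]
and then expanding $\tvr{\exists x. A}$ as the union $\bigcup_{n\in\N} \tv{A}_{\rho, x\gets n}$. Rewriting membership in this union as an existential hypothesis on $n$, pulling this existential out of the implication as an outer universal, and swapping its order with $\forall V$, I obtain
\[
\tvr{\bvto{\exists x. A} B} = \bigcap_{n\in\N} \{t\in\Lambda : \forall V \in \tv{A}_{\rho, x\gets n}.\, t\,V \in \tvr{B}\},
\]
which is by definition $\tvr{\forall x.\, \bvto{A} B}$. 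The only difference with the calculation used for Proposition \ref{r:cong} is that $V$ ranges over values rather than over arbitrary closed terms, but this restriction does not interfere with any of the set-theoretic manipulations involved.

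For item (2), I would run exactly the same argument, replacing the index set $\N$ by $\N^k \to \sat$ (where $k$ is the arity of $X$) and the binder $x\gets n$ by $X\gets F$. As in the proof of Proposition \ref{r:cong}, I implicitly rely on the usual convention that the bound variable does not occur free in $B$, so that the valuation $\rho$ closes every formula uniformly throughout the calculation.

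There is no real obstacle here; if anything, the proof is even more mechanical than that of Proposition \ref{r:cong}, since the interpretation of $\bvto{A} B$ was designed precisely so that this commutation would go through just as it does for $\imp$.
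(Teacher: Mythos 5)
Your proposal is correct and matches the paper's proof essentially verbatim: both unfold the definition of the value-restricted arrow, expand the existential as a union, convert membership in the union to an existential hypothesis, pull it out as a universal, and read off the resulting intersection as $\tvr{\forall x.\bvto{A}{B}}$, with the second-order case handled by the same manipulation over $\N^k\to\sat$. No substantive difference in strategy.
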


{
\begin{proof}
The proof is analogous to the proof of \autoref{r:cong}, for instance for the first part, we have:
\[\begin{array}{rl}
\tvr{\bvto {\exists x. A}  B}
& = \{t\in\Lambda : \forall V \in \tvr{\exists x. A}, t\,V \in \tvr{B}\} \\
& = \{t\in\Lambda : \forall V \in \bigcup_{n\in\N}\tv{A}_{\rho,x\gets n}, t\,V \in \tvr{B}\} \\
& = \{t\in\Lambda : \forall V, (\exists n,V\in\tv{A}_{\rho,x\gets n}) \limp t\,V \in \tvr{B}\} \\
& = \{t\in\Lambda : \forall V, \forall n, (V\in\tv{A}_{\rho,x\gets n} \limp t\,V \in \tvr{B})\} \\
& = \bigcap_{n\in\N}\{t : \forall V,  \in\tv{A}_{\rho,x\gets n} \limp t\,V \in\tvr{B}\} \\
& = \tvr{\forall x.(\bvto A B)}
\end{array}\vspace{-2em}\]
\end{proof}
}

We will make use of the following abbreviations:
\twoelt{
 \fabv{x}.A  \defeq \forall x.(\bvto{\nat{x}} A) 
 }{
 \exbv{x}.A \defeq  \forall X.(\fabv x.(A  \imp X))\imp X
 }
 While the first definition is natural, 
the second one may be a bit more puzzling at first sight.
As we saw, the truth value of any formula has to be a saturated set.
However, given a formula $A(x)$, the set $\{(\overline{n},t):t\in\tvr{A(n)}\}$ 
is not saturated, and so we cannot define a formula 
$\exists x.\{\nat{x}\}\land A(x)$ whose realizers would be this set.
Nonetheless, the definition of $\exbv{x}.A$ is somehow doing the trick
in continuation-passing style, in the sense that we have:
\begin{prop}\label{r:exbv}
 For any formula $A$, any valuation $\rho$ and any term $t$,
 if $t\in\tvr{\exbv{x}.A}$ then there exists a natural number $n\in\N$
 and a term $u\in\tvr{A[x:=n]}$ s.t.: $t\,(\lambda xy.(x,y))\red (\overline{n},u)$. 
\end{prop}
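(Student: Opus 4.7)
The plan is to unfold the definition
\[ \exbv{x}.A \;\defeq\; \forall X.(\fabv x.(A\imp X))\imp X \]
and strategically instantiate the second-order variable $X$ with a cleverly chosen saturated set $F$ that internalizes the desired conclusion. Specifically, I would take
\[ F \;\defeq\; \{s\in\Lambda : \exists n\in\N, \exists u\in\tvr{A[x:=n]}, s\red (\overline{n},u)\}. \]
The first step is to check that $F$ is saturated, which is immediate from the transitivity of $\red$: if $s'\red s$ and $s\red (\overline{n},u)$ then $s'\red(\overline{n},u)$.

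Next, writing $\rho' \defeq \rho, X\gets F$, the assumption $t\in\tvr{\exbv{x}.A}$ specialises to $t\in\tv{(\fabv x.(A\imp X))\imp X}_{\rho'}$, so it suffices to produce a realizer $r\in\tv{\fabv x.(A\imp X)}_{\rho'}$ in order to conclude that $t\,r\in F$. I would take $r \defeq \lambda xy.(x,y)$ and verify, using \Cref{r:bvto} style reasoning, that for any $n\in\N$ and any value $V\in\tv{\nat{x}}_{\rho',x\gets n}$ (which forces $V=\overline{n}$, since a value reducing to $\overline{n}$ must equal $\overline{n}$), and for any $u\in\tv{A}_{\rho',x\gets n}=\tvr{A[x:=n]}$, we have
\[ (\lambda xy.(x,y))\,V\,u \;\red\; (\overline{n},u) \;\in\; F, \]
so by saturation $(\lambda xy.(x,y))\,V\,u\in F=\tv{X}_{\rho'}$. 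Unfolding the definitions of $\fabv x$ and $\bvto{\nat{x}}{(A\imp X)}$ then yields $\lambda xy.(x,y)\in\tv{\fabv x.(A\imp X)}_{\rho'}$.

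Applying $t$ to this realizer gives $t\,(\lambda xy.(x,y))\in F$, which by the very definition of $F$ provides the desired $n\in\N$ and $u\in\tvr{A[x:=n]}$ with $t\,(\lambda xy.(x,y))\red (\overline{n},u)$.

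The mildly delicate point is the choice of $F$: it has to be saturated (so any closed definition in terms of $\tvr{A[x:=n]}$ must be closed under anti-reduction rather than being just the bare set of pairs $(\overline{n},u)$), and it has to be tight enough that membership of $t\,(\lambda xy.(x,y))$ in $F$ really extracts both the witness $n$ and a realizer $u$. Once $F$ is set up correctly, the rest is a routine unfolding of the realizability clauses, using \Cref{r:bv_cong} implicitly to justify the manipulation of the value-restricted quantifier.
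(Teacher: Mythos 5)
Your proof is correct and follows essentially the same route as the paper's: you pick the same saturated set (the paper calls it $\X$, you call it $F$), feed the same realizer $\lambda xy.(x,y)$ into $t$, and extract the witness from membership in that set. The only differences are cosmetic — you spell out why the only value in $\tvr{\nat{n}}$ is $\overline{n}$, which the paper leaves implicit; the references to \Cref{r:bvto} and \Cref{r:bv_cong} are not actually needed since $\fabv{x}$ is defined directly as a universal over a value-restricted arrow and can simply be unfolded.
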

\begin{proof}
 Let $t$ be a term such that $t\in\tvr{\exbv{x}.A}$.
 By definition, 
 for any $\X\in\sat$ and any $v\in \tv{\fabv x.(A  \imp X)}_{\rho,X\mapsto\X}$, we have that
 $t\,v\in\X$.
 Let us define the set $$\X = \{w\in\Lambda:\exists n,u. w\red (\overline{n},u) \land u\in\tvr{A[x:=n]}\},$$
 which is obviously saturated.
 It is clear that $\lambda xy.(x,y)\in\tv{\fabv x.(A  \imp X)}_{\rho,X\mapsto\X}$
 since for any $n\in\N$ and any $u\in\tvr{A[x:=n]}$ one has
 $(\lambda xy.(x,y))\,\overline{n}\,u \red (\overline n,u)\in\X$.
 We conclude that 
 $t\,(\lambda xy.(x,y)) \red (\overline n,u)$.
 \end{proof}

\begin{defi}\label{d:T}
 We define $T\defeq \lambda zx.(\rec\,(\lambda y.y\,0)\,(\lambda xyz.y\,(\lambda x.z\,(\suc\,x)))\,x)\,z$. 
\end{defi}

The next proposition relates these new quantifications
 with  the relativized quantifications $\far x.A$ and $\exr x.A$
 using the term $T$.

\begin{prop}\label{r:relativized}
We have

 \begin{enumerate}
 \item  $T \Vdash \fabv x.A \to \far x.A$
 \item  $\lambda x.x \Vdash \far x.A \to \fabv x.A$
 \item  $\lambda z.z\,\lambda xy.(x,y) \Vdash \exbv x.A \to \exr x.A$
 \item  $\lambda xy.T\,y\,\pi_1(x)\,\pi_2(x) \Vdash \exr x.A \to \exbv x.A$
  \end{enumerate}
\end{prop}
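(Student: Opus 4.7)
The plan is to first isolate the key computational behavior of the term $T$, from which all four cases follow by unfolding the realizability interpretation of the value-restricted arrow $\bvto{\cdot}{\cdot}$ and invoking saturation. Concretely, the central lemma I would establish is that for every term $z$ and every $n\in\N$, $T\,z\,\overline{n} \red z\,\overline{n}$; by congruence of $\red$, it follows that whenever $u\red\overline{n}$ one also has $T\,z\,u \red z\,\overline{n}$. Intuitively, $T$ forces its second argument to reduce to a numeral value before handing it to $z$.

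To verify this lemma, I would abbreviate $u_0 \defeq \lambda y.y\,0$ and $u_1 \defeq \lambda a y z'.\,y\,(\lambda b.\,z'\,(\suc\,b))$ (with fresh bound variables), so that $T\,z\,x \red (\rec\,u_0\,u_1\,x)\,z$, and proceed by induction on $n$. The base case follows from $\rec\,u_0\,u_1\,0\red u_0$ and one $\beta$-step. For the inductive step, a short reduction shows that $T\,z\,(\suc\,\overline{n}) \red T\,(\lambda b.\,z\,(\suc\,b))\,\overline{n}$, so applying the induction hypothesis with $z'\defeq \lambda b.\,z\,(\suc\,b)$ gives $T\,z\,(\suc\,\overline{n}) \red z'\,\overline{n} \red z\,\overline{n+1}$.

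Given this lemma, cases (1) and (2) are routine. For (1), I fix a valuation $\rho$, $z\in\tvr{\fabv x.A}$, an integer $n$, and any $u\in\tv{\nat{x}}_{\rho,x\gets n}$, i.e.\ $u\red\overline{n}$. Since in weak reduction the only value reducing to $\overline{n}$ is $\overline{n}$ itself, membership of $z$ in $\tv{\bvto{\nat{x}}A}_{\rho,x\gets n}$ unfolds to $z\,\overline{n}\in\tv{A}_{\rho,x\gets n}$; by the $T$-lemma, $T\,z\,u\red z\,\overline{n}$, and saturation concludes. For (2), given $z\in\tvr{\far x.A}$, the reduction $(\lambda x.x)\,z\red z$ together with the obvious fact that every value realizing $\nat{x}$ also realizes $\nat{x}$ as a term gives $z\in\tvr{\fabv x.A}$ by saturation.

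Cases (3) and (4) both rely on \autoref{r:exbv}. For (3), from $t\in\tvr{\exbv x.A}$ that proposition yields $n\in\N$ and $u\in\tvr{A[x:=n]}$ with $t\,(\lambda xy.(x,y))\red (\overline{n},u)$; hence $(\lambda z.z\,(\lambda xy.(x,y)))\,t \red (\overline{n},u)$, a term whose projections land in $\tv{\nat{x}}_{\rho,x\gets n}$ and $\tv{A}_{\rho,x\gets n}$ respectively, so it belongs to $\tv{\nat{x}\land A}_{\rho,x\gets n}\subseteq\tvr{\exr x.A}$. For (4), from $t\in\tvr{\exr x.A}$ I obtain $n$ with $\pi_1(t)\red\overline{n}$ and $\pi_2(t)\in\tv{A}_{\rho,x\gets n}$; then for any $\X\in\sat$ and any $y\in\tv{\fabv x.(A\imp X)}_{\rho,X\gets\X}$, the $T$-lemma gives $(\lambda xy.T\,y\,\pi_1(x)\,\pi_2(x))\,t\,y \red T\,y\,\pi_1(t)\,\pi_2(t) \red y\,\overline{n}\,\pi_2(t) \in \X$. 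The only delicate point of the whole argument is the inductive computation establishing $T\,z\,\overline{n}\red z\,\overline{n}$, which has to be performed carefully because of the nested binders in the definition of $T$; once this is settled, the four items follow mechanically from unfolding definitions and saturation.
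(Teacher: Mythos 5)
Your route works, and for parts 2--4 it is essentially the paper's: part 2 via the inclusion $\tvr{A\imp B}\subseteq\tvr{\bvto A B}$ (the content of \autoref{r:bvto}), part 3 via \autoref{r:exbv}, and part 4 by feeding $T$ the numeral to which $\pi_1(t)$ reduces. For part 1, however, you take a genuinely different path. The paper defines the auxiliary formula $B(z)\defeq(\fabv x.A)\to A[x:=z]$, checks that $\lambda y.y\,0\Vdash B(0)$ and $\lambda xyz.y\,(\lambda x.z\,(\suc\,x))\Vdash\far x.(B(x)\to B(S(x)))$, and then invokes adequacy of the $\recrule$-rule. You instead prove a direct syntactic computation lemma, $T\,z\,\overline n\red z\,\overline n$, and conclude by saturation. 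That is a perfectly good alternative; it is in fact the very characterization of $T$ the paper proves separately later, in the stateful setting, as \autoref{r:lem_T}.

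There is one flaw in your inductive step. You assert ``a short reduction shows $T\,z\,(\suc\,\overline n)\red T\,(\lambda b.z\,(\suc\,b))\,\overline n$''. Both sides reduce, after a few $\beta$- and $\rec$-steps, to the common term $(\rec\,u_0\,u_1\,\overline n)\,(\lambda b.z\,(\suc\,b))$, but the right-hand side is \emph{not} a reduct of the left-hand side: to reach $T\,z'\,\overline n$ from that common term you would have to refold the two outer abstractions of $T$, which is an expansion, not a reduction. Confluence does not rescue the step either: from $T\,z'\,\overline n\red(\rec\,u_0\,u_1\,\overline n)\,z'$ and (by the induction hypothesis) $T\,z'\,\overline n\red z'\,\overline n$ you only obtain a common reduct of $(\rec\,u_0\,u_1\,\overline n)\,z'$ and $z'\,\overline n$, not $(\rec\,u_0\,u_1\,\overline n)\,z'\red z'\,\overline n$, since $z'\,\overline n$ need not be normal. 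The fix is simple and keeps your strategy intact: state the induction directly about $\rec$, namely that for all terms $z$, $(\rec\,u_0\,u_1\,\overline n)\,z\red z\,\overline n$. The inductive step is then a genuine forward chain
\[
(\rec\,u_0\,u_1\,(\suc\,\overline n))\,z ~\red~ (u_1\,\overline n\,(\rec\,u_0\,u_1\,\overline n))\,z ~\red~ (\rec\,u_0\,u_1\,\overline n)\,(\lambda b.z\,(\suc\,b)) ~\red~ (\lambda b.z\,(\suc\,b))\,\overline n ~\red~ z\,\overline{n+1},
\]
where the penultimate step is the strengthened induction hypothesis applied to $\lambda b.z\,(\suc\,b)$. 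Your claim about $T$ then follows by prefixing the two $\beta$-steps that unfold $T$, and the rest of your four-part argument goes through unchanged.
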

{
\begin{samepage}
\begin{proof} \hfill
\begin{enumerate}
 \item 
Let $t$ be a term in $\tv{\fabv x.A}$,
$n\in\N$ a natural number
and $u$ a term in $\tv{\nat{n}}$.
To prove the result, since $\tv{A[x:=n]}$ is saturated, it suffices to prove that:
\[ (\rec\,(\lambda y.y\,0)\,(\lambda xyz.z\,(\suc\,x))\,u)\,t \Vdash A[x:=n] \]
Let us define the formula $B(z)\defeq (\fabv x.A) \to A[x:=z]$.
It is straightforward to check that:
\begin{itemize}
\item $\lambda y.y\,0\Vdash B(0)$
\item $\lambda xyz.y\,(\lambda x.z\,(\suc\,x)) \Vdash \far{x}. B(x)\to B(S(x))$
\end{itemize}
Using the adequacy of the \recrule-rule, we deduce that
\[ \rec\,(\lambda y.y\,0)\,(\lambda xyz.z\,(\suc\,x))\,u\Vdash B(n) \]
and the result follows from the hypothesis that $t\in\tv{\fabv x.A}$.
\item Follows directly from \autoref{r:bvto}.
\item Follows directly from \autoref{r:exbv}.
\item Let $t$ be a term in $\tv{\exr x.A}$, $\X\in\sat$ be a predicate and $u$ a term in $\tv{\fabv x.(A\to X)}_{X\mapsto\X}$.
By assumption, there exists a natural number $n$ and two terms $t_1$ and $t_2$ such that $t_1\in\tv{\nat{n}}$, $t_2\in\tv{A[x:=n]}$ and 
$t\red (t_1,t_2)$. 
Then
\[ (\lambda xy.T\,y\,\pi_1(x)\,\pi_2(x))\,t\,u 
~~\red~~ 
T\,u\,\pi_1(t)\,\pi_2(t) 
~~\red~~
T\,u\,t_1\,t_2 \]
From part 1 we know that $T\,u$ is in $\tv{\far x.(A\to X)}_{X\mapsto\X}$, hence
$T\,u\,t_1\,t_2\in\X$ and the result follows from the fact that $\X$ is saturated.\qedhere
\end{enumerate}
\end{proof}
\end{samepage}
}
The term $T$, which forces the evaluation of an argument of type $\nat{n}$ 
to get the underlying value $\overline{n}$ to make it compatible
with a function $\fabv x.A$, is somehow simulating a call-by-value
evaluation (for natural numbers). 
Such a term is usually called a \emph{storage operator}~\cite{Krivine09}.
 
While \autoref{r:relativized} indicates that the different ways of relativizing 
the quantifiers are equivalent (in the sense that one admits a realizer 
if and only if the other does), it is important to keep in mind that 
this result is peculiar to the current effect-free settings.
In particular, this result no longer holds once stateful computations are allowed.

\section{Realizability with slices}
\label{s:slices} 

In this section, we extend the realizability interpretation introduced in
\Cref{s:realizability} by taking into account stateful computations.
The states will be the key ingredient to give computational content
to the Lightstone-Robinson construction described in \Cref{s:robinson}.
The advantage of this extended setting will become
clearer in \Cref{s:nonstandard}, since it allows to realize some new reasoning principles,
after we investigate the status of natural numbers in \Cref{s:natural}.

\subsection{Stateful computations}
The first step in the Lightstone-Robinson construction aims at getting a product $\M^\N$ of the (initial) model $\M$.
In order to achieve this goal in our setting,  we add a memory cell to our calculus that contains an integer, 
which we call the \emph{state}.
The purpose of the state is to keep track of which ``slice'' of the product is the interpretation being done.
This product allows us to interpret first-order individuals as functions in $\N^\N$, so that
the interpretation accounts for new elements -- the so-called nonstandard elements -- for instance 
the diagonal function (see \autoref{r:delta}).

In our extended calculus, the first-order expressions are the same, while second-order formulas now use a value restriction for natural numbers and include a predicate $\st{e}$, as per usual in nonstandard analysis, 
denoting that the expression $e$ is standard. This means that in our framework we will also have two types of nonstandard quantifications: the usual $\fas{}, \exs{}$ and the relativized  $\fabvs, \exbvs{x}$.
We say that a formula is \emph{internal} if it does not contain the predicate $\st{\cdot}$, and
\emph{external} otherwise.
Terms are extended with two new instructions $\get$ and $\set$.
Similarly to what was remarked before concerning the $\bvto{\cdot}\cdot $
construction on formulas for value restriction,
we do not need a specific rule for typing these terms since
any adequate rules would work. 
Instead, we will only pay attention to the computational expressiveness
brought by these new terms, by introducing appropriate reduction rules.
The $\get$ instruction allows to obtain the content of the current state
while the $\set$ instruction allows to increase its content. 
Formally, we extend the different grammars as follows:
\[\begin{array}{l@{\qquad}rcl}

\text{\bf Formulas} & A,B  &::= & \st{e}  \mid X(e_1,\dots,e_n) \mid \bvto{\nat e} A \mid  A\imp B \\
&&&\mid A\land B 
\mid  \forall x.A \mid \exists x.A \mid \forall X.A \mid \exists X.A  \\
\text{\bf Terms   } & t,u   &::= & ... \mid \get \mid \set  \\
\text{\bf States}   &\S &\defeq & \N
\end{array}
\]
Since the formulas no longer include an unrestricted constructor $\nat e$, the typing rules for $0$, $\suc$ and 
$\rec$ are no longer required\footnote{In \autoref{r:natural}, we show how these terms define realizers for the value restricted natural numbers.}. 
Other than that, the type system is unchanged.
In particular, the $\get$ and $\set$ instructions are not given any typing rule.

We will make use of the following abbreviations:
\[
\begin{array}{rcl}

 \fas{x}.A   &\defeq& \forall x.(\stto{x} A) \\

 \fabvs{x}.A &\defeq& \forall x.(\stto{x} (\bvto{\nat x} A)) \\

 \end{array} \quad\vrule~
 \begin{array}{rcl}
 \exs{x}.A   &\defeq&  \exists x.(\st{x} \land A) \\

 \exbvs{x}.A &\defeq& \forall X.((\fabvs x.(A\to X)) \to X) \\

 \end{array}
\]

\newcommand{\reds}[2]{\xrightarrow{\!\!#1\rightsquigarrow #2\!\!}}
\renewcommand{\reds}[2]{\,{}^{#1}\!\!\downarrow^{#2}}

\newcommand{\step}[2]{#1 \triangleright^\s_{\s} #2}
\newcommand{\steps}[4]{#3 \triangleright^{#1}_{#2} #4}

With the exception of the $\get/\set$ instructions, 
the syntax of terms does not account for states. In fact, only the reduction rule for the $\set$ instruction allows to change the state.
Nonetheless, states play a crucial role in the reduction system. 
In particular, one-step reductions are now defined for terms together with a state.
We write $\steps{\s}{\s'}{t}{t'}$ to denote that the term $t$ in state $\s$ reduces to
the term $t'$ in state $\s'$.
The one-step reduction over terms is defined by the following rules:
\newcommand{\hsep}{\qquad\qquad}
\[
\newcommand{\unaryinfs}[4]{\infer{\steps{\s}{\s'}{#1}{#2}}{\steps{\s}{\s'}{#3}{#4}}}
\begin{array}{c}
\infer{\steps \s \s t t'}{t \triangleright_\beta t'}
\hsep
\infer{\steps \s \s \get {~\s}}{}
\hsep
\infer{\steps{\s'}{\s''}{\set\overline {\s} \,t}{t}}{\s''=\max(\s,\s')}
\hsep
\unaryinfs{C[t]}{C[t']}{t}{t'}
\hsep
\end{array}
\]
\newcommand{\hole}{[\,]}
where $C\hole::= \rec\,u_0\,u_1\,\hole 
~\mid~ \hole\,u 
~\mid~ \pi_i(\hole)
~\mid~ \suc\,\hole
~\mid~ \set\,\hole\,u$.

We write $t\reds{\s}{\s'} t'$ for the reflexive-transitive closure of this relation.

Since we now consider effectful computations, we have to fix an evaluation strategy in order to ensure the confluence
of the reduction system.
Observe that our definition for $C\hole$ indeed ensures that
our reduction system has no critical pair. 
Here we follow a call-by-name evaluation strategy (we substitute unevaluated arguments), 
while for $\rec$ and $\set$ one of their arguments must be reduced.

The following standard example illustrates the need for an evaluation strategy to ensure
confluence in the presence of states, highlighting the fact that the result of a stateful 
computation might depend on the chosen strategy.
\begin{exa}\label{ex:confluence} 
Let us write $x + y$ for a term that computes the addition of $x$ and $y$ (such term is easily definable via $\rec$).
Let us define $\incr_0\defeq \set\,(\suc\,\get)\,0$ (which increases the state and reduces to $0$) 
and $t\defeq(\lambda x.(\get+x) + x)\,\incr_0$.
If we reduce the argument of the functions first (call-by-value) we obtain
$t\reds{0}{1}(\lambda x.(\get+x)+x))\,0\reds{1}{1} (\get+0)+0 \reds{1}{1} 1$. 
In turn, if we perform the $\beta$-reduction without reducing the argument (call-by-name), 
we get $t\reds{0}{0} (\get+\incr_0) + \incr_0 \reds{0}{1} (\get+\incr_0)+0 \reds{1}{2} \get+0\reds{2}{2} 2$.
In the absence of an evaluation strategy, the system would thus have admitted unsolvable critical pairs.
\end{exa}

\subsection{Stateful realizability interpretation}
\label{s:stateful_real}
The fact that our syntax now includes states allows us
to interpret formulas as terms-with-states\footnote{A realizability interpretation
with a similar structure, although with a different notion of state, can be found in~\cite{MiqHer18}. 
The perspective of the latter is also different in that it aims at proving 
the normalization of a classical call-by-need calculus.}.
Truth values are then defined as saturated sets in $\P(\Lambda \times \S)$.
Individuals are now individuals with states, so elements of $\N^\S$,
and similarly predicates of arity $k$ are elements of the set of functions from $\N^k$ to $\P(\Lambda \times \S)$.
This creates a mismatch in the sense that predicates are no longer shaped to be
applied to individuals\footnote{This phenomenon also occurs in the Lightstone-Robinson
construction of an ultrapower~\cite{LightstoneRobinson}.}.
In order to define our interpretation, we need to deal with
this mismatch between the structure of individuals and the one of predicates,
by defining a suitable notion of application.

\begin{defi}\label{def:application}
 Let $F:\N^k\to\P(\Lambda\times\S)$ be a predicate.
We define the application of $F$ to individuals $f_1,\ldots,f_k\in\N^\S$ by 
$F@(f_1,\ldots,f_k) \defeq \{(t;\s): (t;\s)\in F(f_1(\s),\ldots,f_k(\s))\}$.
\end{defi}

\begin{defi}\label{def:standard}
An individual $f\in\N^\S$ is said to be \emph{standard} if it is a constant function,
i.e.\ if there exists $n\in\N$ such that $\forall \s\in\S.(f(\s)=n)$. We then write $f=n^\inj$.

\end{defi}

\begin{defi}
We define \emph{saturated sets with respect to the stateful reduction}
to be sets $S\in\Lambda\times\S$
s.t. for any terms $t,t'\in\Lambda$ and any states $\s,\s'\in\S$,
if $(t';\s')\in S$ and $t \reds \s {\s'} {t'}$ then $(t;\s)\in S$. 
With abuse of notation we denote the set of these saturated sets by $\sat$.
\end{defi}
In the realizability interpretation with slices below, truth values are defined as saturated sets.
This allows us to reason by \emph{anti-reduction} (sometimes also called \emph{expansion}) in any fixed state. 
By anti-reduction, we mean that to show that a term 
$t$ together with a state $\s$ belongs to such a saturated set $S$, it is enough to find
$\s'$ and $t'$ such that $t\reds{\s}{\s'}t'$ and $(t';\s')\in S$.

We now consider valuations which are functions that associate 
  a function in $\N^\S$ to every first-order variable~$x$ and
  a truth value function from $\N^k$ to $\sat$ to every
  second-order variable~$X$ of arity~$k$.
Again, with abuse of notation we denote such valuation by $\rho$.

We also extend the usual interpretation of first-order expressions 
to range over $\N^\S$. To that end, we simply define arithmetical functions
pointwise on the domain. For instance, if $f\in\N^\S$,
we write $S^\inj(f)$ for the function $\s \mapsto (S(f(\s)))$.
When it is clear from the context, we abuse the notation (even more) by writing
$0$, $S$, $\foint{\cdot}_{\rho}$, etc. instead of $0^\inj$, $S^\inj$, $\foint{\cdot}^\inj_{\rho}$.

\renewcommand{\tvr}[1]{\tvS{#1}_{\rho}}
\newcommand{\prim}{'}

\begin{defi}[Realizability with slices]
 \label{def:real_slices}
The interpretation of a formula $A$ together with a valuation $\rho$ closing $A$ 
is the set $\tvr{A}$ defined inductively according to the following clauses:
 \[
\begin{array}{rcl}

 \tvr{\st{e}}          & \defeq & \left\{\begin{array}{ll} \Lambda\times\S & \text{ if $\foint{e}_{\rho}$ is standard}\\ \emptyset & \text{otherwise}\end{array}\right.\\[2mm]

 \tvr{X(e_1,\dots,e_n)}& \defeq & \rho(X)@(\foint{e_1}_{\rho},\dots,\foint{e_n}_{\rho}) \\[2mm] 

 \tvr{\bvto {\nat e} A  }     & \defeq & \{(t;\s)\in\Lambda\times\S : (t\,\overline{n};\s)  \in\tvr{A} \text{, where } n = \foint{e}_{\rho}(\s)\} \\[2mm]
 \tvr{A\imp B         }& \defeq & \{(t;\s)\in\Lambda\times\S : \forall u. \big((u;\s)\in\tvr{A} \limp (t\,u;\s) \in\tvr{B}\big)\} \\[2mm]
 
\tvr{A_1\land A_2        }& \defeq & \{(t;\s)\in\Lambda\times\S : (\pi_1(t);\s)\in\tvr{A_1} \land (\pi_2(t);\s)\in\tvr{A_2} \big)\}  \\[2mm]

 \tvr{\forall x.A}& \defeq & \bigcap_{f\in\N^\S}\tvS{A}_{\rho,x\gets f} \\[2mm]
 \tvr{\forall X.A} &\defeq&  \bigcap_{F:\N^k\to\sat}\tvS{A}_{\rho,X\gets F} \\[2mm] 
 \tvr{\exists x.A}& \defeq & \bigcup_{f\in\N^\S}\tvS{A}_{\rho,x\gets f} \\[2mm]
  \tvr{\exists X.A}& \defeq&  \bigcup_{F:\N^k\to\sat}\tvS{A}_{\rho,X\gets F}.
 \end{array}
\] 
We write $(t;s)\real A$ (resp. $t\ureal A$) to denote that $(t;s)\in\tvS{A}$ (resp. $\forall \s\in\S.(t;\s)\in\tvS{A}$). 
Realizers of the type $t\ureal A$ are called \emph{universal}.
\end{defi}
Observe that this stateful interpretation has the structure of a
product of the interpretation given by \autoref{def:realizability}. 
The interpretation corresponding to a given state can thus 
be seen as a \emph{slice} of this product. 
However, it is important to keep in mind that the $\set$ instruction still allows
terms to change the value of the state, 
therefore the slices are not completely independent.
We write $\tvrs{A}$ to denote the truth value $\{(t;\s)\in\tvr{A}\}$ in the slice induced by $\s$.

We first verify that truth values are indeed saturated sets and that the interpretation
validates the congruence rules.
\begin{prop}\label{r:real_sat}
 Let $A$ be a formula and $\rho$ a valuation closing $A$. Then $\tvr{A}\in\sat$.
\end{prop}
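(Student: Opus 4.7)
The plan is to proceed by structural induction on the formula $A$, verifying in each case that $\tvr{A}$ is closed under the anti-reduction property that defines $\sat$, namely: if $(t';\s')\in \tvr{A}$ and $t\reds{\s}{\s'} t'$ then $(t;\s)\in \tvr{A}$. For the atomic formulas the verification is immediate: $\tvr{\st{e}}$ is either $\Lambda\times\S$ or $\emptyset$ depending on whether $\foint{e}_\rho$ is a constant function, and both sets are trivially saturated. For the case $X(e_1,\ldots,e_n)$, the valuation $\rho(X)$ maps each tuple of natural numbers to a saturated set by assumption; here one has to argue that, since first-order expressions are evaluated at the current state, the membership $(t';\s')\in \rho(X)(\foint{e_1}_\rho(\s'),\ldots,\foint{e_n}_\rho(\s'))$ can be transported back to $(t;\s)\in \rho(X)(\foint{e_1}_\rho(\s),\ldots,\foint{e_n}_\rho(\s))$ by combining the saturation of the pointwise component with the behaviour of the reduction on individual expressions.

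For the inductive cases I would systematically exploit the context-closure rule. For conjunction, from $(t';\s')\in \tvr{A_1\land A_2}$ one gets $(\pi_i(t');\s')\in \tvr{A_i}$ for $i=1,2$; applying the context rule $C[\cdot]=\pi_i(\cdot)$ to $t\reds{\s}{\s'} t'$ gives $\pi_i(t)\reds{\s}{\s'} \pi_i(t')$, and the induction hypothesis on each $A_i$ concludes. The case $\bvto{\nat e} A$ is handled analogously, using the application context $[\cdot]\,\overline{n}$. For implication $A\imp B$, take $(t';\s')\in \tvr{A\imp B}$ with $t\reds{\s}{\s'} t'$, and any $u$ with $(u;\s)\in\tvr{A}$; by the context rule for $[\cdot]\,u$, one has $t\,u\reds{\s}{\s'} t'\,u$, so the induction hypothesis on $B$ reduces the problem to establishing $(t'\,u;\s')\in\tvr{B}$, which follows from the realizability of $t'$. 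Finally, for the quantifier cases, saturation passes trivially through the arbitrary intersections and unions that define $\tvr{\forall x.A}$, $\tvr{\exists x.A}$, $\tvr{\forall X.A}$ and $\tvr{\exists X.A}$, since both operations preserve membership in $\sat$.

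The delicate point is the implication case: applying the hypothesis $(t';\s')\in\tvr{A\imp B}$ to the argument $u$ in state $\s'$ formally requires $(u;\s')\in\tvr{A}$, whereas we only have $(u;\s)\in\tvr{A}$. The proof must therefore rely on the observation that reductions can only increase the state monotonically (states only grow via $\set$, since $\s''=\max(\s,\s')\ge\s'$), combined with the fact that, for the truth values produced by the interpretation, changing the state in this controlled way preserves realizers. Threading this monotonicity argument consistently across the induction, and in particular making it compatible with the state-dependent evaluation of predicates $X(e_1,\ldots,e_n)$, is the main technical step; once it is in place, the remaining cases are routine context manipulations.
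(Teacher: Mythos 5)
Your proof follows the paper's architecture — structural induction on $A$, atomic cases resolved by unfolding the definitions, compound cases via the context-closure rules, quantifiers passed through intersections and unions — and those parts coincide with the paper's. The divergence, which you rightly single out as the delicate point, is the implication case. You quantify $u$ at the \emph{source} state $\s$ (as the definition of $\tvr{A\imp B}$ at $\s$ requires), reduce $t\,u\reds{\s}{\s'}t'\,u$, and then find that you need $(u;\s')\in\tvr{A}$ to exploit $(t';\s')\in\tvr{A\imp B}$. You propose to close this gap with a monotonicity claim: because states only increase, ``changing the state in this controlled way preserves realizers.'' That lemma is nowhere in the paper and is not a consequence of the definitions. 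Indeed, states being nondecreasing along reductions makes $\Lambda\times\{0\}$ a legitimate member of $\sat$ (no pair at a state $>0$ can anti-reduce into it), and assigning it to a $0$-ary second-order variable $X$ via $\rho(X)=\Lambda\times\{0\}$ gives $(u;0)\in\tvr{X}$ for every $u$ while $(u;1)\notin\tvr{X}$. So the state transfer you invoke fails for arbitrary valuations, and your implication step has a genuine hole.

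The paper handles the case differently: it quantifies the argument at the \emph{target} state, taking $(u;\s')\in\tvr{A}$ directly. Then $(t'\,u;\s')\in\tvr{B}$ falls out of $(t';\s')\in\tvr{A\imp B}$ with no transfer of a realizer of $A$ across states, and the induction hypothesis on $B$, applied to $t\,u\reds{\s}{\s'}t'\,u$, yields $(t\,u;\s)\in\tvr{B}$. The same remark applies to your $\bvto{\nat e}{A}$ case. You should recast your implication case along those lines and drop the unsupported monotonicity lemma.
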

\begin{proof}
 {By a straighforward induction on the structure of $A$. 
Observe for instance that the case $\st{f}$ follows from the definition and 
that the case $X(e_1,\dots,e_n)$ follows from the fact that, by definition, $\rho(X)$ takes values in $\sat$.\qedhere }

{By induction on the structure of $A$.
The case $\st{f}$ is clear from the definition and the case $X(e_1,\dots,e_n)$ follows from the fact that, by definition, $\rho(X)$ takes values in $\sat$.

\prfcase{$A\imp B         $}
Let $t$, $t'$ be two terms such that  $(t';\s\prim)\in\tvr{A\to B}$ 
and $\steps{\s}{\s\prim}{t}{t'}$ for some states $\s,\s\prim$.
Let  $(u;\s\prim)\in\tvr{A}$. We have that
$\steps{\s}{\s\prim}{t\,u}{t'\,u}$, which by definition belongs to $\tvr{B}$.
We conclude the result by the induction hypothesis for $B$. The same proof applies to the case $\bvto {\nat e} A$.

\prfcase{$A_1\land A_2         $}
Let $t$, $t'$ be two terms such that $(t';\s\prim)\in\tvr{A_1\land A_2}$ 
and $\steps{\s}{\s\prim}{t}{t'}$ for some states $\s,\s\prim$.
For any $i\in\{1,2\}$, we have that
$\steps{\s}{\s\prim}{\pi_i(t)}{\pi_i(t')}$, which by definition belongs to $\tvr{A_1}$.
We conclude the result by the induction hypothesis for $A_i$. 
The proof for the case $A_1 \lor A_2$ is analogous.

\prfcase{$\forall x.A     $}
Let $t$, $t'$ be two terms such that $(t';\s\prim)\in\tvr{\forall x. A}$ 
and $\steps{\s}{\s\prim}{t}{t'}$ for some state $\s,\s\prim$. 
By definition, for any $f\in\N^\S$, it holds that  $(t';\s\prim)\in\tvS{A}_{\rho,x\mapsto f}$.
Hence by the induction hypothesis for $A$, we get that $(t;\s)\in\tvS{A}_{\rho,x\mapsto f}$.
This being true for any $f\in\N^\S$, we deduce that $(t;\s)\in\tvr{\forall x.A}$.
The cases for the other quantifiers are similar. \qedhere
}

\end{proof}

\begin{prop}\label{r:cong_states}
 If $A$ and $A'$ are two formulas of HA2 such that $A \cong A'$, then for all valuations $\rho$ 
closing both $A$ and $A'$ we have $\tvr{A} = \tvr{A'}$.
\end{prop}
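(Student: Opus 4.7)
The plan is to mirror the proof of \Cref{r:cong}, adapting it to the stateful interpretation of \Cref{def:real_slices}. I proceed by induction on the derivation of $A \cong A'$. The congruence closure rules (reflexivity, symmetry, transitivity, and congruence under each logical connective) go through directly from the induction hypothesis, since each clause of the stateful interpretation is defined as a set-theoretic operation on the truth values of its subformulas. Thus the only substantial work is in the two base cases generated by \autoref{eq:congruences}.

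For the first-order base case $(\exists x. A) \imp B \cong \forall x.(A \imp B)$, I would fix a valuation $\rho$ closing both formulas (so in particular $x \notin \FV(B)$) and chase the definitions. For any $(t;\s) \in \Lambda \times \S$, membership in $\tvr{(\exists x. A) \imp B}$ unfolds to: for every $u$ with $(u;\s) \in \bigcup_{f\in\N^\S} \tvS{A}_{\rho,x\gets f}$, one has $(t\,u;\s) \in \tvr{B}$. Expanding the union, this is equivalent to: for every $u$ and every $f \in \N^\S$, if $(u;\s) \in \tvS{A}_{\rho,x\gets f}$ then $(t\,u;\s) \in \tvr{B}$. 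Since $x \notin \FV(B)$, we have $\tvr{B} = \tvS{B}_{\rho,x\gets f}$ for every $f$, so swapping the two universal quantifications yields precisely $(t;\s) \in \bigcap_{f\in\N^\S} \tvS{A \imp B}_{\rho,x\gets f} = \tvr{\forall x.(A \imp B)}$.

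The second-order base case is handled by the same calculation, replacing the range $\N^\S$ by $\N^k \to \sat$ and the variable $x$ by $X$. Since the argument for each case is a straightforward chain of equalities, I would present it as a single display analogous to the one in the proof of \Cref{r:cong}, simply carrying along the state $\s$ throughout.

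The main—and really the only—subtlety is bookkeeping: one must keep track of the state parameter $\s$ in each clause and verify that it plays no active role in this particular argument, since the congruence is a purely set-theoretic manipulation of unions and intersections indexed by interpretations of the bound variable. Because realizers in $\tvr{B}$ do not depend on the value assigned to $x$ (resp.~$F$ assigned to $X$), the commutation of the two quantifiers is unchanged by the presence of $\s$, and statefulness introduces no new obstacle.
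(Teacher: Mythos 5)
Your proposal is correct and follows essentially the same route as the paper's own proof: an induction on the derivation of $A \cong A'$, with the substantial case being the first-order implication congruence, handled by a purely set-theoretic unfolding of the definition and a quantifier swap (the second-order and value-restriction cases being analogous). Your explicit observations that $x\notin\FV(B)$ licenses $\tvr{B}=\tvS{B}_{\rho,x\gets f}$ and that the state parameter $\s$ is carried along inertly are both correct and merely make explicit what the paper leaves implicit.
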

\begin{proof} 
The proof, by induction on $A\cong A'$, is similar to the proof of \autoref{r:cong}.
Congruence easily goes through by induction, and again we have
\[\begin{array}{rl}
\tvr{(\exists x. A) \imp B}
& = \{(t;\s)\in\Lambda\times\S: \forall u. (u;\s) \in \tvr{\exists x. A} \limp                      (t\,u;\s) \in \tvr{B}\} \\
& = \{(t;\s)\in\Lambda\times\S: \forall u. (u;\s) \in \bigcup_{n\in\N}\tv{A}_{\rho,x\gets n}\limp  (t\,u;\s) \in \tvr{B}\} \\
& = \{(t;\s)\in\Lambda\times\S: \forall u. (\exists n.(u;\s)\in\tv{A}_{\rho,x\gets n}) \limp   (t\,u;\s) \in \tvr{B}\} \\
& = \{(t;\s)\in\Lambda\times\S: \forall u, n. ((u;\s)\in\tv{A}_{\rho,x\gets n} \limp   (t\,u;\s) \in \tvr{B})\} \\
& = \bigcap_{n\in\N}\{(t;\s) : \forall u. (u;\s) \in\tv{A}_{\rho,x\gets n} \limp               (t\,u;\s) \in \tvr{B}\} \\
& = \tvr{\forall x.( A \imp B).} 
\end{array}\]
The proofs for second-order quantifiers and value restrictions are analogous.
\end{proof}

In order to prove the adequacy theorem in this setting we need to adapt a few definitions.

\begin{defi}\label{def:adequate_state}
Given a context $\Gamma$, a state $\s$ and a valuation $\rho$ closing the formulas in $\Gamma$, 
we say that a substitution~$\sigma$ \emph{realizes}~$\rho(\Gamma)$ in the state $\s$ and 
write $(\sigma;\s) \real \rho(\Gamma)$ if $\dom(\rho(\Gamma))\subseteq \dom(\sigma)$ and
$(\sigma(x);\s)\in \tvr{A}$, for every declaration $(x:A)\in\Gamma$.
\end{defi}

\begin{defi}
 We say that a typing judgement $\TYP{\Gamma}{t}{A}$ is \emph{adequate} w.r.t.\ a state $\s$
 in the stateful system  if for any valuation $\rho$ closing $A$ and $\Gamma$ and any 
  substitution $(\sigma;\s)\real\rho(\Gamma)$ we have $(\sigma(t);\s) \in \tv{\rho(A)}$. 
  An inference rule is adequate w.r.t.\ a state $\s$
  if the adequacy (w.r.t.\ $\s$) of all its premises
  implies the adequacy (w.r.t.\ $\s$) of its conclusion.
\end{defi}

We are now able to show that, with the exception of the {\fedrule}/{\exidrule}-rules, 
our logical rules are adequate. 
The {\fedrule}/{\exidrule}-rules are shown to be adequate, for internal formulas only, in \autoref{r:elimination}.
The status of natural numbers will be investigated in \Cref{s:natural}.

\begin{thm}[Adequacy]\label{r:state_adequacy}
  The logical rules of \autoref{fig:HA2_types}, except the {\fedrule}/{\exidrule}-rules, are adequate.
\end{thm}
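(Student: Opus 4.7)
The plan is to proceed by induction on the typing derivation, case-splitting on the final rule applied. For each rule I would fix a valuation $\rho$ closing all relevant formulas, a state $\s\in\S$, and a substitution $(\sigma;\s)\real\rho(\Gamma)$, and then show $(\sigma(t);\s)\in\tvr{A}$ for the judgement in the conclusion. The backbone of each case is the same as in the effect-free Theorem \ref{r:Adequacy}, but every anti-reduction step must now be phrased with the stateful reduction $\reds{\s}{\s'}$, using that truth values are closed under it by Proposition \ref{r:real_sat}.

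Concretely: for $\axrule$ the membership is immediate from Definition \ref{def:adequate_state}. For $\impirule$, given $(u;\s)\in\tvr{A}$, the reduction $\steps{\s}{\s}{(\lambda x.\sigma(t))\,u}{\sigma(t)[u/x]}$ leaves the state untouched, so saturation together with the induction hypothesis applied to $(\sigma,x:=u;\s)$ yields $((\lambda x.\sigma(t))\,u;\s)\in\tvr{B}$. The rule $\imperule$ is a direct unfolding of the clause for $A\imp B$. For $\andirule$ the reductions $\steps{\s}{\s}{\pi_i(t_1,t_2)}{t_i}$ combined with Proposition \ref{r:real_sat} suffice, and $\andeurule/\andedrule$ are immediate by definition. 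For $\fiurule$, because $x\notin\FV(\Gamma)$ the set $\tvr{\rho(\Gamma)}$ is unchanged when $\rho$ is extended by $x\gets f$ for any $f\in\N^\S$, so the induction hypothesis applies uniformly and gives $(\sigma(t);\s)\in\bigcap_{f\in\N^\S}\tvS{A}_{\rho,x\gets f}=\tvr{\forall x.A}$; the rule $\fidrule$ is handled symmetrically over $F:\N^k\to\sat$, using the freshness hypothesis on $X$. The rule $\feurule$ follows because $\foint{n}_\rho\in\N^\S$ is one of the functions over which the defining intersection is taken, and $\exiurule$ is its dual with the union. Finally, the congruence rule is discharged by Proposition \ref{r:cong_states}.

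The reason the statement excludes $\fedrule$ and $\exidrule$ is where the only real subtlety lies: instantiating a second-order quantifier via $X(x_1,\dots,x_n):=B$ requires exhibiting the map $(n_1,\dots,n_k)\mapsto\tvS{B}_{\rho,x_1\gets n_1,\dots,x_k\gets n_k}$ as an element of $\N^k\to\sat$, but in the stateful interpretation predicate arguments live in $\N^\S$ (and are applied via Definition \ref{def:application}), while natural numbers must first be embedded as constant individuals $n^\inj$; when $B$ is external, or more generally state-sensitive, there is no canonical way to present $\tvS{B}$ as a function on $\N^k$ alone. I would therefore set these two cases aside, to be recovered only for internal $B$ in the announced Proposition \ref{r:elimination}. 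Apart from this point, the main obstacle in writing the proof is bookkeeping: every contraction must be checked against the operational rules of \Cref{s:slices}, so that whenever a typing rule carries no stateful effect the invocation of saturation stays within a single slice $\s$, and the state transitions introduced by $\get/\set$ never interfere with the cases above.
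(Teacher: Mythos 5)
Your proposal is correct and follows essentially the same route as the paper's proof: a case analysis over the logical rules, using saturation (Proposition \ref{r:real_sat}) for anti-reduction within a fixed state, embedding $n$ as $n^\inj$ for $\feurule$/$\exiurule$, appealing to the freshness side conditions for $\fiurule$/$\fidrule$, and invoking Proposition \ref{r:cong_states} for congruence, with the exclusion of $\fedrule$/$\exidrule$ justified exactly as in Remark \ref{rmk:pred}. The only cosmetic difference is that you phrase it as an induction on derivations rather than directly verifying each rule's adequacy in the sense of Definition \ref{def:adequate_state}, but these amount to the same thing.
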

\begin{proof}
The proof, by case analysis, is essentially the same as the usual adequacy proof for HA2, 
since none of the instructions involved in the typing rules allows to  
change the value of the state.

In each case, we write $\Gamma$ for the typing context, $\rho$ for a valuation
closing all the considered formulas, $\s$ for the considered state 
and $\sigma$ for a substitution such that $(\sigma;\s)\real\rho(\Gamma)$.

%
%
%

\prfcase{\axrule  }
Directly from the assumption that $(\sigma;\s)\real\rho(\Gamma)$.

\prfcase{\impirule}
By assumption, for any substitution $\sigma'$ such that $(\sigma';\s)\real \rho(\Gamma),x:\rho(A)$,
we have that $(\sigma(t);\s)\in\tvr{B}$.
We have to prove that $(\lambda x.\sigma(t);\s)\in \tvr{A \to B}$.
Let then $u$ be a term such that $(u;\s)\in\tvr{A}$. By definition, we have
$\lambda x.\sigma(t)\,u ~\reds {\s} {\s}~\sigma(t)[u/x]$. 
Since $\sigma(t)[u/x]=(\sigma,x:=u)(t)$ and $(\sigma,x:=u;\s)\real \rho(\Gamma,x:A)$,
we obtain $(\sigma(t)[u/x];\s)\in\tvr{B}$. We conclude that
$(\lambda x.\sigma(t)\,u;\s)\in\tvr{B}$ by anti-reduction.

\prfcase{\imperule}
By assumption,
we have that $(\sigma(t);\s)\in\tvr{A\to B}$
and $(\sigma(u);\s)\in\tvr{A}$.
By the definition of $\tvr{A\to B}$, we obtain that $\sigma(t\,u)=\sigma(t)\,\sigma(u)\in\tvr{B}$.

\prfcase{\andirule}
By assumption,
 $(\sigma(t_1);\s)\in\tvr{A_1}$
and $(\sigma(t_2);\s)\in\tvr{A_2}$.
For any $i\in\{1,2\}$ we have that
$\pi_i(\sigma(t_1,t_2))=\pi_i(\sigma(t_1),\sigma(t_2)) \reds \s \s \sigma(t_i)$,
where the latter belongs to $\tvr{A_i}$. Then, by anti-reduction  $\pi_i(\sigma(t_1,t_2))\in\tvr{A_i}$
and hence $\sigma(t_1,t_2)\in\tvr{A_1\land A_2}$.

\prfcase{\andeurule}
By assumption, we have that $(t;\s)\in\tvr{A_1\land A_2}$,
which entails by definition that $(\pi_1(t);\s) \in\tvr{A_1}$.
The case  {\andedrule} is similar.

\prfcase{\exiurule}
By assumption, there exists $n\in\N$ such 
that $(t;\s)\in\tvr{A[x:=n]}=\tv{A}_{\rho,x\gets n^\inj}$. 
The result then follows from the fact that $\tv{A}_{\rho,x\gets n^\inj}\subseteq \bigcup_{f\in\N^\S}\tv{A}_{\rho,x\gets f}=\tvr{\exists x.A}$.

\prfcase{\feurule }
By assumption, $(t;\s)\in\bigcap_{f\in\N^\S}\tv{A}_{\rho,x\gets f}$. 
The result  follows directly from the fact that $\bigcap_{f\in\N^\S}\tv{A}_{\rho,x\gets f}\subseteq\tv{A}_{\rho,x\gets n^\inj} $.

\prfcase{\fiurule }
By assumption,  $(t;\s)\in\tv{A}_{\rho'}$ for any valuation $\rho'$ closing $A$
which, since $x$ does not occur in  $\Gamma$, can freely map $x$ to any individual in $\N^\S$.
In other words, $(t;\s)\in\bigcap_{f\in\N^\S}\tv{A}_{\rho,x\gets f}$.
The case for {\fidrule} is similar.

\prfcase{\autorule{\cong}}
Directly from \autoref{r:cong_states}. \qedhere

\end{proof}

\begin{rem}\label{rmk:pred}

  Let us explain why the \fedrule-rule is not adequate in general (the same argument
  applies to the {\exidrule}-rule). 
  As emphasized at the beginning of this section, we interpret predicates by functions
  from $\N^k$ to ${\sat}$,
  while the truth values of formulas may vary in the set of functions from $(\N^\S)^k$ to $\sat$. 
    \autoref{r:glueing} will make this more precise: internal formulas correspond to functions
  from $\N^k$ to ${\sat}$ while external formulas correspond to functions 
  from $(\N^\S)^k$ to ${\sat}$. 
  Therefore, in general we cannot substitute a second-order variable by any formula.
  Indeed, in the second-order elimination rule (for universal quantifiers) variables can only 
  be instantiated by internal formulas.
  Moreover, if the formula $B$ that we want to substitute is a proposition 
  (i.e.\ if its arity $k$ is equal to $0$), then the substitution is valid since the
  interpretations of internal and external formulas coincide.   
 This means that we could have chosen to work with impredicative encodings of the 
 conjunction (or other connectives) as in the Russell-Prawitz translation~\cite{Prawitz65}.
 Indeed, such an encoding relies on the use
of propositions, which are thus compatible with the elimination rule:

\[A \land B ~\defeq~ \forall X.(A \imp B \imp X) \imp X. \]
\end{rem}
\begin{rem}\label{rmk:elim}
We would like to attract the reader's attention to the fact that our realizability interpretation
is grounded in the elimination rules for the connectives. 
While this choice may not be so meaningful in a pure intuitionistic setting,
here the fact that our realizers may perform some effectful computations
makes this choice relevant.
Indeed, the other possibility would have been to require from a realizer
of $A\land B$ to be a term reducing to a pair of realizers, forcing
the effectful computations to be done  ``right away'', which could in particular
make the value of the state evolve. 
In turn, our definition delays such computations further, which allows us to reason within
the same state before eventually reducing the term (and thus performing the effect). 
This technicality turns out to be crucial in some proofs in the sequel, in particular
for defining the realizer of LLPO$^{\mathrm{st}}$ in \Cref{s:LLPO}.
\end{rem}

We show that $\rec$ realizes a formula that emulates its former typing rule by using quantifiers relativized with a value restriction.
\begin{prop}\label{r:rec_bv} 
We have $\rec\ureal {\forall X.X(0)\to \fabv{x}.(X(x) \to X(S(x))) \to \fabv{x}.X(x)}$.

\end{prop}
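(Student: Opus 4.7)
The plan is to unfold the realizability interpretation of $\fabv x. X(x)$ and reduce the claim to an induction on a plain natural number. Fix an arbitrary state $\s\in\S$ and a truth value function $F:\N\to\sat$ (interpreting $X$). By value-restricted unfolding, $(\rec;\s)$ realizes the stated formula exactly when, for every $(u_0;\s)\in F(0)$ (at slice $\s$) and every $(u_1;\s)\in\tv{\forall x.(\bvto{\nat x}(X(x)\to X(S(x))))}_{X\gets F}$ at slice $\s$, and for every $f\in\N^\S$, one has $(\rec\,u_0\,u_1\,\overline{f(\s)};\s)\in F(f(\s))$. Since $f$ may be chosen to be the constant individual $n^\inj$ for any $n\in\N$, the problem reduces to showing that for every $n\in\N$,
\[(\rec\,u_0\,u_1\,\overline{n};\s)\in F(n)\quad\text{at slice }\s.\]

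The proof then proceeds by induction on $n\in\N$. For $n=0$, the reduction $\rec\,u_0\,u_1\,0\reds{\s}{\s}u_0$ is a pure $\beta$-style step that does not touch the state, so by anti-reduction (\Cref{r:real_sat}) the desired membership follows from $(u_0;\s)\in F(0)$. For the successor case, observe that $\overline{n+1}=\suc\,\overline n$ is already a value, so we have $\rec\,u_0\,u_1\,\overline{n+1}\reds{\s}{\s}u_1\,\overline n\,(\rec\,u_0\,u_1\,\overline n)$, again without changing $\s$. By the induction hypothesis, $(\rec\,u_0\,u_1\,\overline n;\s)\in F(n)$.

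Next I instantiate the hypothesis on $u_1$ with the constant individual $f=n^\inj\in\N^\S$: unfolding $\fabv x.(X(x)\to X(S(x)))$ at this $f$ and slice $\s$, and using that $f(\s)=n$ together with the value-restricted application clause, yields $(u_1\,\overline n;\s)\in F(n)\to F(S(n))$ at slice $\s$. Applying this to the recursive realizer from the inductive hypothesis gives $(u_1\,\overline n\,(\rec\,u_0\,u_1\,\overline n);\s)\in F(n+1)$, and one final anti-reduction step closes the case.

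The only delicate point—and the reason the statement is phrased with the value-restricted quantifier $\fabv x$ rather than the ordinary one—is that the successor reduction of $\rec$ fires only when its third argument is already the literal numeral $\overline n$; this is precisely what value restriction guarantees, since the clause for $\bvto{\nat x}A$ feeds the canonical numeral $\overline{f(\s)}$ into the realizer. Beyond that, everything is routine: the $\rec$ reductions are state-preserving, so the slice $\s$ is fixed throughout the induction, and no subtle interaction with stateful effects arises.
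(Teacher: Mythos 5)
Your proof is correct and follows essentially the same route as the paper's: fix a state, a truth-value function, and the three hypothesis realizers, reduce (via the definition of the application operator $@$) to showing $\rec\,u_0\,u_1\,\overline n \in F(n)$ at slice $\s$ for each $n\in\N$, and then induct on $n$ using the two reduction rules for $\rec$ together with anti-reduction, instantiating the value-restricted inductive-step hypothesis at the constant individual $n^\inj$. Your closing remark about why value restriction is needed—the $\rec$ successor rule fires only on a literal numeral—captures exactly the concern the paper raises in the remark following the proposition. One small imprecision in wording: the justification ``since $f$ may be chosen to be the constant individual $n^\inj$'' argues the wrong direction; the reduction to arbitrary $n\in\N$ holds because, for any $f\in\N^\S$ and the fixed $\s$, only $n=f(\s)$ enters (via $F@(f)$ restricted to slice $\s$), not because we may restrict attention to constant $f$'s. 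The logical content is unaffected.
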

\begin{proof}
Let $\X:\N\to\sat$ be a predicate, $\s\in\S$ be a state, $f\in\N^\S$ be a natural number, 
 $u_0$ and $u_S$ be terms and $V$ be a value such that
 \begin{itemize}
  \item[-] $(u_0;\s)\in \X(0)$, 
  \item[-] $(u_S;\s)\in \tvS{\far y.(X(y)\imp X(S(y)))}_{X\mapsto \X}$

 \item[-] $(V;\s)\in\tvS{\nat{f}}$.
 \end{itemize}
The latter implies that $V=\suc^n 0$ where $n=f(\s)$.
Besides, recall that by definition we have $\tvS{X(f)}_{X\mapsto \X}=\X@(f)=\{(t;\s)\in \X(f(\s))\}=\X(n)$.
Let us prove, by induction on $n$, that  $\rec\,u_0\,u_S\,\overline{n}\in \X(n)$.
\begin{itemize}
\item If $n=0$, then we have that $\rec\,u_0\,u_S\,t \reds{s}{s}\rec\,u_0\,u_S\,\overline 0 \reds{s}{s} u_0$, the result
follows by anti-reduction from the hypothesis on $u_0$.
\item If $n=S(m)$, then we have that $\rec\,u_0\,u_S\,(\suc\, \overline m)  
\reds{s}{s} u_S\,\overline m\,(\rec\,u_0\,u_S\,\overline m)$.
By induction hypothesis, we have that $(\rec\,u_0\,u_S\,\overline m;\s)\in \X(m)$.
The result thus follows (by anti-reduction) from the hypothesis on $u_S$.
\qedhere
\end{itemize}
\end{proof}

\begin{rem}Regarding the necessity of restricting the relativization of quantifiers to values,
the proof of \autoref{r:rec_bv} is enlightening.
{Indeed, if instead of a value $V$ we were only given a term in $\tvrs{\nat{f}}$,
by definition this term may change the value of the state, say to some $\s'$,
before reducing to the value of $f(\s')$. 
This would break the proof since nothing is assumed on the realizers
$u_0$ and $u_S$ in this new state $\s'$.}
\end{rem}

\subsection{Glueing}
\newcommand{\trunc}[1]{\overline{#1}^\s}

An important property of our interpretation (which also reflects
a similar property in the Lightstone-Robinson construction) is that the interpretation
of internal formulas can be decomposed as the product of its slices (\autoref{r:glueing}).
In other words, internal formulas can only access information in the 
current state. In particular, and as expected, this means that it is
impossible to express standardness by means of internal formulas. 
To state this formally, we first define the restriction of formulas 
and truth values with respect to a slice.
\begin{defi}
Given an internal formula $A$, we define $\trunc{A}$ as the formula whose
individuals are all applied in $\s$.
Formally, it amounts to replacing each individual by the standard individual 
with which it coincides in the state $\s$:

 \[\begin{array}{r@{~}c@{~}l}
  \trunc{{X}(e_1 , . . . , e_k )} & \defeq & X((e_1(\s))^\inj, \ldots, (e_k(\s))^\inj ) \\[2mm]
  \trunc{A\imp B } & \defeq & \trunc{A}\imp\trunc{B} \\[2mm]
  \trunc{\bvnat e B } & \defeq & \bvnat {(e(\s))^\inj} {\trunc{B}} \\
  \end{array}
~~\vrule~~
 \begin{array}{r@{~}c@{~}l}
  \trunc{A \land B } & \defeq & {\trunc{A}} \land {\trunc{B}} \\[2mm]

  \trunc{\forall x.A }    &\defeq & \forall x .\trunc{A} \\
  \end{array}
~~\vrule~~
 \begin{array}{r@{~}c@{~}l}
   \trunc{\exists x.A }    &\defeq & \exists x .\trunc{A} \\[2mm]
  \trunc{\forall X.A }    &\defeq & \forall X .\trunc{A} \\[2mm]
  \trunc{\exists X.A }    &\defeq & \exists X .\trunc{A} \\
   \end{array}\]
 \end{defi}

\begin{thm}[Glueing]\label{r:glueing}
 For any internal formula $A$ and valuation $\rho$ closing $A$,
 we have that $(t;\s)\in\tvr{A} \Leftrightarrow t\in\tvrs{\trunc{A}}$.
\end{thm}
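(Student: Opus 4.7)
The plan is to proceed by structural induction on the internal formula $A$, noting that the hypothesis of internality is used solely to rule out the $\st{\cdot}$ base case (whose truth value $\Lambda \times \S$ or $\emptyset$ is inherently non-local in $\s$). The atomic case $A = X(e_1,\ldots,e_k)$ does all of the real work. The key observation driving it is that, for any individual $f \in \N^\S$ and any state $\s$, the standard individual $(f(\s))^\inj$ agrees with $f$ at slice $\s$ by construction; hence, by the definition of application, $\rho(X)@(\foint{e_1}_\rho, \ldots, \foint{e_k}_\rho)$ and $\rho(X)@((\foint{e_1}_\rho(\s))^\inj, \ldots, (\foint{e_k}_\rho(\s))^\inj)$ contain exactly the same pairs of the form $(t;\s)$. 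This immediately yields the equivalence for $X(e_1,\ldots,e_k)$ and $\trunc{X(e_1,\ldots,e_k)}$ and is precisely the place where the glueing phenomenon enters.

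For the inductive cases for the connectives, the truth-value clauses defining $\tvr{B \imp C}$, $\tvr{B \land C}$, and $\tvr{\bvto{\nat e} B}$ only involve conditions evaluated at the same state $\s$, so I would unfold each definition and apply the IH to the subformulas. For example, $(t;\s) \in \tvr{B \imp C}$ iff for every $u$ with $(u;\s) \in \tvr{B}$ we have $(t\,u;\s) \in \tvr{C}$, which by IH becomes: for every $u \in \tvrs{\trunc{B}}$, $t\,u \in \tvrs{\trunc{C}}$, i.e., $t \in \tvrs{\trunc{B} \imp \trunc{C}}$. The case $\bvto{\nat e} B$ uses additionally that $\foint{(e(\s))^\inj}_\rho(\s) = \foint{e}_\rho(\s)$, so that the numeral $\overline{n}$ fed to $t$ on each side is the same.

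For the quantifier cases, I would unfold $(t;\s) \in \tvr{\forall x. A}$ as $\forall f \in \N^\S.\,(t;\s) \in \tv{A}_{\rho, x \gets f}$, apply the IH to $A$ under the extended valuation $\rho, x\gets f$ to rewrite each condition as $t \in \tv{\trunc{A}}_{\rho, x\gets f, \s}$, and then recognize the resulting intersection over $f$ as $t \in \tvrs{\forall x. \trunc{A}} = \tvrs{\trunc{\forall x.A}}$ via the defining clause $\trunc{\forall x.A} = \forall x.\trunc{A}$. The cases for $\exists x$, $\forall X$, and $\exists X$ are handled identically. The main (and essentially only) subtle point I expect to have to spell out is that applying the IH under the extended valuation $\rho, x\gets f$ is consistent with the syntactic rule that $\trunc{\cdot}$ commutes with binders — that is, the truncation of $A$ under $\rho, x\gets f$ at $\s$ is exactly what is interpreted when $\forall x.\trunc{A}$ is evaluated under $\rho$ at $\s$ with $x$ instantiated by $f$. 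Beyond this bookkeeping, the argument is a routine unfolding of the state-local definitions.
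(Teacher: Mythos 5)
Your proposal is correct and follows essentially the same route as the paper's proof: structural induction on $A$, with the atomic case $X(e_1,\dots,e_k)$ handled directly from the definition of the application $\rho(X)@(\cdot)$ (which indeed forces agreement at slice $\s$), and the remaining cases dispatched by unfolding the state-local clauses and invoking the induction hypothesis, including under the extended valuation for the quantifiers. The subtlety you flag about $\trunc{\cdot}$ commuting with binders is exactly the kind of bookkeeping the paper elides as well, so there is no gap relative to the paper's argument.
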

\begin{proof}

The proof is by induction on the structure of $A$. 
{
\prfcase{ $X(e_1 , . . . , e_k )$} 

By Definition~\ref{def:application}, we have
\begin{align*}
 (t;\s) \in \tvrs{X(e_1,...,e_k)} 
    &~~\lequiv~~ (t;\s) \in \rho(X)@(\foint{e_1}_{\rho},...,\foint{e_k}_{\rho})\\
    &~~\lequiv~~ t\in (\rho(X))_\s(\foint{e_1}_{\rho}(\s),\ldots,\foint{e_k}_{\rho}(\s))\\
    &~~\lequiv~~ t\in \tvrs{X((e_1(\s))^\inj, \ldots, (e_k(\s))^\inj )} ~~\lequiv~~ t\in \tvrs{\trunc{X(e_1,...,e_k)}}.
\end{align*}

\prfcase{$A\to B$}
We have
\begin{align*}
(t;\s) \in \tvr{A \to B} 
&~~\Leftrightarrow~~ \forall (u;\s) \in \tvr{A}. (t\,u;\s) \in\tvr{B} \\
&~~\stackrel{\mathrm{(HI)}}{\Leftrightarrow}~~  \forall u \in \tvrs{\trunc{A}}. t\,u \in\tvrs{\trunc{B}}
~~\Leftrightarrow ~~t \in \tvrs{\trunc{A \to B}} .
\end{align*}

\prfcase{$\bvnat e B$} The proof is similar to the case $A \to B$.
\begin{align*}
(t;\s) \in \tvr{\bvnat e B} 
&~~\Leftrightarrow~~  (t\,\overline{n};\s) \in\tvr{B} \text{ where }  n=e(\s) \\
&~~\stackrel{\mathrm{(HI)}}{\Leftrightarrow}~~  
(t\,\overline{n};\s) \in\tvrs{\trunc B} \text{ where }  n=e(\s)
~~\Leftrightarrow ~~t \in \tvrs{\trunc{\bvnat e B}}. 
\end{align*}

\prfcase{$A_1\land A_2$}
We have
\begin{align*}
(t;\s) \in \tvr{A_1\land A_2}
&~~ \Leftrightarrow~~ (\pi_1(t);\s) \in \tvr{A_1} \land (\pi_2(t);\s)\in\tvr{ A_2}\\
&~~\stackrel{\mathrm{(HI)}}{\Leftrightarrow}~~  (\pi_1(t);\s) \in \tvr{\trunc{A_1}} \land (\pi_2(t);\s)\in\tvr{ \trunc{A_2}}\\
&~~ \Leftrightarrow~~ (t;\s) \in \tvr{\trunc{A_1\land A_2}}.
\end{align*}

\prfcase{$\forall x.A$}
We have
\begin{equation*}
(t;\s)\in \tvr{\forall x.A} 
~~\Leftrightarrow ~~\forall f \in \N^\S. (t;\s) \in \tvr{A}
~~\stackrel{\mathrm{(HI)}}{\Leftrightarrow}~~
\forall f \in \N^\S.t \in \tvrs{\trunc{A}}
~~\Leftrightarrow ~~
t \in \tvrs{\trunc{\forall x.A}}.    
\end{equation*}

\prfcase{$\exists x.A$}
We have
\begin{equation*}
(t;\s)\in \tvr{\exists x.A} 
~~\Leftrightarrow~~ \exists f \in \N^\S .(t;\s) \in \tvr{A}
~~\stackrel{\mathrm{(HI)}}{\Leftrightarrow}~~ \exists f \in \N^\S .t\in \tvrs{\trunc{A}}
~~\Leftrightarrow~~ t \in \tvrs{\trunc{\exists x.A}}.
\end{equation*}

The cases of the second-order quantifiers are similar to the corresponding first-order quantifiers. \qedhere
}

 \end{proof}

Let $B(x)$ be a formula whose only free variable is $x$, and $\rho$ a valuation closing $B$.
In general, the function $\F_B$ that associates to any individual $f$
the truth value $\tvr{B(f)}$ is a function from $\N^\S$ to $\sat$.
If $B$ is internal, by the glueing theorem, 
to determine $\F_B$ it is enough to know its value for standard individuals. 
This means that we only need to know a function from $\N$ to $\sat$.
As such, we can now formally state the intuition developed in \autoref{rmk:pred}.
\begin{prop}\label{r:elimination} 
The elimination  rule for the 2{\footnotesize{nd}}-order universal quantification and the  introduction rule for the 2{\footnotesize{nd}}-order existential quantification

\twoelt{
\infer[\fedrule]{\Gamma\vdash t : A[X(x_1,...,x_k):=B]}{\Gamma \vdash t:\forall X.A}}{
\infer[\exidrule]{\TYP{\Gamma}{ t}{\exists X.A       }}{\TYP{\Gamma}{t}{A[X(x_1,\dots,x_n):=B]}}
}
are adequate for any internal formula $B$ whose only free variables are $(x_1,...,x_k)$.

\end{prop}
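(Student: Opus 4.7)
The plan is to exploit the Glueing Theorem (\autoref{r:glueing}) to associate to the internal formula $B$ a bona fide predicate $\F_B \in \N^k\to\sat$, and then to deduce both rules from a substitution lemma relating $\tv{A}_{\rho,X\gets\F_B}$ with $\tvr{A[X(x_1,\dots,x_k):=B]}$.

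First I would fix a valuation $\rho$ closing both $A$ and $B$ and define
\[ \F_B(n_1,\dots,n_k) ~\defeq~ \tv{B}_{\rho,x_1\gets n_1^\inj,\dots,x_k\gets n_k^\inj}, \]
which is indeed in $\sat$ by \autoref{r:real_sat}. The key step is then to show that for every tuple of individuals $(f_1,\dots,f_k)\in(\N^\S)^k$ and every state $\s$,
\[ (t;\s)\in \tv{B}_{\rho,x_1\gets f_1,\dots,x_k\gets f_k} \quad\Longleftrightarrow\quad (t;\s)\in \F_B@(f_1,\dots,f_k). \]
Since $B$ is internal, both sides can be compared through \autoref{r:glueing}: the left-hand side equals $\tvrs{\trunc{B[x_1:=f_1,\dots,x_k:=f_k]}}$, and by definition of the truncation this is exactly $\tvrs{\trunc{B[x_1:=(f_1(\s))^\inj,\dots,x_k:=(f_k(\s))^\inj]}}$. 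Applying Glueing once more in the opposite direction, this coincides with $\tv{B}_{\rho,x_1\gets (f_1(\s))^\inj,\dots,x_k\gets (f_k(\s))^\inj}(\s) = \F_B(f_1(\s),\dots,f_k(\s))$, which by \autoref{def:application} is precisely $\F_B@(f_1,\dots,f_k)$ at $\s$.

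Next, I would prove by induction on $A$ the substitution lemma
\[ \tv{A}_{\rho,X\gets \F_B} ~=~ \tvr{A[X(x_1,\dots,x_k):=B]}. \]
All cases are purely routine and commute with the interpretation, the base case $X(e_1,\dots,e_k)$ being handled by the equality established above (noting that the $e_i$ are first-order expressions so their interpretations are individuals in $\N^\S$). The other atomic case $\st e$ is untouched by the substitution, and the inductive cases for $\imp,\land,\bvto{\nat e}\_,$ and both quantifiers follow directly from the induction hypothesis.

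From here, adequacy follows easily. For \fedrule, given $(\sigma;\s)\real\rho(\Gamma)$, the adequacy of the premise gives $(\sigma(t);\s)\in \tvr{\forall X.A}\subseteq \tv{A}_{\rho,X\gets \F_B}$, which by the substitution lemma is $\tvr{A[X(x_1,\dots,x_k):=B]}$. The rule \exidrule{} is dual: the hypothesis gives $(\sigma(t);\s)\in\tvr{A[X(x_1,\dots,x_k):=B]}=\tv{A}_{\rho,X\gets \F_B}\subseteq\bigcup_{F:\N^k\to\sat}\tv{A}_{\rho,X\gets F}=\tvr{\exists X.A}$. The main obstacle is isolating the first equivalence above: it is tempting to treat the substitution lemma purely algebraically, but the mismatch between individuals in $\N^\S$ and predicate arities over $\N$ (highlighted in \autoref{rmk:pred}) can only be bridged using Glueing, and this is exactly where internality of $B$ is essential — for an external $B$, $\F_B$ would genuinely depend on the full function $f_i$ and not only on $f_i(\s)$, so no such predicate $\N^k\to\sat$ can be produced.
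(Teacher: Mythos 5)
Your proof takes essentially the same route as the paper's: you define the same semantic predicate $\F_B(n_1,\dots,n_k)=\tv{B}_{\rho,x_i\gets n_i^\inj}$, establish the same substitution lemma $\tv{A}_{\rho,X\gets \F_B}=\tvr{A[X(\vec x):=B]}$ by induction with the atomic case $X(x_1,\dots,x_k)$ handled via a double application of Glueing and \autoref{def:application}, and then derive adequacy of both rules from the set inclusions. The only difference is cosmetic — you spell out the biconditional chain and the final adequacy arguments a bit more fully than the paper does.
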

\begin{proof}
This essentially follows from the glueing theorem and \autoref{def:application}.
Indeed, recall that by definition we have 
$\tvr{\forall X.A} = \bigcap_{F:\N^k\to\sat} \tvS{A}_{\rho,X\mapsto F}$.
Let us define the following function from $\N^k$ to $\sat$: 
\[ \F: (n_1,...,n_k) \mapsto  \tvr{B[x_1:=n^\inj_1,...,x_k:=n^\inj_k]} \]

We can prove by an easy induction on $A$
that $\tvS{A}_{\rho,X\mapsto \F}=\tvS{A[X(x_1,...,x_k):=B]}_{\rho}$,
from which the proposition follows trivially.
The only interesting case  is when $A\equiv X(x_1,...,x_n)$.
Let us write $f_1,...,f_k$ for $\foint{\rho(x_1)},...,\foint{\rho(x_k)}$.
We have:
\begin{align*}
\tvS{X(x_1,...,x_n)}_{\rho,X\mapsto \F}
&=~\F@(f_1,\ldots,f_k)\\
&=~\{(t;\s):t\in \F_\s(f_1(\s),\ldots,f_k(\s))\} \tag*{(by Def. \ref{def:application})}\\
&=~  \bigcup_{\s\in\S} \F_\s(f_1(\s),\ldots,f_k(\s))\times\{\s\} \\
&=~ \bigcup_{\s\in\S} \tvrs{\trunc{B[x_1:=f_1,...,x_k:=f_k}}\times\{\s\} \tag*{(by def. of $\F$)}\\
&=~ \tvr{B[x_1:=f_1,...,x_k:=f_k]} ~=~ \tvr{B} \tag*{(by Prop. \ref{r:glueing})}
\end{align*}
\end{proof}

\begin{rem}
 Observe that external formulas such as $\stto{x} \bot$  cannot be defined by glueing.
 Consider for instance a nonstandard element $\tau$.
 Then $\tvS{\stto{\tau} \bot}=\Lambda\times\S$, while for any state $\s\in\S$
 we have $\tvs{\trunc{\stto{\tau} \bot}}=\tvs{{\stto{\tau(\s)^\inj} \bot}}=\tvs{\top\to\bot}=\emptyset$.
 \end{rem}
 
It is well-known that the comprehension scheme $\CA_B~\defeq~ \exists X.\forall x.(X(x) \Leftrightarrow B)$
is a logical consequence of the elimination principle
$\Elim^B_A ~\defeq ~ (\forall X.A) \limp A[X(x):=B]$ (by taking $A=\exists Y.\forall x.(Y(x) \Leftrightarrow X(x))$).
Since we have the \fedrule-rule restricted to internal formulas $B$,
the comprehension scheme is also valid for these formulas. 
In particular, this implies Standardization for internal formulas,
i.e.\ for $B$ an internal formula, the following holds

\[\fas X.\exs Y.\fas z.(Y(z) \Leftrightarrow X(z) \land B(z)).\]

Of course, the comprehension scheme does not hold for external formulas, 
so the relativization on the quantifiers in Standardization is in this sense necessary.
We will come back to Standardization in \Cref{s:quotient}.

\subsection{The induced evidenced frame}
\label{s:ef}

Before studying the properties of this interpretation,
we shall connect it with the usual algebraic tools to deal 
with realizability interpretation, in order to better emphasize its structure and 
peculiarities. In recent work, Cohen \etal have been introducing a new framework 
to capture the algebraic structure of realizability interpretations, 
which they named \emph{evidenced frames}~\cite{CohMiqTat21}. 
These have the benefit of being generic enough to easily
encompass effectful interpretation, 
while uniformly inducing triposes (and thus toposes), 
hence a model of higher-order logic.
We show here how our interpretation fits the picture, hinting in particular
at the possibility to extend our interpretation to
deal with higher-order logic (which is out of the scope of this paper, as here we want to focus on the second-order fragment only).

We start by recalling the definition of evidenced frame.
\begin{defiC}[\cite{CohMiqTat21}]\label{def:ef}
An \emph{evidenced frame} is a triple $( \Phi, E,  \mbox{$\cdot \xle{\cdot} \cdot$} )$, where $\Phi$ is a set of propositions,~$E$ is a collection of evidence, and~\mbox{$\phi_1 \xle{e} \phi_2$} is a ternary evidence relation on $\Phi \times E \times \Phi$, along with the following:
\begin{samepage} 
\begin{description}[leftmargin=10pt,font= {\itshape \mdseries}]
\item[Reflexivity] There exists evidence~$\eid \in E$:
\begin{itemize}[leftmargin=*]
\item $\forall \phi.\; \phi \xle{\eid} \phi$
\end{itemize}
\item[Transitivity] There exists an operator~$\ecomp{\cdot}{\cdot} \in E \times E \to E$:
\begin{itemize}[leftmargin=*]
\item $\forall \phi_1, \phi_2, \phi_3, e, e'.\; \mbox{$\phi_1 \xle{e} \phi_2 \mathrel{\wedge} \phi_2 \xle{e'} \phi_3 \implies \phi_1 \xle{\ecomp{e}{e'}} \phi_3$}$
\end{itemize}
\item[Top] A proposition~$\top \in \Phi$ such that there exists evidence~$\etrue \in E$:
\begin{itemize}[leftmargin=*]
\item $\forall \phi.\; \phi \xle{\etrue} \top$
\end{itemize}
\item[Conjunction] An operator~\mbox{$\wedge \in \Phi \times \Phi \to \Phi$} such that there
exists evidence~$\efst, \esnd \in E$
and 
an operator~\mbox{$\epair{\cdot}{\!\cdot} \!\in\! E \!\times\! E \!\to\! E$}:\\
\begin{tabular}{@{}l@{~~\qquad}l}
$\bullet$ $\forall \phi_1, \phi_2.\; \phi_1 \wedge \phi_2 \xle{\efst} \phi_1$
&
$\bullet$ $\forall \phi, \phi_1, \phi_2, e_1, e_2.\; \mbox{$\phi \xle{e_1} \phi_1 \mathrel{\wedge} \phi \xle{e_2} \phi_2 \!\!\implies\!\! \phi \xle{\epair{e_1}{e_2}} \phi_1 \wedge \phi_2$}$\\
$\bullet$ $\forall \phi_1, \phi_2.\; \phi_1 \wedge \phi_2 \xle{\esnd} \phi_2$
\end{tabular}

\item[Universal Implication] An operator~$\efimp \in \Phi \times \power(\Phi) \to \Phi$ such that there exists an operator~\hbox{$\elambda{} \in E \to E$} and evidence~$\eeval \in E$:
\begin{itemize}[leftmargin=*]
\item $\forall \phi_1, \phi_2, \vec{\phi}, e.\; (\forall \phi \in \vec{\phi}.\; \phi_1 \wedge \phi_2 \xle{e} \phi) \implies \phi_1 \xle{\elambda{e}} \phi_2 \efimp \vec{\phi}$
\item $\forall \phi_1, \vec{\phi}, \phi \in \vec{\phi}.\; (\phi_1 \efimp \vec{\phi}) \wedge \phi_1 \xle{\eeval} \phi$
\end{itemize}
\end{description}
\end{samepage}
\end{defiC}

The definition of the evidenced frame induced by our stateful interpretation
better highlights its core structure.
First, as shown by the interpretation of second-order variables,
propositions are defined as truth values, that is as saturated sets
of terms-with-states.
Evidences, in turn, are defined as universal realizers, {\ie} $\lambda$-terms, 
with the corresponding evidence relation 
\[S_1\overset{t}\to S_2~~\defeq~~ t\ureal \forall (u;\s)\in S_1. (t\,u;\s)\in S_2
\eqno (S_1,S_2\in\sat, t\in\Lambda)
\]

\begin{prop}\label{p:evidenced}
 The tuple $(\sat,\Lambda,\cdot\overset{\cdot}\to\cdot$) defines
 an evidenced frame.
\end{prop}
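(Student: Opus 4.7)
The plan is to exhibit, for each piece of structure required by \Cref{def:ef}, an explicit $\lambda$-term and then verify the evidence relations by unfolding the definition of $\overset{\cdot}{\to}$ and using anti-reduction in the stateful system. All the candidate evidences are the usual ones from the combinatorial structure of the $\lambda$-calculus; since none of them mentions $\get$ or $\set$, every reduction step they trigger keeps the state fixed, which is the only point where the stateful setting could have interfered.

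For reflexivity and transitivity, I would take $\eid \defeq \lambda x.x$ and $\ecomp{t_1}{t_2} \defeq \lambda x.t_2\,(t_1\,x)$; transitivity uses saturation of the target truth value together with $\ecomp{t_1}{t_2}\,u \reds{\s}{\s} t_2\,(t_1\,u)$. For Top, set $\top \defeq \Lambda\times\S$ (trivially in $\sat$) and $\etrue \defeq \lambda x.x$, so the evidence is vacuous. For Conjunction, I would use the same definition as in \Cref{def:real_slices},
\[ S_1 \wedge S_2 \defeq \{(t;\s)\in\Lambda\times\S : (\pi_1(t);\s)\in S_1 \land (\pi_2(t);\s)\in S_2\}, \]
which is saturated by the same argument as in \Cref{r:real_sat}, and take $\efst \defeq \lambda x.\pi_1(x)$, $\esnd \defeq \lambda x.\pi_2(x)$, and $\epair{t_1}{t_2} \defeq \lambda x.(t_1\,x,t_2\,x)$, again using anti-reduction to push $\pi_i(\epair{t_1}{t_2}\,u) \reds{\s}{\s} t_i\,u$ back into the definition.

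For Universal Implication, the natural candidate is
\[ S_2 \efimp \vec{S} \defeq \{(t;\s)\in\Lambda\times\S : \forall (u;\s)\in S_2.\,\forall S\in\vec{S}.\,(t\,u;\s)\in S\}, \]
which is saturated as an intersection of sets of the same shape as $\tvr{A\imp B}$; together with $\elambda(t) \defeq \lambda x.\lambda y.t\,(x,y)$ and $\eeval \defeq \lambda x.\pi_1(x)\,\pi_2(x)$. For the introduction clause, given $(u;\s)\in S_1$ and $(v;\s)\in S_2$, anti-reduction yields $((u,v);\s)\in S_1 \wedge S_2$, so the hypothesis on $t$ gives $(t\,(u,v);\s)\in S$ for every $S\in\vec{S}$, and two more anti-reduction steps transfer this to $\elambda(t)\,u\,v$. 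The elimination clause is immediate: for $(w;\s)\in (S_1 \efimp \vec{S}) \wedge S_1$ we have $(\pi_1(w);\s)\in S_1\efimp\vec{S}$ and $(\pi_2(w);\s)\in S_1$, so $(\pi_1(w)\,\pi_2(w);\s)\in S$ for every $S\in\vec{S}$, and anti-reduction lifts this to $\eeval\,w$.

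The main obstacle is therefore essentially bookkeeping rather than mathematical: one has to be careful that every reduction performed by the candidate evidences is stateless, so that anti-reduction in the slice-indexed sense of \Cref{r:real_sat} is available at every step. Once this is noted, the verifications are short and entirely uniform across the clauses.
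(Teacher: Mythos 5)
Your proof uses the same propositional constructions ($\top = \Lambda\times\S$, the elimination-style $\wedge$, and $\efimp$ as a big implication into $\bigcap_{S\in\vec{S}}S$) and the same evidences ($\lambda x.x$, $\lambda x.t_2(t_1\,x)$, $\lambda x.(t_1\,x,t_2\,x)$ with the projections, $\lambda xy.t\,(x,y)$, $\lambda x.\pi_1(x)\,\pi_2(x)$) as the paper, verified in the same way by anti-reduction within a fixed slice. This is essentially the paper's own proof, with the observation that the evidences are state-free made explicit.
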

\begin{proof}
We give the evidence and constructors on propositions for each case. 
We mostly follow the realizability interpretation given in \autoref{def:real_slices}
\begin{description}[leftmargin=10pt,font= {\itshape \mdseries}]
\item[{Reflexivity}] It is clear that $\eid \defeq \lambda x.x \ureal S \to S$
for any $S\in\sat$.
\item[Transitivity]
For any $S_1, S_2, S_3\in\sat$ if $t_1\ureal S_1 \to S_2$ and $t_2\ureal S_2 \to S_3$, it is clear that $t_1;t_2 \defeq \lambda x.t_2\, (t_1\,x) \ureal S_1 \to S_3$.
\item[Top] Let $\top \defeq \Lambda\times\S\in\sat$. Then 
we have $\etrue \defeq \lambda x.x\ureal S\to\top$ for any $S\in\sat$.
\item[Conjunction]
Let
$ S_1\wedge S_2 \defeq \{(t;\s)\in\Lambda\times\S : (\pi_1(t);\s)\in S_1 \land (\pi_2(t);\s)\in {S_2} \big)\}$
for $S_1, S_2 \in\sat$. Then it is then straightforward to check that $\epair{e_1}{e_2}\defeq \lambda x.(e_1\,x,e_2\,x)$, where $\efst\defeq \pi_1$, 
$\esnd\defeq \pi_2$ define the expected evidences.
\item[Universal Implication] 
\newcommand{\myS}{\vec{S}}
For $S_1\in\sat$ and $\myS\in\P(\sat)$, 
we define the implication of propositions by
$S_1\efimp \myS \defeq 
\{(t;\s)\in\Lambda\times\S : \forall u. \big((u;\s)\in S_1 \limp (t\,u;\s) \in\bigcap_{S\in\myS}S\big)\}$.
Let $\lambda e \defeq \lambda xy.e\,(x,y)$
and $\eeval \defeq \lambda x. (\pi_1(x))\,(\pi_2(x))$.
Let $S_1,S_2\in\sat$ and $\myS\in\P(\sat)$ be saturated sets.
It is straightforward to show that  $\eeval \ureal (S_1 \efimp \myS) \wedge S_1 \to S$ for any $S\in\myS$.
We prove that if $e\in \Lambda$ is such that $(\forall S \in \myS.\; e \ureal S_1 \wedge S_2 \to  S) $
then $\lambda e \ureal S_1 \to (S_2 \efimp \myS)$.
Let $(t_1;\s)\in S_1$, then $\lambda e\,t_1 \reds \s \s \lambda y.e\,(x,y)$.
 Clearly, if $(t_2;\s)\in S_2$, 
 $\lambda y.e\,(x,y)\,t_2 \reds \s \s e\,(t_1,t_2)$.
 Since for any $S\in\myS$ the last term belongs to $S$, we can conclude by anti-reduction
 that $\lambda e \ureal S_1 \to (S_2 \efimp \myS) $.
 \qedhere
 \end{description}
 \end{proof}
\autoref{p:evidenced} implies, in particular, that our interpretation also induces a tripos
and a topos, by following the method described in~\cite{CohMiqTat21}.
In the following sections, we pay attention to nonstandard reasoning principles
for which we can define universal realizers, as these are the evidences
for our interpretation (as shown by \autoref{p:evidenced}).

\section{Nonstandard principles in realizability with slices}
\label{s:nonstandard} 
\subsection{Natural numbers}
\label{s:natural}
In \Cref{s:slices}, we considered a setting with a value restricted
variant of the $\nat{\cdot}$ predicate. 
Nonetheless, we can still assert that an expression 
is a natural number through the formula 

\[\natp e ~~\defeq~~ \forall X. (\bvnat e X) \to X.\]

As seen below, realizers of this formula will give access to the
expected computations for natural numbers.

\newcommand{\anyt}{\automath{\dagger}\xspace}
\begin{rem}
Observe that the language of HA2 does not express the existence
of specific nonstandard elements, e.g.\ $\delta$ is not in the language.
However, to refer to some nonstandard element $\tau$,
we can always consider a valuation that maps a variable $x$ to $\tau$. 
With abuse of notation,
in the remainder of this paper, we will write nonstandard elements directly 
in formulas as if they were in the language.
Also, we will use the notation \anyt to refer to an arbitrary $\lambda$-term 
with no further assumption.
\end{rem}

Using an argument similar to \autoref{r:exbv},
one can show that for any individual $f\in\N^\S$, 
if $t$ is a term such that $(t;\s)\in\tvS{\natp f}$,
then one can actually compute out of $t$ (without changing the value
of the state) the value of $f(\s)\in\N$.
In other words, $t$ is an effect-free term producing $\overline{f(\s)}$.
This is to be compared with $\nat f$, for which the requirement for its truth value to be saturated would have entailed its interpretation 
to reduce to a natural number $f(\s')$ in a possibly different state.

\begin{prop}\label{r:natp}  
  Let $f\in\N^\S$ and $\s\in\S$. If $t$ is a term such that $(t;\s)\in\tvS{\natp f}$,
then $t\,\lambda x.x \reds \s \s \overline{n}$, where $n=f(\s)$.
\end{prop}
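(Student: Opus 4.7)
The proof will closely follow the template of \autoref{r:exbv}: we exploit the fact that $t$ realizes a second-order universal statement by picking a saturated set $\X$ tailored to force the desired reduction behavior.

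First, fix $n \defeq f(\s)$ and define
\[ \X ~\defeq~ \{(v;\s') \in \Lambda\times\S : v \reds{\s'}{\s} \overline{n}\}. \]
The plan is to show that $\X \in \sat$, plug it for $X$ in the definition of $\natp f$, and then observe that $\lambda x.x$ realizes $\bvnat f X$ in state $\s$ under this substitution, so that applying $t$ to $\lambda x.x$ yields an element of $\X$ in state $\s$, which is exactly the conclusion we want.

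For saturation, suppose $(v';\s') \in \X$ and $v \reds{\s''}{\s'} v'$. By composing the two reductions we get $v \reds{\s''}{\s} \overline{n}$, hence $(v;\s'') \in \X$. Next, by the hypothesis $(t;\s) \in \tvS{\natp f} = \bigcap_{F:\N^0\to\sat} \tvS{(\bvnat f X) \to X}_{\rho, X\mapsto F}$ (since $X$ has arity $0$, a function $\N^0 \to \sat$ is just an element of $\sat$), instantiating $X \mapsto \X$ gives
\[ (t;\s)\in \tvS{(\bvnat f X) \to X}_{X\mapsto \X}. \]
Now $\tvS{\bvnat f X}_{X\mapsto \X} = \{(u;\s') : (u\,\overline{m};\s') \in \X \text{ where } m = f(\s')\}$. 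In state $\s$ the relevant value is $m = f(\s) = n$, and $(\lambda x.x)\,\overline{n} \reds{\s}{\s} \overline{n}$, which belongs to $\X$ by definition. Hence $(\lambda x.x; \s) \in \tvS{\bvnat f X}_{X\mapsto \X}$.

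Applying the interpretation of the implication, we conclude $(t\,\lambda x.x;\s) \in \tvS{X}_{X\mapsto \X} = \X$, i.e.\ $t\,\lambda x.x \reds{\s}{\s} \overline{n}$, as desired. The only subtle point is the choice of $\X$: it has to record both the starting state of a candidate reduction and the target state $\s$, so that saturation is preserved while the end state remains pinned to $\s$; this is what allows us to read off the effect-free computation of $\overline{f(\s)}$ in the original state.
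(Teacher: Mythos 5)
Your proof is correct and takes exactly the same route as the paper: both fix $n = f(\s)$, define the saturated set $\X = \{(v;\s') : v \reds{\s'}{\s} \overline{n}\}$, instantiate the second-order quantifier at $\X$, observe that $(\lambda x.x;\s) \in \tvS{\bvnat f \X}$, and conclude by the interpretation of implication. You simply spell out the saturation check and the arity-$0$ remark that the paper leaves implicit.
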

\begin{proof}
 Let us define $\X\defeq \{(t;\s'): t\reds \s' \s \overline{n}\}$.
 This set is clearly saturated, and it is easy to see that 
 $(\lambda x.x;\s)\in\tvS{\bvnat f \X}$ (since $\lambda x.x\,\overline{n}\reds \s \s \overline{n}$).
 Therefore, we have that $t\in\tvS{(\bvnat f \X)\imp \X}$ and then
 $(t\,\lambda x.x;\s)\in\X$, that is $t\,\lambda x.x \reds \s \s \overline{n}$.
\end{proof}

We now show that (by-value) natural numbers, i.e.\ $\mathrm{Nat'}$, contain $0$, 
and are closed under the successor and recursion for internal formulas.

\begin{prop}\label{r:natural} 
Let $A$ be an internal formula.
We have

\begin{enumerate}
  \item $\lambda x.x\,0    \ureal \natp{0} $
  \item $\lambda xy.y\,(\suc\,x) \ureal \fabv{x}.\natp{S(x)}$
  \item $\rec \ureal  A(0)\imp \big(\fabv{y}.( A(y) \imp A(S\,y))\big) \imp \fabv{x}.A(x))$
\end{enumerate}
\end{prop}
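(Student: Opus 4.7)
My plan is to prove the three parts separately. Parts (1) and (2) reduce to direct computation with anti-reduction after unfolding the definitions, while part (3) is an immediate consequence of \autoref{r:rec_bv} and \autoref{r:elimination}.

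For part (1), I would fix an arbitrary $\X \in \sat$ and a state $\s$. By unfolding $\natp{0} = \forall X.(\bvnat{0} X) \to X$, it suffices to show that for every $(u;\s) \in \tvS{\bvnat{0} X}_{X\mapsto\X}$, we have $((\lambda x.x\,0)\,u;\s) \in \X$. The hypothesis means exactly that $(u\,\overline{0};\s) \in \X$, and since $(\lambda x.x\,0)\,u \reds{\s}{\s} u\,\overline{0}$, anti-reduction (\autoref{r:real_sat}) concludes.

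For part (2), given $f\in\N^\S$, a state $\s$, and a value $V$ with $(V;\s)\in\tvS{\nat{x}}_{\rho,x\mapsto f}$, the value restriction forces $V=\overline{n}$ where $n=f(\s)$. Then $(\lambda xy.y\,(\suc\,x))\,V \reds{\s}{\s} \lambda y.y\,(\suc\,\overline{n})$. I then have to check that $(\lambda y.y\,(\suc\,\overline{n});\s) \in \tvS{\natp{S(f)}}_{\rho,x\mapsto f}$. Unfolding $\natp{S(f)}$ and using the fact that $\foint{S(x)}_{\rho,x\mapsto f}(\s)=n+1$, this amounts to showing that for every $\X\in\sat$ and every $(u;\s)\in\tvS{\bvnat{S(x)} X}_{X\mapsto\X,x\mapsto f}$ (which means $(u\,\overline{n+1};\s)\in\X$), one has $((\lambda y.y\,(\suc\,\overline{n}))\,u;\s) \in \X$. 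Since this term reduces in the same state to $u\,\overline{n+1}$, we conclude again by anti-reduction.

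For part (3), since $A$ is internal with free variable $x$ (the general case with extra parameters is analogous), \autoref{r:elimination} tells us that substituting the second-order variable $X$ in a formula by $A$ is sound with respect to the realizability interpretation. Concretely, applying this to the formula
\[ \forall X.\,X(0) \imp \fabv{y}.(X(y) \imp X(S(y))) \imp \fabv{x}.X(x) \]
yields the inclusion
\[ \tvr{\forall X.\,X(0) \imp \fabv{y}.(X(y) \imp X(S(y))) \imp \fabv{x}.X(x)} \subseteq \tvr{A(0) \imp \fabv{y}.(A(y) \imp A(S(y))) \imp \fabv{x}.A(x)}. \]
Combining this with \autoref{r:rec_bv}, which states that $\rec$ universally realizes the left-hand side, gives the desired conclusion. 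I do not anticipate any serious obstacle: the only subtle point is checking carefully that the value restriction on $\bvnat{\cdot}{\cdot}$ correctly propagates, but this is precisely what \autoref{r:rec_bv} is designed to handle.
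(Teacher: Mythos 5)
Your proof is correct and follows essentially the same route as the paper's: parts (1) and (2) are handled by unfolding $\natp{\cdot}$ and the value-restricted quantifier and then applying anti-reduction in the same state, and part (3) is obtained exactly as the paper suggests, by combining \autoref{r:rec_bv} with the internal-formula instance of second-order elimination from \autoref{r:elimination}. The only difference is that you spell out the chain of reasoning for part (3), which the paper leaves as a one-line reference.
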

\begin{proof}
Easy realizability proofs by anti-reduction.
{
 \begin{enumerate}
  \item Follows from the definition of $\natp{0}$: if $\X\in\sat$ is a saturated set,
  $\s$ a state and $t$ a term such that $(t;\s) \in \tvS{\bvnat {0} \X}$,
  we have $(\lambda x.x\,0)\,t~\reds \s \s~ t\,0\in\X$. Since $\X$ is saturated, we conclude by anti-reduction.
  \item 
  Let $f\in\N^\S$, $\X$ be a saturated set, $\s$ be a state 
  and $t$ be a term such that $(t;\s)\in\tvr{\bvnat {S f} \X}$.
  Let us write $n\defeq f(\s)$.
  Then 
  $(\lambda xy.y\,(\suc\,x))\,\overline{n}\,t\reds \s \s
  t\,(\suc\,\overline{n})$. Since $\suc\,\overline{n}=\overline{n+1}=\overline{S(f)(\s)}$,
  we get that $t\,(\suc\,\overline{n})\in\X$ and we conclude by anti-reduction.

  \item Directly from  Propositions \ref{r:rec_bv} and \ref{r:elimination}.\qedhere
 \end{enumerate}

 }
\end{proof}

The interpretation now witnesses the existence of new elements.
The canonical example is the \emph{diagonal}, i.e.\ the function $\delta:n\mapsto n$.
Indeed, the diagonal is a nonstandard natural number which is realized by the $\get$ instruction. 
We first show a lemma concerning the storage operator $T$ (from \autoref{d:T}) in this new context.

\begin{lem}\label{r:lem_T}
 Let $\s\in\S$ and $t,u$ be terms. 
 
 \begin{enumerate}
 \item\label{r:T}  For any  $n\in\N$,
 if $u \reds \s \s \overline{n}$,
  then $T\,t\,u \reds \s \s t\,\overline{n}$. 
\item \label{r:T_cor} For any $f\in\N^\S$, if $u \reds \s \s \overline{f(\s)}$
and $(t;\s)\in\tvS{\fabv{x}.A(x)}$,
 then $T\,t\,u \in\tvS{A(f)}$.
 \end{enumerate}
 \end{lem}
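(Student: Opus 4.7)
The plan is to handle the two parts in sequence, with part 2 being essentially a corollary of part 1 once we invoke saturation.

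For part 1, I would start by unfolding the definition of $T$ and performing the two outer $\beta$-reductions: writing $u_0 \defeq \lambda y.y\,0$ and $u_S \defeq \lambda xyz.y\,(\lambda x.z\,(\suc\,x))$, we get
\[ T\,t\,u \;\reds{\s}{\s}\; (\rec\,u_0\,u_S\,u)\,t. \]
Since reduction is closed under the context $\rec\,u_0\,u_S\,\hole$ (one of the allowed contexts $C\hole$), the hypothesis $u\reds{\s}{\s}\overline{n}$ lifts to give
\[ (\rec\,u_0\,u_S\,u)\,t \;\reds{\s}{\s}\; (\rec\,u_0\,u_S\,\overline{n})\,t. \]
At this point I would prove by induction on $n\in\N$ the auxiliary claim that for every term $t$,
\[ (\rec\,u_0\,u_S\,\overline{n})\,t \;\reds{\s}{\s}\; t\,\overline{n}. \]
The base case $n=0$ follows from $\rec\,u_0\,u_S\,0 \reds{\s}{\s} u_0 = \lambda y.y\,0$, hence application to $t$ yields $t\,0 = t\,\overline{0}$. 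For the successor step $n=S(m)$, the $\rec$-reduction yields $u_S\,\overline{m}\,(\rec\,u_0\,u_S\,\overline{m})$, so that
\[ (\rec\,u_0\,u_S\,\overline{S(m)})\,t \;\reds{\s}{\s}\; (\rec\,u_0\,u_S\,\overline{m})\,(\lambda x.t\,(\suc\,x)), \]
after the two relevant $\beta$-reductions of $u_S$. Applying the induction hypothesis with $t' \defeq \lambda x.t\,(\suc\,x)$ yields $t'\,\overline{m} \reds{\s}{\s} t\,(\suc\,\overline{m}) = t\,\overline{S(m)}$, as needed. Chaining all these reductions gives part 1.

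For part 2, I would simply instantiate part 1 with $n \defeq f(\s)$: from $u \reds{\s}{\s} \overline{f(\s)}$ we obtain $T\,t\,u \reds{\s}{\s} t\,\overline{f(\s)}$. The hypothesis $(t;\s)\in\tvS{\fabv{x}.A(x)}$ unfolds to $(t;\s)\in\tvS{\bvto{\nat x} A(x)}_{\rho,x\gets f}$ for every $f$, and by definition of the interpretation of $\bvto{\nat x}{\cdot}$ this gives $(t\,\overline{f(\s)};\s)\in\tvS{A(f)}$. Since $\tvS{A(f)}$ is a saturated set by \autoref{r:real_sat}, anti-reduction yields $(T\,t\,u;\s)\in\tvS{A(f)}$.

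The only slightly delicate step is the induction in part 1, where one has to keep track of the fact that $\rec$ is only reducible when its numeric argument has been reduced to a value, so the preliminary contextual reduction $u \reds{\s}{\s} \overline{n}$ is essential before invoking the $\rec$-rules; the substitution of the continuation $\lambda x.t\,(\suc\,x)$ through the induction is the key trick that makes the storage operator behave as intended.
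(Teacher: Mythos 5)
Your proof is correct and follows essentially the same route as the paper's: unfold $T$, perform the contextual reduction of $u$ to $\overline{n}$, and then run an induction on $n$ to show $(\rec\,u_0\,u_S\,\overline{n})\,t \reds{\s}{\s} t\,\overline{n}$, with part 2 obtained from part 1 by anti-reduction via the definition of $\tvS{\bvnat{x}{A(x)}}$. Your version is in fact slightly cleaner in one respect: you explicitly isolate the auxiliary claim $(\rec\,u_0\,u_S\,\overline{n})\,t \reds{\s}{\s} t\,\overline{n}$ and do the induction on that statement with $t$ quantified, whereas the paper informally re-enters the induction hypothesis in the middle of a reduction chain starting from $T\,t\,u$ (which strictly speaking requires an argument that the first few steps are forced). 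This is a good instinct; it is exactly the right lemma to make the storage-operator trick with the continuation $\lambda x.t\,(\suc\,x)$ go through cleanly.
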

\begin{proof}
The first part is an easy induction on $n$, and the second part follows from the first by anti-reduction.
{
\begin{enumerate}
 \item  By induction on $n$.
  \begin{itemize}
\item If $n=0$, we have
 \begin{align*}
 T\,t\,u 
 &~\reds \s \s~
 \rec\,(\lambda y.y\,0)\,(\lambda xyz.y\,(\lambda x.z\,(\suc\,x)))\,u\,t \\
 &~\reds \s \s~ \rec\,(\lambda y.y\,0)\,(\lambda xyz.y\,(\lambda x.z\,(\suc\,x)))\,0\,t\\
 &~\reds \s \s~ (\lambda y.y\,0)\,t
 ~\reds \s \s~ t\;0.
 \end{align*}
\item If $\s=S(n)$, we have
 \begin{align*}
 T\,t\,u 
 &\reds \s \s~
 \rec\,(\lambda y.y\,0)\,(\lambda xyz.y\,(\lambda x.z\,(\suc\,x)))\,u\,t \\
 &\reds \s \s ~\rec\,(\lambda y.y\,0)\,(\lambda xyz.y\,(\lambda x.z\,(\suc\,x)))\,\suc\overline{n}\,t \\
 &\reds \s \s ~
 (\lambda xyz.y\,(\lambda x.z\,(\suc\,x)))\, \overline{n}\,(\rec\,(\lambda y.y\,0)\,(\lambda xyz.y\,(\lambda x.z\,(\suc\,x)))\,\overline{n})\,t \\
 &\reds \s \s ~
 (\rec\,(\lambda y.y\,0)\,(\lambda xyz.y\,(\lambda x.z\,(\suc\,x)))\,\overline{n})\,(\lambda x.t\,(\suc\,x)) \\
 &\reds \s \s~
 (\lambda x.t\,(\suc\,x))\,\overline{n} \reds \s \s t\,\suc\overline{n},
  \end{align*}
 where we used the induction hypothesis to obtain the penultimate reduction.
\end{itemize}
\item By definition, it holds that $(t;\s)\in\tvS{\bvnat f {A(f)}}$. 
 By part 1, we obtain that $T\,t\,u\reds \s \s t\,\overline{f(s)}$, 
 hence the result follows by anti-reduction.\qedhere
 \end{enumerate}}
 \end{proof}

\begin{prop}[ENS$_0$]\label{r:delta}

We have that

\begin{enumerate}
 \item $\anyt       \ureal \neg \st{\delta} $  
 \item $\anyt       \ureal \exists x.\neg \st{x}$
 \item $\lambda x.T\,x\,\get \ureal \natp{\delta}$
 \item $\lambda x.T\,x\,\get\,\dagger\ureal \exbv x.\neg \st{x}$
 \end{enumerate}
 \end{prop}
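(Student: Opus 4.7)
The plan rests on two simple observations. First, by Definition \ref{def:standard}, the diagonal $\delta:\s\mapsto\s$ is not a constant function and therefore not standard, so $\tvS{\st{\delta}}=\emptyset$ by the clause for $\st\cdot$ in Definition \ref{def:real_slices}. Second, by the reduction rule for $\get$, we have $\get\reds{\s}{\s}\overline{\s}$ in every state, and since $\delta(\s)=\s$ this is precisely $\overline{\delta(\s)}$, which is exactly the numeral consumed by realizers of value-restricted formulas parametrised by $\delta$. Together with \autoref{r:lem_T} on the storage operator $T$, these observations cover all four parts.

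For Part 1, the truth value $\tvS{\st\delta}$ is empty, so the requirement defining realizers of $\neg\st{\delta}\defeq\st{\delta}\to\bot$ is vacuously satisfied: any term $\dagger$ and any state $\s$ give $(\dagger;\s)\in\tvS{\st{\delta}\to\bot}$, hence $\dagger\ureal\neg\st{\delta}$. Part 2 then follows at once by the semantics of $\exists$: choosing the individual $f=\delta\in\N^\S$ in the union $\bigcup_{f\in\N^\S}\tvS{\neg\st{x}}_{\rho,x\gets f}$ reduces the claim to Part 1.

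For Part 3, I need to show $(\lambda x.T\,x\,\get;\s)\in\tvS{\bvnat{\delta} X\to X}_{X\mapsto\X}$ for every $\X\in\sat$ and every $\s\in\S$. Fix $u$ with $(u;\s)\in\tvS{\bvnat{\delta}\X}$; unfolding gives $(u\,\overline{\s};\s)\in\X$. On the reduction side, $(\lambda x.T\,x\,\get)\,u\reds{\s}{\s}T\,u\,\get$, and since $\get\reds{\s}{\s}\overline{\s}$, \autoref{r:lem_T}.\ref{r:T} yields $T\,u\,\get\reds{\s}{\s}u\,\overline{\s}$. Anti-reduction in the saturated set $\X$ closes the argument.

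For Part 4, interpret $\exbv{x}.\neg\st{x}$ as $\forall X.\bigl((\fabv{x}.(\neg\st{x}\to X))\to X\bigr)$, fix $\X\in\sat$, $\s\in\S$, and $u$ with $(u;\s)\in\tvS{\fabv{x}.(\neg\st{x}\to X)}_{X\mapsto\X}$. A single $\beta$-step gives $(\lambda x.T\,x\,\get\,\dagger)\,u\reds{\s}{\s}T\,u\,\get\,\dagger$. Since $\get\reds{\s}{\s}\overline{\delta(\s)}$, \autoref{r:lem_T}.\ref{r:T_cor} (applied with $f=\delta$ and $A(x)\defeq\neg\st{x}\to X$) yields $(T\,u\,\get;\s)\in\tvS{\neg\st{\delta}\to X}_{X\mapsto\X}$. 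By Part 1 any term realizes $\neg\st{\delta}$, so in particular $(\dagger;\s)\in\tvS{\neg\st{\delta}}$, whence $(T\,u\,\get\,\dagger;\s)\in\X$; anti-reduction concludes. The only delicate point in the whole argument is making sure the storage operator $T$ is fed the right argument so that it fires $\get$ into $\overline{\delta(\s)}$; this is precisely what \autoref{r:lem_T} is designed for, and the vacuous realizability of $\neg\st{\delta}$ does the rest.
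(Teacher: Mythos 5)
Your proof is correct and takes essentially the same approach as the paper's. The only cosmetic difference is in part~4, where you invoke \autoref{r:lem_T}.\ref{r:T_cor} as a convenience wrapper, whereas the paper applies \autoref{r:lem_T}.\ref{r:T} directly to obtain the reduction $T\,u\,\get\,\dagger \reds{\s}{\s} u\,\overline{\s}\,\dagger$; since part~\ref{r:T_cor} of that lemma is itself proved via part~\ref{r:T}, the two arguments are interchangeable.
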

\begin{proof} \hfill 
\begin{enumerate}
 \item By definition, $\tvS{\st{\delta}\mapsto \bot}=\Lambda\times\S$, which entails the result.
 \item Obvious from part 1.
 \item Follows from the fact that $\delta(\s)=\s$
 and that by part~\ref{r:T} of Lemma~\ref{r:lem_T}, for any $t$.
 \[(\lambda x.T\,x\,\get)\,t~~\reds{\s}{\s}~~ T\,t\,\get ~~\reds \s \s~~ t\,{\overline{\s}}\]
 \item The proof is similar to the proof of \autoref{r:relativized}. 
  Let $\X\in\sat$ be a predicate and $u$ be a term such that $(u;\s)\in\tvr{\fabv x.\neg \st x \to \X}$.
 In particular, the latter implies that for any term $t$, it holds that $(u\,\s\,t;\s)\in\X$.
 Since $\X$ is saturated, the result then follows from the fact that
 $T\,u\,\get\,t
 \reds \s \s ~ u\,\overline{\s}\,t$ 
 which is a consequence of part~\ref{r:T} of Lemma~\ref{r:lem_T}.\qedhere
 \end{enumerate}

\end{proof}

Part 2 in \autoref{r:delta} is sometimes referred to 
as the ENS$_0$ (existence of nonstandard elements) principle~(e.g.\ in \cite{BerBriSaf12}). 
As a consequence of \autoref{r:elimination},
Leibniz equality is only compatible with the \fedrule-rule restricted to internal formulas.
In our setting, this encoding only reflects equality in the current state, i.e.\ a local
knowledge of individuals (slice by slice),
while the usual notion of equality (for $\N^\S$) requires a global knowledge (on all the slices).
If $A(x)$ is an external formula, we cannot hope to have an internal definition of equality
such that its elimination principle  $x=y \imp A(x) \imp A(y)$ is valid.

\begin{exa}\label{ex:equality} 
Consider  an individual $f$, equal to 1 everywhere except
for some state $\s_0$ where it is equal to $0$.

For any state $\s\neq \s_0$, we have $(\lambda x.x;\s)\Vdash 1^\inj = f$.
However, if we consider the formula $A(x)\defeq(\stto{x} \bot)\to \bot$,
then, for $\s\neq \s_0$, we have $(\lambda x.x\,\anyt;\s)\in\tv{A(1)}$ 
and $\tvs{A(f)}=\tvs{(\bot \to \top)\to\bot}$.
Thus, if $(t;\s)$ is a realizer of $$\forall Z.(Z(1^\inj)\imp Z(f)) \imp A(1^\inj)\imp A(f),$$
we immediately get that $(t (\lambda x.x) (\lambda x.x) (\lambda x.x\,\anyt);\s)\Vdash \bot$.
\end{exa} 
   
Nonetheless, the elimination of Leibniz equality is realizable
for standard individuals or for internal formulas.
 \begin{prop}\label{r:Leibniz} 
 Let $f$ and $g$ be individuals in $\N^\S$ and let $A(x)$ be a formula. Then
 
   \begin{enumerate} 
   \item $\lambda xyz.z \ureal \stto{f} \stto{g} (\forall Z.(Z(f)\to Z(g))) \to A(f) \to A(g)$
   \item If $A(x)$ is internal, then
   $\lambda x.x \ureal (\forall Z.(Z(f)\to Z(g))) \to A(f) \to A(g)$
   \end{enumerate}
 \end{prop}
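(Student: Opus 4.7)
The overall strategy is, in both parts, to instantiate the universally quantified $Z$ in the Leibniz hypothesis with the second-order valuation $F : \N \to \sat$ defined by $F(k) \defeq \tvS{A(k^\inj)}$, and to show that under the given side condition this instantiation recovers the target formula, i.e.\ $\tvS{Z(f)}_{Z \mapsto F} = \tvS{A(f)}$ and likewise for $g$. The argument then boils down to a straightforward anti-reduction: the terms $\lambda xyz.z$ and $\lambda x.x$ simply reduce to their last argument (resp.\ act as identity), which we then recognise as a realizer of $A(f) \to A(g)$.

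For part (1), I would fix a state $\s \in \S$ and split on whether $f$ and $g$ are standard. If $f$ fails to be standard, $\tvS{\st{f}}$ is empty at $\s$ and the implication is realized vacuously; similarly for $g$. Otherwise $f = n^\inj$ and $g = m^\inj$. Given realizers $u_1, u_2, u_3$ of the three hypotheses and $(p;\s) \in \tvS{A(f)}$, the term $(\lambda xyz.z)\,u_1\,u_2\,u_3\,p$ reduces in state $\s$ to $u_3\,p$, so saturation reduces the problem to showing $(u_3\,p;\s) \in \tvS{A(g)}$. Since $f$ is constantly $n$, the definition of $F@(\cdot)$ (Definition~\ref{def:application}) collapses to $\tvS{Z(f)}_{Z \mapsto F} = F(n) = \tvS{A(f)}$, and symmetrically $\tvS{Z(g)}_{Z \mapsto F} = \tvS{A(g)}$; instantiating $u_3$ at $F$ and applying to $p$ concludes.

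For part (2), the same $F$ works but the identification $\tvS{Z(f)}_{Z \mapsto F} = \tvS{A(f)}$ is now justified by the glueing theorem (Theorem~\ref{r:glueing}): for internal $A$ and any individual $h$, we have $(t;\s) \in \tvS{A(h)}$ iff $(t;\s) \in \tvS{A(h(\s)^\inj)} = F(h(\s))$, which is exactly what is needed to collapse $F@(h)$ to $\tvS{A(h)}$ without assuming standardness. The remainder follows as before: given $(w;\s) \in \tvS{\forall Z.(Z(f) \to Z(g))}$, the reduction $(\lambda x.x)\,w \reds{\s}{\s} w$ combined with anti-reduction reduces the goal to showing that $w$ sends $\tvS{A(f)}$ into $\tvS{A(g)}$, which is immediate by instantiating $w$ at $F$.

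The main subtlety---and the whole reason the two statements need a side condition---is establishing the identity $\tvS{Z(f)}_{Z \mapsto F} = \tvS{A(f)}$. For arbitrary external formulas on non-standard individuals this identity genuinely fails, as illustrated by Example~\ref{ex:equality} and Remark~\ref{rmk:pred}: the interpretation of $A(f)$ may depend on $f$'s values at states other than the current one, which is precisely the information that an arbitrary second-order valuation $F \in \N \to \sat$ cannot see. Restricting either to standard individuals (part~1) or to internal formulas (part~2, via glueing) is exactly what makes the collapse go through.
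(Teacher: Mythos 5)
Your proposal is correct and, at the level of the key idea (instantiate $Z$ with the truth-value function $F:k\mapsto\tvS{A(k^\inj)}$ and show $F@(f)$ collapses to $\tvS{A(f)}$ under the stated side conditions), it matches the paper's argument. For part~(2) you are essentially unfolding the paper's appeal to \autoref{r:elimination}, which itself rests on glueing exactly as you describe. The one place you take a slightly different route is in part~(1): the paper distinguishes the subcases $f=g$ (``trivial'') and $f\neq g$, in the latter case observing that $\tvS{\forall Z.(Z(f)\to Z(g))}=\tvS{\top\to\bot}=\emptyset$ so the implication is vacuously true; your argument dispenses with this split and handles both subcases uniformly by the same instantiation, which is arguably cleaner and still correct.
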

 
\begin{proof}
 \begin{enumerate}
  \item If either $f$ or $g$ is not standard, the result is trivial.
  Assume that $f$ and $g$ are standard. The case $f=g$ is trivial, 
  and if $f\neq g$, we have  $\tvS{(\forall Z.(Z(f)\mapsto Z(g))}=\tvS{\top\mapsto\bot}$. 
  \item The result easily follows from \autoref{r:elimination}.\qedhere
  \end{enumerate}
\end{proof}

\subsection{Nonstandard reasoning principles}
In this section, we prove some properties which are usual in frameworks that use
nonstandard analysis: Transfer, Overspill, External Induction, Idealization, etc.

\autoref{r:transfer} below indicates that the Transfer property (for internal formulas)
is devoid of computational content.
This is a somewhat reassuring fact: 
properties that are true  for standard individuals are automatically true 
for all individuals.

\begin{thm}[Transfer]\label{r:transfer}
For any internal formula $A$ we have:
\vspace{-1em}
\begin{multicols}{2}
\begin{enumerate}
\item $\bigcap_{f \in\N^\S} \tvS{A}_{x\mapsto f} = \bigcap_{n\in\N} \tvS{A}_{x\mapsto n^\inj}$
\item $ \lambda xy.x  \ureal \forall x.A(x)\to\fas{x}.A(x)$
\item $ \lambda x.x\,\anyt \ureal \fas{x}.A(x)\to\forall x.A(x)$
\item $\bigcup_{f \in\N^\S} \tvS{A}_{x\mapsto f} = \bigcup_{n\in\N} \tvS{A}_{x\mapsto n^\inj}$
\item $ \lambda x. (\anyt,x) \ureal \exists x.A(x)\to\exs{x}.A(x)$
\item $ \lambda x.\pi_2(x) \ureal \exs{x}.A(x)\to\exists x.A(x)$
\end{enumerate}
\end{multicols}
\end{thm}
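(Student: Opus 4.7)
The whole proof is driven by the glueing theorem (\autoref{r:glueing}), which, since $A$ is internal, identifies $(t;\s)\in\tvS{A(f)}$ with $t\in\tvs{\trunc{A(f)}}$; crucially, $\trunc{A(f)}$ depends on $f$ only through $f(\s)$, so $\trunc{A(f)}=\trunc{A((f(\s))^\inj)}$. This lemma-without-a-name will be invoked in every item.

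I would first prove the two set-equalities (items 1 and 4). The $\supseteq$ inclusion of (1) is free since $\{n^\inj:n\in\N\}\subseteq\N^\S$. For $\subseteq$, given $(t;\s)$ in the right-hand side and any $f\in\N^\S$, I set $n\defeq f(\s)$ and apply glueing twice: $(t;\s)\in\tvS{A}_{x\mapsto n^\inj}$ gives $t\in\tvs{\trunc{A}_{x\mapsto n^\inj}}=\tvs{\trunc{A}_{x\mapsto f}}$, hence $(t;\s)\in\tvS{A}_{x\mapsto f}$. Item (4) is dual: any $f$-witness is an $(f(\s))^\inj$-witness in the same slice by the same glueing identity.

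I would then handle the four realizer claims. For (2), if $(t;\s)\real\forall x.A(x)$, then for every standard $f$ and every $u$ with $(u;\s)\in\tvS{\st f}=\Lambda\times\S$ we have $(\lambda xy.x)\,t\,u\reds{\s}{\s}t\in\tvS{A(f)}$, and anti-reduction concludes. For (5), taking any $g\in\N^\S$ witnessing $\exists x.A$ for $(t;\s)$, I pick $f\defeq(g(\s))^\inj$ (standard), observe $\pi_1((\anyt,t))\reds\s\s\anyt\real\st f$ trivially, and $\pi_2((\anyt,t))\reds\s\s t$, where $(t;\s)\in\tvS{A(g)}=\tvS{A(f)}$ in slice $\s$ by glueing. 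Item (3) is the more delicate direction: given $t\ureal\fas{x}.A(x)$ and any $f\in\N^\S$, standardness of $(f(\s))^\inj$ gives $(t\,\anyt;\s)\in\tvS{A((f(\s))^\inj)}$, and glueing lifts this to $(t\,\anyt;\s)\in\tvS{A(f)}$. Item (6) is a direct second-projection argument.

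The only subtle step is the reliance on the symbol $\anyt$ in (3) and (5): it stands for an arbitrary term, and it is the interpretation $\tvS{\st{n^\inj}}=\Lambda\times\S$ that makes this harmless. Apart from that, every case is a short anti-reduction argument layered on top of the glueing identity $\trunc{A(f)}=\trunc{A((f(\s))^\inj)}$, so the proof is essentially mechanical once (1) is established; I would therefore present (1) carefully and treat (2)--(6) as short applications.
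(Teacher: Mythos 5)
Your proof is correct and takes essentially the same route as the paper: derive the set equalities (1) and (4) from glueing via $\trunc{A(f)}=\trunc{A((f(\s))^\inj)}$, then discharge the realizer claims (2)--(6) by anti-reduction, exploiting that $\tvS{\st f}$ is either $\Lambda\times\S$ or $\emptyset$. The only nit is that the two inclusion labels in your treatment of (1) are swapped --- since $\{n^\inj:n\in\N\}\subseteq\N^\S$, intersecting over the larger index set $\N^\S$ yields the \emph{smaller} set, so $\subseteq$ is the free direction and your glueing argument in fact establishes $\supseteq$ --- but the underlying argument is clearly right.
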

\begin{proof} 
Parts 1 and 4 follow from the glueing theorem. Indeed, we have:
$$
\begin{array}{rcl}
  \bigcap_{f \in\N^\S} \tvS{A}_{x\mapsto f} 
&=& \bigcap_{f \in\N^\S} \bigcup_{\s\in\S}\tvS{\trunc A}_{x\mapsto f} \times \{\s\} \\
&=& \bigcap_{f \in\N^\S} \bigcup_{\s\in\S}\tvS{\trunc A}_{x\mapsto (f(\s))^\inj} \times \{\s\} \\
&=& \bigcap_{n \in\N}    \bigcup_{\s\in\S}\tvS{\trunc A}_{x\mapsto n^\inj} \times \{\s\} \\[0.3em]
&=& \bigcap_{n\in\N} \tvS{A}_{x\mapsto n^\inj}
\end{array}
\eqno
\text{
\begin{tabular}{r}
 (by glueing)\\
 (by def. of $\trunc{\,\cdot\,}$)\\
 \\[0.3em]
 (by glueing)\\
\end{tabular}
}
$$
The proof for part 4 is analogous.

Parts 2 and 3 (resp. 5, 6) are direct consequences of the first (resp. fourth) part.
For instance, for part 3, let $\s$ be a state and $u$ be a term such that $(u;\s)\in\tvS{\fas{x}.A(x)}$.
Recalling that $\tvS{\st{n^\inj}}=\Lambda\times\S$ for any $n\in\N$, we have:
\[
\begin{array}{rcl}
\forall f\in\N^\S, v\in\Lambda. (v;\s)\in \tvS{\st f} &\Rightarrow& (u\,v;\s)\in\tvS{A(x)}_{x\mapsto f}\\
&\Rightarrow& \forall n\in\N, v\in\Lambda. (u\,v;\s)\in\tvS{A(x)}_{x\mapsto n^\inj} \\
&\Rightarrow& \forall v\in\Lambda. (u\,v;\s)\in\bigcap_{n\in\N}\tvS{A(x)}_{x\mapsto n^\inj} \\
&\Rightarrow& \forall v\in\Lambda. (u\,v;\s)\in\bigcap_{f\in\N^S}\tvS{A(x)}_{x\mapsto f}
\end{array}
\]
where the last implication is obtained using part 1.
In particular, $(u\,t;\s)$ belongs to $\tvS{\forall x.A(x)}$ and by anti-reduction,
so does $((\lambda x.x\,t)u;\s)$. \qedhere

\end{proof}

As expected, Transfer does not hold for all formulas. A counter-example is given in the next proposition by the external formula stating that all individuals are (not not) standard.
 \begin{prop}\label{prop:nonst}
  Let $A(x)$ denote the formula $\neg\st{x}$. 
  Then, there is no realizer for the formulas  $\fas{x}.\neg A(x)\rightarrow\forall x.\neg A(x)$
  and $\exists x.A(x) \to \exs{x}. A(x)$.
 \end{prop}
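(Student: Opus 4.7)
The plan is to prove both non-realizability statements by contradiction, deriving in each case a realizer of $\bot$ from a hypothetical realizer. First I would check that $\tvS{\bot} = \emptyset$: exactly as in Corollary~\ref{r:consistency}, the two sets $S_i = \{(t;\s)\in\Lambda\times\S : \exists \s'.\,t \reds{\s}{\s'} \overline{i}\}$ for $i\in\{0,1\}$ are saturated and disjoint, thanks to confluence of the call-by-name stateful reduction (cf.~Example~\ref{ex:confluence}). As an immediate corollary, for any individual $f\in\N^\S$ the truth value $\tvS{\neg \st f}$ is $\emptyset$ when $f$ is standard (since $\tvS{\st f}=\Lambda\times\S$ and $\tvS{\bot}=\emptyset$), whereas $\tvS{\neg \st f} = \Lambda\times\S$ when $f$ is nonstandard (since $\tvS{\st f}=\emptyset$, recovering Proposition~\ref{r:delta}(1)).

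For the first formula $\fas{x}.\neg A(x)\to\forall x.\neg A(x)$, the key first step is to observe that the hypothesis is vacuously realized by any term. Indeed, unfolding $\fas{x}.\neg A(x) = \forall x.\,\st x\to\neg\neg\st x$: for each $f$, if $f$ is nonstandard then $\tvS{\st f}=\emptyset$ and the outer implication is trivially satisfied; if $f$ is standard then $\tvS{(\st f)\to\bot}=\emptyset$, so $\tvS{\neg\neg\st f}=\Lambda\times\S$. Thus, assuming a realizer $(t;\s)$ of the full implication, one obtains $(t\,(\lambda x.x);\s)\in\tvS{\forall x.\neg\neg\st x}$. Specializing at the diagonal $\delta$ (nonstandard at every state, by Proposition~\ref{r:delta}) and using that $\lambda x.x\in\tvS{\neg\st\delta}$, one derives $(t\,(\lambda x.x)\,(\lambda x.x);\s)\in\tvS{\bot}=\emptyset$, which is the desired contradiction.

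For the second formula $\exists x. A(x)\to\exs{x}.A(x)$, the plan is to show that the conclusion has empty interpretation at every state while the hypothesis is trivially realizable. For the conclusion, $\tvS{\exs{x}.\neg\st x} = \bigcup_{f\in\N^\S}\tvS{\st f\land\neg\st f}$, and each set in the union is empty: by the projection-based interpretation of conjunction, either $\tvS{\st f}=\emptyset$ (when $f$ is nonstandard) or $\tvS{\neg\st f}=\emptyset$ (when $f$ is standard), and in either case the conjunction is empty. The hypothesis $\exists x.\neg\st x$ is universally realized by any term, as recorded in Proposition~\ref{r:delta}(2) via the witness $\delta$. Hence a realizer $(t;\s)$ of the implication applied to any term would land in $\emptyset$, contradiction.

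There is no real obstacle; both arguments reduce to short computations with the truth values. The main point worth being careful about is that the emptiness of $\tvS{\bot}$ persists in the stateful setting, which relies crucially on the fixed call-by-name strategy ensuring confluence. It is worth noting that the arguments above actually establish the stronger statement that no pair $(t;\s)$ realizes either formula, and so no realizer—universal or not—exists.
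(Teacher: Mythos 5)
Your proof is correct and follows essentially the same route as the paper: the second implication is refuted exactly as the paper does (by computing $\tvS{\exs{x}.\neg\st x}=\emptyset$ while $\tvS{\exists x.\neg\st x}=\Lambda\times\S$), and you supply the symmetric computation for the first implication, which the paper dismisses with ``follows from the definitions.'' One small simplification: the appeal to confluence for $\tvS{\bot}=\emptyset$ is unnecessary, since $\emptyset$ is vacuously a saturated subset of $\Lambda\times\S$, so $\tvS{\bot}=\bigcap_{S\in\sat}S\subseteq\emptyset$ outright (and likewise the final application to $\lambda x.x$ is redundant, as the contradiction is already reached at $(t\,(\lambda x.x);\s)\in\tvS{\neg\neg\st\delta}=\emptyset$).
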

 \begin{proof} 
Both statements follow from the definitions. For instance, for the second formula, 
observe that
$$\bigcup_{f\in\N^\S}\{(t;\s): (\pi_1(t);\s) \in \tvS{\st f} \land  (\pi_2(t);\s)\in \tvS{\neg \st f}\}=\emptyset$$
since for any $f\in\N^\S$, either $\tvS{\st f}$ or $\tvS{\neg \st f}$ is empty.
Consequently, we have $\tvS{\exs{x}.A(x)}=\emptyset$ while
$\tvS{\exists x.A(x)}=\tvS{\top}=\Lambda\times\S$.
\end{proof}

The principle of External Induction \cite{NelsonBookREPT} allows to prove that a certain property 
is valid for all standard natural numbers. For instance, the assertion stating that every nonstandard
element is larger than all standard natural numbers\footnote{Actually,
this requires to consider a quotiented definition of the standardness predicate, see \autoref{r:st_lower}.}.
We show that in our context, this principle can be realized using the $\rec$ instruction.

\begin{prop}[External induction]\label{r:external_induction} 
For any formula $A(x)$ whose only free variable is $x$\nomidem
\[ \rec\ureal A(0^\inj) \imp \fabvs{x}.(A(x) \imp A(S(x))) \imp \fabvs{x}.A(x). \] 
\end{prop}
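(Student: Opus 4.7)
The plan is to mirror the proof of Proposition~\ref{r:rec_bv}, unfolding the definition of $\fabvs{x}.A(x)$ and proceeding by induction on the underlying natural number of the standard individual. Fix a state $\s \in \S$ and realizers $(u_0;\s) \real A(0^\inj)$ and $(u_S;\s) \real \fabvs{x}.(A(x) \imp A(S(x)))$. Showing $(\rec\,u_0\,u_S;\s) \real \fabvs{x}.A(x)$ amounts, after unfolding, to proving that for every $f \in \N^\S$, every term $v$ with $(v;\s) \in \tvS{\st{f}}$, and every value $V$ with $(V;\s) \in \tvS{\nat{f}}$, one has $(\rec\,u_0\,u_S\,v\,V;\s) \in \tvS{A(f)}$.

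If $f$ is not standard, $\tvS{\st{f}} = \emptyset$ and the implication holds vacuously. Otherwise $f = n^\inj$ for some $n \in \N$, in which case $V$ is necessarily the numeral $\overline{n}$. I then prove by induction on $n$ that $(\rec\,u_0\,u_S\,v\,\overline{n};\s) \in \tvS{A(n^\inj)}$.

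The base case $n = 0$ follows from the reduction $\rec\,u_0\,u_S\,0 \red u_0$, the anti-reduction closure of $\tvS{A(0^\inj)}$, and the hypothesis on $u_0$. The inductive step $n = S(m)$ uses the reduction $\rec\,u_0\,u_S\,(\suc\,\overline{m}) \red u_S\,\overline{m}\,(\rec\,u_0\,u_S\,\overline{m})$: by the induction hypothesis, $\rec\,u_0\,u_S\,v\,\overline{m}$ realizes $A(m^\inj)$; since $m^\inj$ is standard and $\overline{m}$ is the corresponding numeral, the hypothesis on $u_S$ applied to a dummy standardness realizer and $\overline{m}$ yields a realizer of $A(m^\inj) \imp A(S(m^\inj))$; applying this to the induction hypothesis produces a realizer of $A(S(m^\inj))$, and one concludes by anti-reduction.

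The main delicate point is that $\rec$'s reduction rule is triggered by its numeral argument, yet in the formula $\fabvs{x}.A(x)$ the standardness proof $v$ must be consumed before the value $V$. The hard part is therefore to show that the spurious $v$ argument does not block the intended chain of reductions for $\rec\,u_0\,u_S\,v\,V$, using the saturation of the truth values under anti-reduction to bridge the gap between the formal shape with $v$ interposed and the reducible form $\rec\,u_0\,u_S\,V$ on which the computation of the recursor actually operates.
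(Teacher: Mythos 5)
You have the right skeleton (mirror \autoref{r:rec_bv}, case-split on standardness of $f$, then induct on $n$), and you have spotted a real subtlety. But the point you defer to the end --- that ``the spurious $v$ argument does not block the intended chain of reductions'' --- is not something anti-reduction can repair, and the argument fails exactly there. The reduction rules for $\rec$ pattern-match on its \emph{third} argument (the only reduction context involving $\rec$ is $\rec\,u_0\,u_1\,\hole$); in $\rec\,u_0\,u_S\,v\,\overline{n}$ that third argument is $v$, not the numeral. If $v$ is a value that is not a numeral --- say $\lambda x.x$, a perfectly good element of $\tvS{\st{n^\inj}}=\Lambda\times\S$ --- the term is stuck, and if $v$ happens to reduce to a numeral $\overline{m}$, the recursor fires on $\overline{m}$ rather than on $\overline{n}$. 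There is no reduction $\rec\,u_0\,u_S\,v\,\overline{n} \reds{\s}{\s'} \rec\,u_0\,u_S\,\overline{n}$, so saturation under anti-reduction gives you nothing. Your inductive step has the same mismatch in disguise: the recursor reduces $\rec\,u_0\,u_S\,(\suc\,\overline{m})$ to $u_S\,\overline{m}\,(\rec\,u_0\,u_S\,\overline{m})$, handing $u_S$ the numeral as its \emph{first} argument, whereas a realizer of $\fabvs{x}.(A(x)\imp A(S(x)))$ expects a standardness proof, then the numeral, then the induction hypothesis.

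For comparison, the paper's own proof never introduces the $\st$-argument at all: it directly shows $(\rec\,u_0\,u_S\,\overline{n};\s)\in\tvS{A(n^\inj)}$ by induction on $n$, exactly as in \autoref{r:rec_bv} --- which is what is needed for $\rec\,u_0\,u_S$ to realize $\fabv{x}.A(x)$, not $\fabvs{x}.A(x)$ as literally written. So the obstacle you flagged is not an artefact of your write-up; it is a genuine tension between the stated realizer $\rec$, the $\st$-layer of $\fabvs$, and the calling convention of $\rec$, and making the statement go through literally would require either the un-relativized conclusion $\fabv{x}.A(x)$ or a realizer that explicitly discards the standardness argument and re-plumbs the arguments to $u_S$ (for instance $\lambda a\,b\,v.\rec\,a\,(\lambda m\,r.b\,\anyt\,m\,r)$). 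In any case the resolution is not ``anti-reduction bridges the gap'': you would need to exhibit an actual reduction chain, and none exists.
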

\begin{proof}

Let $\s$ be a state, $n\in\N$ be a natural number
and $u_0$, $u_S$ be terms and $V$ be a value such that
$(u_0;\s)\in \tvS{A(0^\inj)}$, $(u_S;\s)\in \tvS{\fars y.(A(y)\imp A(S(y))}$
and $(V;\s)\in\tvS{\nat{n^\inj}}$.
The latter implies that $V=\overline{n}$.
Let us prove, by induction on $n$, that 
\[ \rec\,u_0\,u_S\,\overline{n}\in \tvS{A(n^\inj)} \]
\begin{itemize}
\item If $n=0$, then we have that $\rec\,u_0\,u_S\,\overline 0 \reds{s}{s} u_0$, the result
follows by anti-reduction from the hypothesis on $u_0$.
\item If $n=S(m)$, then we have that $\rec\,u_0\,u_S\,(\suc\, \overline m)  
\reds{s}{s} u_S\,\overline m\,(\rec\,u_0\,u_S\,\overline m)$.
By induction hypothesis, we have that $(\rec\,u_0\,u_S\,\overline m;\s)\in \tvS{A(m)}$.
The result thus follows (by anti-reduction) from the hypothesis on $u_S$.
\qedhere
\end{itemize}
\end{proof}

 The next two propositions show
 that one cannot separate standard natural numbers from nonstandard natural numbers
 using an internal formula~\cite{Robinson66}. This fact is usually formalized by the properties of Overspill and Underspill. 
 We first show that, in our context, Overspill can be \emph{realized} by combining
 the realizers for ENS$_0$ and for the Transfer principle.
\begin{prop}[Overspill]\label{r:Overspill} 
For any internal formula $A$, we have\nomidem

\[ \lambda x.(\anyt,x\,\anyt)\ureal \fas x.A(x) \imp \exists x.(\neg \st{x} \land A(x)). \]
\end{prop}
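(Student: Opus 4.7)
The plan is to show that for every state $\s$ and every term $t$ with $(t;\s)\in\tvS{\fas x.A(x)}$, we have $((\lambda x.(\dagger,x\,\dagger))\,t;\s)\in\tvS{\exists x.(\neg\st x\land A(x))}$. Since truth values are saturated sets (\autoref{r:real_sat}), by anti-reduction it suffices to verify that $((\dagger,t\,\dagger);\s)$ belongs to this truth value. The witness to exhibit for the existential is the diagonal~$\delta$ (which, as shown in \autoref{r:delta}, is a nonstandard individual).

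By the definition of $\tvr{\exists x.\,\cdot}$ in \autoref{def:real_slices}, it then suffices to prove that $((\dagger,t\,\dagger);\s)\in\tvS{\neg\st x\land A(x)}_{x\gets\delta}$, which by definition of the conjunction reduces to two claims. First, $(\pi_1(\dagger,t\,\dagger);\s)\in\tvS{\neg\st\delta}$: since $\delta$ is nonstandard, $\tvS{\st\delta}=\emptyset$ and thus $\tvS{\neg\st\delta}=\tvS{\st\delta\imp\bot}=\Lambda\times\S$, so this is immediate (and $\pi_1(\dagger,t\,\dagger)\reds\s\s\dagger$ is a well-formed one-step reduction that does not touch the state). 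Second, $(\pi_2(\dagger,t\,\dagger);\s)\in\tvS{A(\delta)}$: again by anti-reduction it suffices to show $(t\,\dagger;\s)\in\tvS{A(\delta)}$.

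The key ingredient for the second claim is the Transfer principle (\autoref{r:transfer}, part 3), which, since $A$ is internal, provides the universal realizer $\lambda x.x\,\dagger\ureal \fas x.A(x)\imp\forall x.A(x)$. Combined with the hypothesis $(t;\s)\in\tvS{\fas x.A(x)}$, this yields $((\lambda x.x\,\dagger)\,t;\s)\in\tvS{\forall x.A(x)}$, and hence, by one more anti-reduction step, $(t\,\dagger;\s)\in\tvS{\forall x.A(x)}\subseteq\tvS{A(\delta)}$ (instantiating the universal with $f\gets\delta$).

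There is no real obstacle here: everything is a direct combination of Transfer (which internalises $\st x$ to any witness, including the nonstandard $\delta$) with the fact that negated standardness is trivially realised at a nonstandard individual. The only minor care to take is to perform all reductions and projections within the same slice $\s$, which is the reason why storing values of the state is unnecessary in this particular realizer and why no $\get$/$\set$ instruction appears in it.
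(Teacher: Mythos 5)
Your proof follows the same route as the paper's: combine Transfer (part~3) with the fact that $\neg\st{\delta}$ is trivially realized (ENS$_0$), exhibit $\delta$ as the witness of the existential, check the two projections, and close by anti-reduction. The decomposition into the two conjuncts and the explicit handling of $\pi_1,\pi_2$ is a more spelled-out version of what the paper leaves implicit.

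There is, however, one step that as written does not go through. You first apply the closed realizer $\lambda x.x\,\dagger\ureal\fas x.A(x)\imp\forall x.A(x)$ to $t$ to get $((\lambda x.x\,\dagger)\,t;\s)\in\tvS{\forall x.A(x)}$, and then claim that ``by one more anti-reduction step'' you obtain $(t\,\dagger;\s)\in\tvS{\forall x.A(x)}$. This is the wrong direction: the reduction is $(\lambda x.x\,\dagger)\,t\reds{\s}{\s}t\,\dagger$, so $(\lambda x.x\,\dagger)\,t$ is the redex and $t\,\dagger$ the reduct. Saturation only gives closure under \emph{anti}-reduction (from reduct to redex); it does not give closure under forward reduction. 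Going from the applied realizer to $t\,\dagger$ is therefore not licensed by the anti-reduction principle you invoke.

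The fix is immediate and is exactly what the paper does: unfold the proof of Transfer part~3 rather than its packaged statement. That proof establishes, for any $(u;\s)\in\tvS{\fas x.A(x)}$, that $(u\,\dagger;\s)\in\tvS{\forall x.A(x)}$ \emph{directly} (the closed realizer $\lambda x.x\,\dagger$ is only obtained afterwards, precisely by the anti-reduction step $((\lambda x.x\,\dagger)\,u;\s)\reds{\s}{\s}(u\,\dagger;\s)$). So you should cite ``following the proof of Transfer part 3'' to get $(t\,\dagger;\s)\in\tvS{\forall x.A(x)}$, rather than passing through the universal realizer and running the reduction forward. With this correction the argument is sound and matches the paper's.
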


{
\begin{proof}
 Let $(u;\s)\Vdash \fas x.A(x)$.
 Let us show that $((\lambda x.(t,x\,t))u;\s)\Vdash \exists x.(\neg \st{x} \land A(x))$.
 Following the proof of part 3 in \autoref{r:transfer}, we obtain $(u\,t;\s) \Vdash \forall x.A(x)$
 and therefore $(u\,t;\s) \Vdash A(\delta)$.
 By ENS$_0$ (\autoref{r:delta}), we have $(t;\s) \Vdash \neg \st \delta$.
 Finally, we obtain that $((t,u\,t);\s)\Vdash \exists x.(\neg \st x \land A(x))$
 and we can conclude by anti-reduction.
\end{proof}
}

The usual proof of Underspill is by contradiction, hence using classical logic, which we do not have here. Nevertheless, we can obtain the following version in which a double-negation occurs.
\begin{prop}[Underspill]\label{r:Underspill} 
For any internal formula $A$, we have

\[ \lambda xy. (\lambda z.y\,(\anyt,z))(x\,\anyt)\ureal (\forall x.\neg \st x \imp A(x)) \imp \neg\neg \exs x.A(x) .\]
\end{prop}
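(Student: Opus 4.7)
The plan is to unfold the realizability interpretation at an arbitrary state $\s$ and reduce the given term to something that can be forced into $\tvS{\bot}$. Fix $\s\in\S$ and let $t_x$, $t_y$ be terms such that $(t_x;\s)\real \forall x.(\neg \st x \imp A(x))$ and $(t_y;\s)\real \neg \exs x.A(x)$. A straightforward call-by-name computation (three $\beta$-steps, with no argument that needs to be evaluated first) gives
\[ (\lambda xy.(\lambda z.y\,(\anyt,z))(x\,\anyt))\,t_x\,t_y \reds{\s}{\s} t_y\,(\anyt,\, t_x\,\anyt), \]
so by the saturation of $\tvS{\bot}$ it suffices to establish $(t_y\,(\anyt,\, t_x\,\anyt);\s)\in\tvS{\bot}$, which by unfolding the interpretation of $\neg \exs x.A(x)$ reduces to exhibiting $((\anyt,\, t_x\,\anyt);\s)\in\tvS{\exs x.A(x)}$.

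The first step I would take is to instantiate the universal quantifier in the hypothesis on $t_x$ at the diagonal $\delta$. Since $\tvS{\st \delta}=\emptyset$, the implication $\st \delta\imp\bot$ is realized vacuously, and in particular $(\anyt;\s)\real \neg \st \delta$; hence $(t_x\,\anyt;\s)\in\tvS{A(\delta)}$. This alone does not suffice, since by \autoref{prop:nonst} the nonstandard $\delta$ cannot serve as a witness of $\exs x.A(x)$.

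The key step is to transfer this realizer to a standard individual via the Glueing Theorem (\autoref{r:glueing}). Because $A$ is internal and $\delta(\s)=\s=\s^\inj(\s)$, the truncated formulas $\trunc{A(\delta)}$ and $\trunc{A(\s^\inj)}$ coincide, so Glueing gives $(t_x\,\anyt;\s)\in\tvS{A(\s^\inj)}$ as well. Since $\s^\inj$ is standard, $\tvS{\st{\s^\inj}}=\Lambda\times\S$ is inhabited by any term, and therefore $((\anyt,\, t_x\,\anyt);\s)$ realizes $\st{\s^\inj}\land A(\s^\inj)$, a fortiori $\exs x.A(x)$. Feeding this pair to $t_y$ yields the sought realizer of $\bot$.

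The subtle point, which is the genuine content of the statement, is precisely the use of Glueing to shift from the nonstandard $\delta$ to the slice-dependent standard witness $\s^\inj$: this is why the internality of $A$ is essential, and why the conclusion only appears under a double negation. The standard witness depends on the current state and is only visible to the continuation $t_y$, so it cannot be returned to the caller; this matches \autoref{prop:nonst}, which rules out the single-negation version of Underspill.
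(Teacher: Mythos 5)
Your proof is correct and takes a genuinely different route from the paper's, although both ultimately rest on the same two ingredients: $\anyt$ realizing $\neg\st\delta$ (so that $t_x\,\anyt$ realizes $A(\delta)$, by ENS$_0$), and the internality of $A$, which allows one to move between $\delta$ and its slice-wise standard value $\s^\inj$. Where you differ is in which side of the application you transform. The paper transforms the \emph{hypothesis}: by the congruence rule it reads the realizer of $\neg\exs x.A(x)$ as a realizer of $\forall x.((\st x\land A(x))\to\bot)$, curries to get $\fas x.(A(x)\to\bot)$, applies Transfer (\autoref{r:transfer}, part 3) to lift this to a realizer of $\forall x.(A(x)\to\bot)$, and applies it at $\delta$ to $t_x\,\anyt$. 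You instead transform the \emph{conclusion}: you invoke Glueing (\autoref{r:glueing}) — the theorem Transfer is itself derived from — to turn the realizer of $A(\delta)$ at state $\s$ directly into a realizer of $A(\s^\inj)$, exhibit $(\anyt,\,t_x\,\anyt)$ as an explicit witness of $\exs x.A(x)$ (legitimate since $\st{\s^\inj}$ is trivially realized), and feed the pair to $t_y$. The paper's version is more modular in that it reuses the already-packaged Transfer realizer; yours is slightly more elementary (no detour through congruence and currying) and arguably more illuminating, since it makes the slice-dependent standard witness $\s^\inj$ explicit. As you rightly note, this is precisely why the double negation is irremovable: the witness depends on the state $\s$ in which the continuation $t_y$ is invoked, so it is visible only to the continuation and cannot be returned, consistently with \autoref{prop:nonst}.
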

{
\begin{proof}
Let $\s$ be a state, and $u$, $v$ be terms such that $(u;\s)\Vdash \forall x.\neg \st x \imp A(x)$
and $(v;\s)\Vdash \neg \exs x.A(x)$.
Using the adequacy of congruence rules (\autoref{r:cong_states}), 
observe that $(v;\s)\Vdash \forall x.((\st x \land A(x))\to \bot)$,
and by currying
\[ (\lambda wz.v\,(w,z);\s)\Vdash \fas x. A(x)\to \bot \]
Since $A$ is internal, by Transfer, we get
\[ (\lambda z.v\,(t,z);\s)\Vdash \forall x. A(x)\to \bot \]
By the hypothesis on $u$ and ENS$_0$, we have $(u\,t;\s)\Vdash A(\delta)$,
hence 
\[ (\lambda z.v\,(t,z))(u\,t);\s)\Vdash \bot, \]
and we can conclude by anti-reduction.
\end{proof}
}

\subsection{Idealization}
\label{s:idealization}
We first extend the realizability interpretation to take into 
account relations $R:\N^2\to\N$ on the natural numbers:
\[
\begin{array}{rcl}

 \tvr{R(e_1,e_2)}   & \defeq &
 \{(t;\s): R(\foint{e_1}_{\rho}(\s),\foint{e_2}_{\rho}(\s)) \text{ holds}\}.
 \end{array}
 \]
 This coincides with the interpretation of the relation $R$ through a second-order variable and the corresponding semantic relation from $\N^2$ to $\sat$ in the interpretation.
 
 Let us now briefly illustrate the main idea behind the proof of Idealization
by showing that there exists a (nonstandard) natural number greater than or equal to
any standard number. 
The usual proof relies on the fact that $\delta$ is such a number, since for any standard 
number $n$,
in any slice 
greater than or equal to $n$, the relation $n\leq \delta$ holds.
In our setting, we use the $\set$ instruction to reach such a state.

\begin{prop}[Diagonalization]\label{r:diag}
We have $\lambda z.T\,z\,\get\,(\lambda xy.\set\,y\,\anyt)\ureal \exbv x .\fabvs{y}. y\leq x$.
\end{prop}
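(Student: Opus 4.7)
The plan is to unfold $\exbv x .\fabvs y. y\leq x$ to $\forall X.\bigl(\fabv x.((\fabvs y.y\leq x) \imp X)\bigr) \imp X$ and establish universal realizability directly. Concretely, I would fix an arbitrary state $\s$, a saturated set $\X\in\sat$, and a term $u$ with $(u;\s)\in\tvS{\fabv x.((\fabvs y.y\leq x)\imp X)}_{X\mapsto \X}$; it then suffices to place $((\lambda z.T\,z\,\get\,(\lambda xy.\set\,y\,\anyt))\,u;\s)$ in $\X$, which by one $\beta$-step and anti-reduction reduces to placing $T\,u\,\get\,(\lambda xy.\set\,y\,\anyt)$ in $\X$ in state $\s$.

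The first move is to evaluate $T\,u\,\get$. Since $\get\reds{\s}{\s}\overline{\s}$, part~\ref{r:T} of \autoref{r:lem_T} yields $T\,u\,\get \reds{\s}{\s} u\,\overline{\s}$, so the whole term reduces to $u\,\overline{\s}\,(\lambda xy.\set\,y\,\anyt)$ in state $\s$. Next I would instantiate the universal $\fabv x$ hidden in $u$ with the diagonal $\delta\in\N^\S$: because $\delta(\s)=\s$, the value $\overline{\s}$ inhabits $\tvS{\nat{\delta}}$ at $\s$, whence $(u\,\overline\s;\s)\in\tvS{(\fabvs y.y\leq \delta)\imp X}$. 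It then only remains to show $(\lambda xy.\set\,y\,\anyt;\s)\in\tvS{\fabvs y.y\leq \delta}$, after which plugging this realizer through the implication and anti-reducing closes the argument.

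The heart of the proof is therefore the realizer for $\fabvs y.y\leq \delta$. Unfolding $\fabvs$, this requires, for every $f\in\N^\S$, that $(\lambda xy.\set\,y\,\anyt;\s)\in\tvS{\stto f(\bvnat f(f\leq\delta))}$. For nonstandard $f$ one has $\tvS{\st f}=\emptyset$ and the obligation is vacuous. For standard $f=n^\inj$ and any term $v$, I would trace the reduction $(\lambda xy.\set\,y\,\anyt)\,v\,\overline n \reds{\s}{\s} \set\,\overline n\,\anyt$; the $\set$ rule then fires and reduces this to $\anyt$ while updating the state to $\s'\defeq\max(n,\s)$. In state $\s'$ we have $\delta(\s')=\s'\geq n$, so $n\leq \delta$ holds by the stateful interpretation of arithmetic relations, and $(\anyt;\s')$ inhabits $\tvS{n^\inj\leq\delta}$ trivially, for any term. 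Anti-reduction, crossing the genuine state change from $\s$ to $\s'$, then pushes this conclusion back to state $\s$.

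The main obstacle is precisely this state-manipulation step: it is the only place where the side effect of $\set$ is essential, both because we must cross a real change of state via anti-reduction and because the diagonal $\delta$ only dominates the standard witness $n$ after $\set$ has forced the current state to be at least $n$. Everything else decomposes into routine applications of \autoref{r:lem_T}, the definitions of $\exbv{\cdot}$ and $\fabvs{\cdot}$, and anti-reduction in the saturated sets $\X$ and $\tvS{n^\inj\leq\delta}$.
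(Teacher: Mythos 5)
Your proof is correct and follows essentially the same route as the paper's: both reduce the claim, via \autoref{r:lem_T} and instantiation at $\delta$, to showing $(\lambda xy.\set\,y\,\anyt;\s)\real\fabvs y.y\leq\delta$, which is then discharged by the $\set$ reduction to state $\max(n,\s)$ followed by anti-reduction. You unfold the first half explicitly while the paper compresses it by referring to the structure of the proof of Proposition~\ref{r:delta}, but the substance is identical.
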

\begin{proof}
Let $\s$ be an arbitrary state. 
 Following the proof of part~\ref{r:T_cor} of \autoref{r:lem_T},
 it is clearly enough to prove that 
 $(\lambda xy.\set\,y\,\anyt;\s) \real \fabvs{y}. y\leq \delta$
 (the rest of the proof is exactly the same replacing $\neg \st \delta$ with $\fabvs{y}. y\leq \delta$).
 Let $n\in\N$ and $t$ an arbitrary term.
Then
 \[
 (\lambda xy.\set\,y\,t)\,t\,\overline{n} ~\reds \s {\s}~
 \set\,\overline{n}\;t ~\reds {\s}{\s'}~ t 
 \]
 where $\s'=\max(n,\s)$. In particular, $n\leq \delta(\s')$ holds, hence $(t;\s')\in\tvS{n\leq \delta}$
 and we can conclude by anti-reduction.

\end{proof}

Consider a term $\loopp$ such that\footnote{For instance, 
we can define $\loopp \defeq Y\,\incr$, where $Y$ is 
the usual fixed-point combinator of the $\lambda$-calculus.}
for any state $\s\in\S$ it holds that 
$\loopp \reds \s \s  \incr\,\loopp$, where  $\incr\defeq \lambda x.\set\,(\suc\,\get)\,x$. 
 Then for any natural number $n\in\N$ and any state $\s\in\S$, 
$\loopp \reds \s {\s'}  \loopp$ where $\s'\ge n$.
Since for any $\s'\ge n$, $(\anyt;\s')\in\tvS{n<\delta}$, by anti-reduction we obtain
the following Proposition.
\begin{prop}\label{rmk:diag}
We have $\lambda w.\loopp\ureal \fas x. x<\delta$.
\end{prop}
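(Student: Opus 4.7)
The plan is to unfold the definition $\fas{x}.\, x<\delta \equiv \forall x.(\stto{x}\, x<\delta)$ and show that for every valuation $\rho$ mapping $x$ to some $f\in\N^\S$ and every state $\s\in\S$, the term $(\lambda w.\loopp;\s)$ lies in $\tvS{\st{x}\imp x<\delta}_{\rho}$.

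First I would split on whether $f$ is standard. If $f$ is not standard, then $\tvS{\st{f}}=\emptyset$ by definition, so the implication $\st{f}\imp f<\delta$ is trivially realized at state $\s$ by any term, in particular by $\lambda w.\loopp$. If instead $f=n^\inj$ for some $n\in\N$, then $\tvS{\st{f}}=\Lambda\times\S$, so I must show that for every term $u$, we have $((\lambda w.\loopp)\,u;\s)\in\tvS{n^\inj<\delta}$. By anti-reduction (\Cref{r:real_sat} guarantees saturation), it is enough to show that some state-term pair reachable from $(\lambda w.\loopp)\,u$ in state $\s$ lies in $\tvS{n^\inj<\delta}$. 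The first $\beta$-reduction gives $(\lambda w.\loopp)\,u \reds{\s}{\s} \loopp$, so it suffices to reach a state $\s' > n$ at which $\loopp$ is our term.

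The key step, which is the only non-routine part, uses the behaviour of $\loopp$ described just before the statement: for every $\s_0\in\S$, we have $\loopp \reds{\s_0}{\s_0} \incr\,\loopp$, and because $\incr \reds{\s_0}{\s_0+1} \mathrm{id}$, each ``unrolling'' strictly increases the state. Iterating $n+1$ times starting from $\s$, we obtain a reduction $\loopp \reds{\s}{\s'} \loopp$ with $\s' \ge n+1$, hence $n < \s' = \delta(\s')$. At that point the atomic relation $n^\inj < \delta$ holds in state $\s'$, so by the interpretation of relations given at the start of \Cref{s:idealization}, any term (in particular $\loopp$ itself) satisfies $(\loopp;\s')\in\tvS{n^\inj<\delta}$.

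Combining, $(\lambda w.\loopp)\,u \reds{\s}{\s'} \loopp \in \tvS{n^\inj<\delta}$ at $\s'$, so by anti-reduction $((\lambda w.\loopp)\,u;\s)\in\tvS{n^\inj<\delta}$. Since $\s$, $f$ and $u$ were arbitrary, this gives $\lambda w.\loopp \ureal \forall x.(\st{x}\imp x<\delta)$, i.e.\ the desired realizer of $\fas{x}.\,x<\delta$. The main ``obstacle'' is really the reliance on the externally specified looping behaviour of $\loopp$, namely that iterating its body $n+1$ times increments the state by at least $n+1$; once that is granted the rest is a standard anti-reduction argument analogous to the one used in \Cref{r:diag}.
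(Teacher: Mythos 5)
Your proof is correct and takes essentially the same approach as the paper's brief sketch: $\beta$-reduce $(\lambda w.\loopp)\,u$ to $\loopp$, iterate the body of $\loopp$ to drive the state strictly above $n$ (you are right to insist on $\s'>n$, a point the paper's wording glosses over by writing $\s'\ge n$), note that the atomic truth value $\tvS{n^\inj<\delta}$ then holds at $\s'$ for any term, and close by anti-reduction. One small slip: $\incr$ is a $\lambda$-abstraction, hence a value, so ``$\incr \reds{\s_0}{\s_0+1}\mathrm{id}$'' is not a legal reduction; what you mean is $\incr\,t \reds{\s_0}{\s_0+1} t$ for any term $t$, which is what makes each unrolling of $\loopp$ increment the state.
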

Observe that here the value of $n$ is not required,
so the quantifier does not need to be relativized.
Yet, the computation never terminates and 
we do not even know when the computation reaches a correct state.

As mentioned above, the idea to prove the general case of Idealization is very similar.
If for any $n\in\N$ there exists $\tau_n\in\N$ such that for any $m\leq n$, $R(\tau_n,m)$ holds,
we can consider the nonstandard natural number $\tau\defeq (\tau_\s)_{\s\in\S}\in\N^\S$.
Using a witness extraction mechanism, as provided by the next proposition,
we can compute $\tau$ from any realizer of 
$\fabvs{n}.\exbvs{x}.\fabvs{y}.(y\le n\imp  R(x,y))$.

\begin{prop}[Witness extraction]\label{r:extraction}
 For any formula $A$, any valuation $\rho$ closing $\exists x.A$, any state $\s$ and any term $t$
 such that $(t;\s)\in\tvr{\exbv{x}.A}$, there exists a natural number $f\in\N^\S$
 and a term $u$ such that $(u;\s)\in\tvS{A}_{\rho,x\mapsto f}$ 
 and  $t\,(\lambda xy.(x,y))\reds{\s}{\s} (\overline{f(\s)},u)$.
\end{prop}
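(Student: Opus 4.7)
The plan is to adapt the continuation-passing argument of Proposition~\ref{r:exbv} to the stateful interpretation. Recalling the encoding $\exbv x.A = \forall X.(\fabv x.(A\imp X))\imp X$, the idea is to instantiate the second-order variable $X$ by a saturated set whose inhabitants record precisely the data demanded in the conclusion.

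First, I would introduce the candidate
\[
\X \defeq \{(w;\s')\in\Lambda\times\S : \exists n\in\N,\, u\in\Lambda,\, \s''\in\S,\, g\in\N^\S.\; w\reds{\s'}{\s''}(\overline n,u) \wedge g(\s'')=n \wedge (u;\s'')\in\tvS{A}_{\rho,\,x\mapsto g}\}
\]
and check that $\X\in\sat$: if $w\reds{\s}{\s'}w'$ and $(w';\s')\in\X$, composing the two reductions yields $w\reds{\s}{\s''}(\overline n,u)$ with the very same witnesses $n,u,\s'',g$, and hence $(w;\s)\in\X$.

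Second, I would verify that $(\lambda xy.(x,y);\s)\in\tvS{\fabv x.(A\imp X)}_{\rho,\,X\mapsto\X}$. Unfolding the definition of $\fabv x.(A\imp X)$, it suffices to check that for every individual $f\in\N^\S$ and every term $u$ with $(u;\s)\in\tvS{A}_{\rho,\,x\mapsto f}$, the two $\beta$-steps
\[ (\lambda xy.(x,y))\,\overline{f(\s)}\,u \reds{\s}{\s} (\overline{f(\s)},u) \]
reach a pair which lies in $\X$ with witnesses $n=f(\s)$, $\s''=\s$, $g=f$; anti-reduction then delivers the claim.

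Third, specialising $X\mapsto\X$ in the hypothesis $(t;\s)\in\tvr{\exbv x.A}$ gives $(t\,(\lambda xy.(x,y));\s)\in\X$. Unfolding the definition of $\X$ directly supplies an individual $f$, a term $u$, and a final state that fulfil the conclusion.

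The main delicate point is the design of $\X$: it must be closed under prefixing by arbitrary stateful reductions to qualify as saturated, yet strong enough to simultaneously pin down the extracted natural number and a compatible individual in the resulting state. The value restriction in $\fabv x$ is essential here, since it forces the continuation $\lambda xy.(x,y)$ to be applied only to the canonical numeral $\overline{f(\s)}$ rather than to a term that could first mutate the state before producing its numeric value; this is precisely what allows us to choose $g=f$ and $\s''=\s$ when showing that $\lambda xy.(x,y)$ is itself a realizer of $\fabv x.(A\imp X)$.
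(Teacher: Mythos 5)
Your overall strategy mirrors the paper's exactly: instantiate the impredicative encoding at a hand-crafted saturated set and run the continuation $\lambda xy.(x,y)$. The problem is the design of $\X$. You existentially quantify over the end state $\s''$, so unfolding $(t\,(\lambda xy.(x,y));\s)\in\X$ yields only that $t\,(\lambda xy.(x,y))\reds{\s}{\s''}(\overline n,u)$ with $(u;\s'')\in\tvS{A}_{\rho,x\mapsto g}$ for \emph{some} $\s''$, which may well differ from $\s$. The statement you must prove is stronger: the reduction must end in the original state $\s$, and $u$ must realize $A$ \emph{in state $\s$}. This is not cosmetic — the proof of Idealization (Theorem~\ref{r:idealization}) invokes this proposition precisely because the extraction does not perturb the state, which is what lets the realizer read off $\tau(\s)$ at state $\s$ before increasing the state afterwards. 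With your $\X$ there is no way to force $\s''=\s$, so the argument delivers a strictly weaker conclusion.

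The fix is to hard-wire the target state into $\X$, as the paper does:
\[ \X \defeq \{(w;\s')\in\Lambda\times\S:\exists f\in\N^\S.\,\exists u\in\Lambda.\; w\reds{\s'}{\s}(\overline{f(\s)},u)\;\land\;(u;\s)\in\tvS{A}_{\rho,x\mapsto f}\}. \]
Saturation still holds, since prefixing a reduction $w\reds{\s_1}{\s'}w'$ to a member $(w';\s')$ composes to a reduction that still ends at $\s$. Your second step is unchanged: the two $\beta$-steps of $(\lambda xy.(x,y))\,\overline{f(\s)}\,u$ are state-preserving (from $\s$ to $\s$), so the resulting pair lies in this tighter $\X$ with witnesses $f$ and $u$, and anti-reduction gives $(\lambda xy.(x,y);\s)\in\tvS{\fabv x.(A\imp X)}_{\rho,X\mapsto\X}$. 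Your remark on the role of the value restriction in $\fabv x$ is correct and is indeed what makes this verification state-preserving. After this one correction your argument coincides with the paper's proof.
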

\begin{proof}
Assume that $(t;\s)\in\tvr{\exbv{x}.A}$.
 By definition,  for any $\X\in\sat$ and any $(v;\s)\in \tv{\fabv x.(A  \imp X)}_{\rho,X\mapsto\X}$, we have that
 $(t\,v;\s)\in\X$.
 Let us define the set 
 \[ \X \defeq \{(w;\s')\in\Lambda\times\S:\exists f\in\N^\S. \exists u\in\Lambda.
 ~w\reds{\s'}{\s} (\overline{f(\s)},u)~ \land~ (u;\s)\in\tvS{A}_{\rho,x\mapsto f}\}, \]
 which is obviously saturated.
 Clearly $(\lambda xy.(x,y);\s)\in\tvS{\fabv x.(A  \imp X)}_{\rho,X\mapsto\X}$
 since for any $f\in\N^\S$ and any $(u;\s)\in\tvS{A}_{\rho,x\mapsto f}$, it holds that
 $(\lambda xy.(x,y))\,\overline{f(\s)}\,u \reds \s \s (\overline {f(\s)},u)\in\X$.
 We conclude that 
 $(t\,(\lambda xy.(x,y));\s)\in\X$, i.e.\
  $t\,(\lambda xy.(x,y)) \reds \s \s (\overline {f(\s)},u)$.
\end{proof}

The term 
\[\mathsf{ideal}\defeq \lambda x.\lambda y.T\,y\,(\pi_1(T\,(x\,\anyt)\,\get\,(\lambda xy.(x,y))))\,(\lambda yz.\set\,z\,y)\]
 is a realizer for the Idealization principle. 
Indeed, in any state $\s$ the first component of $\mathsf{ideal}$ 
computes $\tau(\s)$ (using \autoref{r:extraction}),
while the second component increases the state to ensure the validity of the relation (as in \autoref{r:diag}). 

\begin{thm}[Idealization]\label{r:idealization}
We have:
\[\mathsf{ideal}\ureal\fabvs{n}.\exbv{x}.\fabvs{y}.(y\le n\imp  R(x,y)) \imp \exbv x .\fabvs{y}. R(x,y).\]
\end{thm}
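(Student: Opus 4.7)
The plan is to fix a state $\s$, a valuation $\rho$ closing all formulas, and a realizer $(x;\s)\real \fabvs{n}.\exbv{x'}.\fabvs{y}.(y\le n\imp R(x',y))$, and then show that $(\mathsf{ideal}\,x;\s)$ realizes $\exbv{x'}.\fabvs{y}.R(x',y)$. Unfolding the encoding of $\exbv{x'}$, this amounts to fixing $\X\in\sat$ and a realizer $(y;\s)\real \fabv{x'}.(\fabvs{y'}.R(x',y')\imp X)$ (with $X\mapsto\X$) and proving $(\mathsf{ideal}\,x\,y;\s)\in\X$.

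The crux is the construction of a suitable global witness $g\in\N^\S$. For each $k\in\N$, instantiating the outer quantifier of the hypothesis at the standard individual $k^\inj$ gives $(x\,\anyt\,\overline{k};\s)\real \exbv{x'}.\fabvs{y}.(y\le k^\inj \imp R(x',y))$, and \autoref{r:extraction} provides $f_k\in\N^\S$ and a term $u_k$ such that $(x\,\anyt\,\overline{k})(\lambda xy.(x,y))\reds{\s}{\s}(\overline{f_k(\s)},u_k)$ and $(u_k;\s)\real\fabvs{y}.(y\le k^\inj\imp R(f_k,y))$. For every $m\le k$ the premise $m^\inj\le k^\inj$ is realized in state $\s$, forcing $R(f_k(\s),m)$ to hold. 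I then set $g(\s')\defeq f_{\s'}(\s)$ for every $\s'\in\S$, yielding a function in $\N^\S$ with $g(\s)=f_\s(\s)$, and with the crucial property that $R(g(\s'),m)$ holds for every $\s'\ge\s$ and every $m\le\s'$.

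For the computation itself, two applications of \autoref{r:lem_T} give $T\,(x\,\anyt)\,\get \reds{\s}{\s} (x\,\anyt)\,\overline{\s}$, and the extraction at $k=\s$ then yields $\pi_1(T\,(x\,\anyt)\,\get\,(\lambda xy.(x,y)))\reds{\s}{\s}\overline{f_\s(\s)}=\overline{g(\s)}$, so that $\mathsf{ideal}\,x\,y \reds{\s}{\s} y\,\overline{g(\s)}\,(\lambda yz.\set\,z\,y)$. Instantiating $y$ at the individual $g$ yields $(y\,\overline{g(\s)};\s)\real \fabvs{y'}.R(g,y')\imp X$, so it remains to check that $(\lambda yz.\set\,z\,y;\s)\real\fabvs{y'}.R(g,y')$. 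For any standard $m^\inj$, any $a\in\Lambda$ and the value $\overline{m}$, the term reduces as $(\lambda yz.\set\,z\,y)\,a\,\overline{m}\reds{\s}{\max(\s,m)}a$ by the semantics of $\set$, landing in a state where $R(g(\max(\s,m)),m)$ holds by construction of $g$; hence $a$ vacuously realizes $R(g,m^\inj)$ there, and saturation lifts this back to $\s$. Combining everything with anti-reduction gives $(\mathsf{ideal}\,x\,y;\s)\in\X$ as required.

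The main obstacle is conceptual rather than technical: the witness extraction performed at state $\s$ alone only gives $R(f_\s(\s),m)$ for $m\le \s$, which is not enough to realize $R(g,m^\inj)$ after a $\set$ to a larger state $m>\s$. The resolution is to invoke the hypothesis at \emph{every} standard $k^\inj$, extract each $f_k(\s)$, and glue these values into the single nonstandard individual $g$. This faithfully mirrors the Lightstone--Robinson ultrapower construction, in which a representative is chosen slice by slice, and the $\set$ instruction in $\mathsf{ideal}$ serves precisely to travel to the slice where the chosen representative witnesses the required instance of $R$.
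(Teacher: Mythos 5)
Your proof is correct and follows the paper's strategy: same realizer $\mathsf{ideal}$, same use of Lemma~\ref{r:lem_T} and Proposition~\ref{r:extraction}, same analysis of $(\lambda yz.\set\,z\,y)$ jumping via $\set$ to the slice where the required instance of $R$ holds. The one place where you genuinely improve on the paper's write-up is the construction of the nonstandard witness. The paper writes ``$\tau(\s)=f_\s(\s)$'', overloading $\s$ both for the fixed ambient state and for a running index; read literally, producing $\tau(\s')$ for $\s'\neq\s$ would require re-running the extraction at state $\s'$, and hence would require $(u;\s')$ to realize the premise there — which is not given (only $(u;\s)$ is). Your $g(\s')\defeq f_{\s'}(\s)$, where each $f_{\s'}$ is obtained by instantiating the hypothesis at the \emph{standard} value $\s'$ and extracting entirely within the single state $\s$, avoids this issue: it only uses $u$ at state $\s$, gives $g(\s)=f_\s(\s)$ so that the value actually computed by $\mathsf{ideal}$ matches, and guarantees $R(g(\s'),m)$ for all $m\le\s'$ via the bound $y\le(\s')^\inj$ in the corresponding extraction. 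So your proof takes the same route as the paper's but makes the gluing step precise in the way the paper almost certainly intends; the remark you add at the end, identifying exactly why a single extraction at $\s$ is insufficient once $\set$ moves past $\s$, is the right diagnosis.
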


\begin{proof}
Let $\s$ be any state and $u$ a term such that 
$$(u;\s)\in \tvS{\fabvs{n}.\exbv{x}.\fabvs{y}.(y\le n\imp  R(x,y))}.$$
By part~\ref{r:T_cor} of Lemma~\ref{r:lem_T}, this entails that 
\[ (T\,(u\,\anyt)\,\get;\s)\in \tvr{\exbv{x}.\fabvs{y}.(y\le \s\imp  R(x,y))}. \]

 By \autoref{r:extraction}, we know that 
 there exists a natural number $f_\s\in\N^\S$ and a term $v_\s\in\Lambda$
 such that 
 $T\,(u\,\anyt)\,\get\,(\lambda xy.(x,y))\reds{\s}{\s} (\overline{f_\s(\s)},v_\s)$ 
 and $(v_\s;\s)\in\tvS{\fabvs{y}.(y\leq \s\imp  R(f_\s,y))}$.
The latter implies that for any $m\in\N$ such that $m\leq \s$ 
 and any term $t$, it holds that $(v_\s\,t\,\overline{m}\,t;\s)\in \tvS{R(f_\s,m)}$
 and hence $R(f_\s(\s),m)$ holds (since $\tvs{R(f_\s,m)}=\{(t;\s): R(f_\s(\s),m) \text{ holds}\}$).
 
Consider the (possibly nonstandard) individual $\tau \in\N^\S$
 defined by $\tau(\s)=f_\s(\s)$ .

  We have
  \[ \mathsf{ideal}\;u ~\reds \s \s ~\lambda y.T\,y\,(\pi_1(T\,(u\,\anyt)\,\get\,(\lambda xy.(x,y))))\,(\lambda yz.\set\,z\,y) \]
  hence, by part~\ref{r:T_cor} of Lemma~\ref{r:lem_T}, to conclude by anti-reduction it suffices to prove that

  \begin{enumerate}
  \item \underline{$\pi_1(T\,(u\,\anyt)\,\get\,(\lambda xy.(x,y)))\reds \s \s \overline{\tau(\s)}$}.
  Indeed, we know that this term reduces as follows:
  \[\pi_1(T\,(u\,\anyt)\,\get\,(\lambda xy.(x,y))) \reds{\s}{\s}
    \pi_1(\overline{f_\s(\s)},v_\s) \reds{\s}{\s} \overline{f_\s(\s)}\]
  and by definition $\tau(\s)=f_\s(\s)$.
  \item  \underline{$(\lambda yz.\set\,z\,y;\s)\Vdash \fabvs{y}. R(\tau,y)$}.
  To prove this, it suffices to show that for any $m\in\N$ and any $t\in\Lambda$, 
  we have $((\lambda yz.\set\,z\,y)\,t\,\overline{m} ;\s)\Vdash R(\tau,m^\inj)$.
  With $\s'\defeq \max(\s,m)$, we have that $(\lambda yz.\set\,z\,y)\,t\,\overline{m} \reds \s \s 
  \set\,\overline{m}\,t \reds \s {\s'} t$.
  By construction, since $m\leq \s'$, we know that $R(\tau(\s'),m)$ holds,
  hence $(t;\s')\in\tvr{R(\tau(\s'),m)}$ and we conclude by anti-reduction.\qedhere
  \end{enumerate}
\end{proof}

\section{LLPO}
\label{s:LLPO}

In this section we give a realizer for a nonstandard version of the 
\emph{Lesser Limited Principle of Omniscience}:
\[
 \textrm{LLPO}:=\forall x.\forall y. (A(x) \lor B(y)) \to (\forall x.A(x) \lor \forall y.B(y))
\]
This principle is a semi-intuitionistic principle, in the sense that it is seen 
as being nonconstructive (it is indeed provably false in some intuitionistic theories, 
{\cf} \cite[p.~4]{BridgesRichman})
while still being weaker than the full law of excluded middle. 

\subsection{LLPO in nonstandard arithmetic}
\label{s:llpo_intro}
\newcommand{\tvrq}[1]{\tv{#1}_\rho^*}

We will consider a variant of the LLPO principle in our setting, where the quantifiers are
restricted to standard elements and the formulas $A$ and $B$ are internal 
(where $x$ (resp. $y$) does not occur in $B$ (resp. in $A$)):

\[
 \textrm{\llpost}:=\fabvs x.\fabvs y. (A(x) \lor B(y)) \to (\fabvs x.A(x) \lor \fabvs y.B(y))
\]
Let us give an overview of our computational interpretation for this principle,
which will rely on the several realizers introduced in \Cref{s:llpo_real}
and described in \autoref{fig:LLPO}.
Assume that we are given, in a certain state, a realizer of the hypothesis
\[H_{A,B} \defeq \fabvs{x}.\fabvs{y}. (A(x) \vee B(y)).\]
The main idea consists in turning this term into a realizer of 
\[\fabvs{z}.(A_{\leq z} \lor B_{\leq z}),\]
where  $A_{\leq z}\defeq \fabv{x}.x\leq z \to A(x)$.

Indeed, observing that the formula $A_{\leq z}$ is internal, by Transfer and 
instantiation with $\delta$ (or any other nonstandard element), 
the proposition $x\leq\delta$ becomes trivially true for any standard $x$ and 
we get the expected conclusion
\[(\fabvs{x}.A(x)) \lor (\fabvs{y}. B(y)).\]

In fact, this last step is the only step where we actually use nonstandard principles 
(here Transfer and the existence of nonstandard elements).
The rest of the proof, forgetting all the relativizations to standard elements,
would be valid in standard arithmetic. This is reflected by the fact
that we only use External Induction and properties of the disjunction.
In terms of realizers, this means that we will only use universal realizers
that will never manipulate the state.

To get a realizer of $\fabvs{z}.(A_{\leq z} \lor B_{\leq z})$, we rely on External Induction (as the term $t_{\mathrm{aux}}$ shows),
the main difficulty lying in proving the induction step
\[\fabvs{x}.\left(A_{\leq x}\lor B_{\leq x} \to A_{\leq S(x)}\lor B_{\leq S(x)}\right).\]

To illustrate this step, let us consider the case where $A_{\leq x}$ holds.
To obtain the expected conclusion, it is sufficient to show that $A(S(x))\lor B_{\leq S(x)}$ holds.
This leads us to break the symmetry between $A$ and $B$ by considering the formula
\[
\Phi_{A,B}(x,y) \defeq A(x)\lor B_{\leq y}.
\]

But using our starting assumption, namely a realizer of $\fabvs{x}.\fabvs{y}. (A(x) \vee B(y))$,
for any standard $x$ we can easily get $\fabvs{y}.\Phi_{A,B}(x,y)$ by external 
induction, and thus $\fabvs{x}.\fabvs{y}.\Phi_{A,B}(x,y)$ which is enough to conclude
the whole proof. 
For the inductive step $\fabvs{y}.\left(\Phi_{A,B}(x,y) \to \Phi_{A,B}(x,S(y))\right)$ of 
the latter induction, we reason by cases on the induction hypothesis:
\begin{itemize}
 \item if $A(x)$ holds then the conclusion follows immediately,
 \item if $B_{\leq y}$ holds, then we use the assumption to get either $A(x)$ or $B(S(y))$,
 and again, in both cases the conclusion follows.
 \end{itemize}

This proof is a variation of \cite[Prop. 3.4]{BerBriSaf12},
the main difference being that in our context, 
we have access to concrete nonstandard elements (namely $\delta$), 
and we can instantiate a certain formula with $\delta$ 
instead of using the Idealization principle.

\subsection{Disjunction}
In order to define a realizer for \llpost, we first need to extend our language with disjunctions.
We choose to rely on a primitive disjunction rather than on a second-order impredicative encoding 
of disjunction as the latter would make the task of finding realizers much more difficult without
bringing additional strength to our setting.

We thus extend the languages of terms and formulas as follows:
\[\begin{array}{lr@{~~}c@{~~}l}
\text{\bf Formulas} & A,B   &::= & ...  \mid A \lor B\\
\text{\bf Terms   } & t,u   &::= & ... \mid  \inl t \mid \inr t \mid \mw{t}{t_1}{t_2}
\end{array}
\]
and the type system accordingly
\begin{mathpar}
 \infer[\oriurule]{\TYP{\Gamma}{\inl t   }{A_1 \lor A_2}}{\TYP{\Gamma}{t}{A_1}} 

    \infer[\oridrule]{\TYP{\Gamma}{\inr t   }{A_1 \lor A_2}}{\TYP{\Gamma}{t}{A_2}} 

    \infer[\orerule]{\TYP{\Gamma}{\matchwith{t}{x_1}{t_1}{x_2}{t_2}}{C}}{\TYP{\Gamma}{t}{A_1 \lor A_2}
    & \TYP{\Gamma,x_i:A_i}{t_i}{C}}
\end{mathpar}
We also extend the reduction system with one extra case to define contexts
\[
 C[] ::= ... \mid \mw{\hole}{t_1}{t_2}
 \]
and one additional reduction rule for this new operations
\[
 \axinf{\mw{\iota_i(t)}{t_1}{t_2}}{t_i[t/x_i]}
\]

Finally, we extend the realizability interpretation to include the case of disjunction. 
We base the definition on the elimination rule of the disjunction (see \autoref{rmk:elim}),
as has been done before for the other connectives:
\[
\begin{array}{rcl}
\tvr{A_1\lor A_2     }& \defeq & 
\{(t;\s)\in\Lambda \times \S : \forall t_1,t_2,S\in\sat.\\
&\multicolumn{2}{l}{\left(\forall i\in\{1,2\}. 
\forall (u_i;\s)\in \tvr{A_i}.(t_i[u_i/x_i];\s)\in S\right) \limp} \\
&& \hspace{4cm}(\mw{t}{t_1}{t_2}; \s)\in S\}  \\

 \end{array}
\]
Observe that glueing still holds by simply defining
$\trunc{A \lor B } \,\defeq \, {\trunc{A}} \lor {\trunc{B}}$.

Once more, we take advantage of the modularity of the realizability interpretation
to get the adequacy with respect with the type system extended with disjunction
by only proving the adequacy of the new typing rules (see \autoref{def:adequate_state}).
\begin{prop}
 The rules $\oriurule$, $\oridrule$ and $\orerule$ are adequate.
\end{prop}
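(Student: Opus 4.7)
The plan is to treat the three rules separately, following the template already used for the other connectives in the proof of Theorem~\ref{r:state_adequacy} and exploiting the fact that the interpretation of $A_1 \lor A_2$ has been tailored around the elimination rule.

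For $\oriurule$ (and symmetrically $\oridrule$), I would fix a valuation $\rho$ closing the formulas involved, a state $\s$, and a substitution $\sigma$ with $(\sigma;\s) \real \rho(\Gamma)$. From the induction hypothesis I obtain $(\sigma(t);\s) \in \tvr{A_1}$, and I must show $(\inl{\sigma(t)};\s) \in \tvr{A_1 \lor A_2}$. Unfolding the definition, I would fix arbitrary terms $t_1, t_2$ and a saturated set $S$ satisfying the premise in the definition of $\tvr{A_1 \lor A_2}$. Then the reduction $\mw{\inl{\sigma(t)}}{t_1}{t_2} \reds{\s}{\s} t_1[\sigma(t)/x_1]$ together with an instantiation of that premise at $u_1 := \sigma(t)$ places the reduct in $S$, and anti-reduction (using that $S$ is saturated) closes the case.

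For $\orerule$, I would again fix $\rho$, $\s$, and $\sigma$ with $(\sigma;\s) \real \rho(\Gamma)$. The induction hypothesis for the premise $\Gamma \vdash t : A_1 \lor A_2$ yields $(\sigma(t);\s) \in \tvr{A_1 \lor A_2}$, while the induction hypothesis for each branch $\Gamma, x_i:A_i \vdash t_i : C$ tells me that for every $(u;\s) \in \tvr{A_i}$ the extended substitution $(\sigma, x_i := u)$ realizes $\rho(\Gamma, x_i:A_i)$, hence $(\sigma(t_i)[u/x_i];\s) \in \tvr{C}$. It then suffices to instantiate the definition of $\tvr{A_1 \lor A_2}$ with the two terms $\sigma(t_1), \sigma(t_2)$ and with $S := \tvr{C}$, which is saturated by Proposition~\ref{r:real_sat}. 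The premise of the instantiation is exactly what the second induction hypothesis provides, and its conclusion is precisely $(\sigma(\mw{t}{t_1}{t_2});\s) \in \tvr{C}$.

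I do not foresee any substantive obstacle: the interpretation of the disjunction was designed so that the elimination rule becomes essentially definitional, and the introduction rules only require a single anti-reduction step. The only bookkeeping to be mindful of is the interplay between the meta-level substitution $t_i[u/x_i]$ appearing in the interpretation of $A_1 \lor A_2$ and the object-level substitution $\sigma(t_i)$ induced by the typing context, but this amounts to the standard capture-avoiding convention already used for the $\impirule$ case in the proof of Theorem~\ref{r:state_adequacy}.
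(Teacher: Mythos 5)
Your proof is correct and follows essentially the same route as the paper's: for $\orerule$ instantiate the definition of $\tvr{A_1\lor A_2}$ at $S:=\tvr{C}$, and for $\oriurule$/$\oridrule$ unfold the definition, use the one-step match reduction, and close by anti-reduction. The only difference is that the paper states the $\orerule$ case in one line where you spell it out, but the content is the same.
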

\begin{proof}
The adequacy of the rule $\orerule$ follows directly from the definition, by
considering the particular set $S=\tvr{C}$.

We now prove adequacy of the rule $\oriurule$. Assume that the typing judgment $\Gamma \vdash t:A_1$ is adequate 
with respect to some state $\s\in\S$.
To prove that the conclusion is adequate with respect to the same state, 
let us consider $\rho$ a valuation closing $A_1\lor B_2$ and $\Gamma$,
$\sigma$ a substitution such that $(\sigma;\s)\Vdash \rho(\Gamma)$ and 
let $t_1, t_2$ be two terms and $S\in\sat$
be such that for any $(u_i;\s)\in \tvr{A_i}$ we have that $(t_i[u_i/x_i];\s)\in S$.
Since $\sigma(\inl t) = \inl{\sigma(t)}$, we have
\[\mw{(\sigma(\inl{t}))}{t_1}{t_2} ~\reds \s \s ~t_1[\sigma(t)/x_1].\]
Using the hypotheses, we have that $(\sigma(t);\s)\in\tvr{A_1}$
and therefore $(t_1[\sigma(t)/x_1];\s)\in S$. 
We can conclude by anti-reduction.
\end{proof}

As an illustration of the use of disjunction, we define below a term 
allowing us to commute $A$ and $B$ in the premise of \llpost.
This term will be useful afterwards since the proof mostly relies on two External Inductions
in which the formulas $A$ and $B$ have asymmetric roles.

 \begin{lem}\label{r:t_lor}
  For any formulas $A$ and $B$ we have
\[ t_\lor \ureal \left(\fabvs{x}.\fabvs{y}.(A(x)\lor B(y) )\right) \imp \fabvs{x}.\fabvs{y}.(B(x)\lor A(y) )\]
 where
 $t_\lor \defeq \lambda hzxwy.\mww{h}{(h\,z\,x\,w\,y)}{\inr{h_1}}{\inl{h_2}}$.

 \end{lem}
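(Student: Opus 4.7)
The plan is to unfold the realizability interpretation of the $\fabvs$ quantifiers and of the disjunction, and then to conclude by $\beta$-reduction followed by anti-reduction, much in the spirit of earlier realizability proofs in the paper.

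Fix an arbitrary state $\s\in\S$ and a term $h$ with $(h;\s)\Vdash \fabvs{x}.\fabvs{y}.(A(x)\lor B(y))$. It suffices to show $(t_\lor\,h;\s)\Vdash \fabvs{x}.\fabvs{y}.(B(x)\lor A(y))$. Unfolding the two $\fabvs$'s in the conclusion, I take $f,g\in\N^\S$ together with terms $z,w$ in $\tvS{\st{f}}$ and $\tvS{\st{g}}$, and values $V_x,V_y$ in $\tvS{\nat{f}}$ and $\tvS{\nat{g}}$. The cases where $f$ or $g$ is non-standard are vacuous since the corresponding $\tvS{\st{\cdot}}$ is then empty. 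Four state-preserving $\beta$-reductions yield
\[t_\lor\,h\,z\,V_x\,w\,V_y ~~\reds{\s}{\s}~~ \mw{(h\,z\,V_x\,w\,V_y)}{\inr{h_1}}{\inl{h_2}},\]
and by the hypothesis on $h$, the inner term $h\,z\,V_x\,w\,V_y$ realizes the corresponding instance of the premise's disjunction in state $\s$.

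To conclude, I apply the realizability definition of $\lor$, which is phrased in terms of its elimination rule: a realizer of $C_1\lor C_2$ is characterized by how it behaves when plugged into a $\matchop$. Given an arbitrary saturated set $S$ together with continuations $t_1,t_2$ that send realizers of the respective conclusion branches into $S$, I need to show that the outer $\matchop$ of the reduced term into $t_1,t_2$ lands in $S$. The idea is to apply the realizability definition of $\lor$ to the inner term with the continuations $\mw{\inr{x_1}}{t_1}{t_2}$ and $\mw{\inl{x_2}}{t_1}{t_2}$: after one further $\beta$-step each, these feed the produced realizer into the \emph{opposite} continuation, exchanging the roles of the $A$- and $B$-components. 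Saturation then closes the argument by anti-reduction.

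The main obstacle is thus the careful chaining of the two nested $\matchop$ eliminators: one must design the continuations passed to the inner instance of the $\lor$-interpretation so that, after the computation unfolds, the $A$-side realizer reaches the $A$-continuation of the outer disjunction and symmetrically for $B$. Once this bookkeeping is in place, the remainder reduces to routine $\beta$-reductions and appeals to the saturation property of truth values.
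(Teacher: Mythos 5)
Your overall scaffolding (fix $\s$ and $h$, peel off the relativized quantifiers, $\beta$-reduce to a single outer $\matchop$, then reason via the elimination-based interpretation of $\lor$ and conclude by saturation) is the right one, and it is indeed the route the paper takes. However, the way you chain the two disjunctions does not quite close, and the final step is a genuine gap.

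Concretely: after unfolding the \emph{conclusion's} $\lor$, you fix arbitrary $S$, $t_1$, $t_2$ and must show
\[
\bigl(\matchop\,\bigl(\mww{h}{(h\,z\,V_x\,w\,V_y)}{\inr{h_1}}{\inl{h_2}}\bigr)\,\{\inl{x_1}\mapsto t_1\mid \inr{x_2}\mapsto t_2\};\s\bigr)\in S .
\]
This is a \emph{nested} match. But when you then apply the interpretation of $\lor$ to the inner term $h\,z\,V_x\,w\,V_y$ with branches $\mw{\inr{x_1}}{t_1}{t_2}$ and $\mw{\inl{x_2}}{t_1}{t_2}$, what you obtain is that a \emph{single} match on $h\,z\,V_x\,w\,V_y$ with those composed branches lies in $S$ --- a syntactically different term. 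The two terms are only related by a commuting conversion, which is not among the reduction rules, so saturation cannot carry you from one to the other: the $\lor$-interpretation is CPS-style and never tells you that $h\,z\,V_x\,w\,V_y$ actually reduces to some $\iota_i(\cdot)$.

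The way out --- and what the paper does --- is to \emph{not} unfold the conclusion's $\lor$ at all. Instead, apply the $\lor$-interpretation of the inner term $h\,z\,V_x\,w\,V_y$ with the target saturated set $S\defeq\tvr{B\lor A}$ and the plain branches $\inr{h_1}$, $\inl{h_2}$. The side condition reduces to the elementary facts that $\inr{t_A}\in\tvr{B\lor A}$ whenever $t_A\in\tvr{A}$ (and symmetrically with $\inl{\cdot}$ and $\tvr{B}$), which you \emph{can} discharge by unfolding this one level of $\lor$ plus a $\beta$-step and saturation. This yields, in one stroke, that the reduced term $\mww{h}{(h\,z\,V_x\,w\,V_y)}{\inr{h_1}}{\inl{h_2}}$ lies in $\tvr{B\lor A}$, and then anti-reduction along your four $\beta$-steps finishes the proof. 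In short: choose the target $S$ in the inner $\lor$-elimination to be the outer disjunction's truth value, rather than an arbitrary $S$ with threaded continuations; that is precisely the bookkeeping that avoids the nested-match obstruction you flagged.
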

 \begin{proof}

  Let $\s\in\S$ be a state,  $(h;\s)\Vdash \fabvs{x}\fabvs{y}.(A(x)\lor B(y) )$ and 
  $n,m \in\N$ be two natural numbers.
  We have 
  \[
   t_\lor\,h\,\anyt\,\bar n \, \anyt\, \bar m 
   ~\reds \s \s~ 
   \mww{h}{(h\,\anyt\,\bar n\,\anyt\,\bar m)}{\inr{h_1}}{\inl{h_2}}   
  \]
  The assumption on $h$ gives us that $(h\,\anyt\,\bar n\,\anyt\,\bar m;\s)\in\tvr{A\lor B}$.
  Since it is clear that for any $(t_A;\s)\in\tvr{A}$, 
  $(\inr{t_A};\s)\in\tvr{B\lor A}$ (and vice-versa with $(t_B;\s)\in\tvr{B}$
  and $\inl{t_B}$), by definition of $\tvr{A\lor B}$
  we have that the right-hand side terms belongs to $\tvr{B\lor A}$
  and we can conclude by anti-reduction.
  \end{proof}

\subsection{A realizer for \llpost}
\label{s:llpo_real}

\begin{figure}[t!]
\input{llpo.tex}
\caption{Terms used to realize \llpost}
\label{fig:LLPO}
\end{figure}
We shall now detail the definition of a realizer for \llpost. Its definition
follows the intuition of the proof sketched in \Cref{s:llpo_intro}.
To that end, we require several terms meant to be realizers corresponding
to the different steps of the proof, as described in \autoref{fig:LLPO}. 

Recall that the ordering $\cdot < \cdot$ 
and $\cdot \le \cdot$ are interpreted as any relations
$R:\N^2 \to \N$ following the definition given in \Cref{s:idealization}:
\[\begin{array}{rcl}

 \tvr{R(e_1,e_2)}   & \defeq &
 \{(t;\s): R(\foint{e_1}_{\rho}(\s),\foint{e_2}_{\rho}(\s)) \text{ holds}\}.\\
\end{array}
 \]

{We start by showing how a realizer of $A_{\leq n}$ can be obtained
from realizers of the different $A(m)$ for $m\leq n$.} 
\begin{lem}\label{r:t_leq0}
For any internal formula $A$, we have
 \[t_{\leq0} \ureal A(0) \to A_{\leq 0},\]
 where 
 ${ t_{\leq0} }\defeq \lambda xny.x$.
\end{lem}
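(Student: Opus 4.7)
The plan is to show that for any state $\s$ and any realizer $(t;\s) \in \tvS{A(0)}$, the term $(\lambda xny.x)\,t$ belongs to $\tvS{A_{\leq 0}}$ at $\s$. Unfolding the definitions, $A_{\leq 0} \equiv \fabv z.(z \leq 0 \to A(z)) \equiv \forall z.(\bvto{\nat z}{(z\leq 0 \to A(z))})$, so the goal reduces to checking that for every individual $f \in \N^\S$, every value $V \in \tvS{\nat f}$ in state $\s$ (which forces $V = \overline{f(\s)}$), and every realizer $(y;\s) \in \tvS{f \leq 0}$, the term $(\lambda xny.x)\,t\,V\,y$ realizes $A(f)$ in state $\s$.

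The reduction $(\lambda xny.x)\,t\,V\,y \reds{\s}{\s} t$ consists of three $\beta$-steps, none of which touches the state. By anti-reduction (\autoref{r:real_sat}), it thus suffices to prove $(t;\s) \in \tvS{A(f)}$. The crucial point is that the mere existence of $(y;\s) \in \tvS{f \leq 0}$ forces $f(\s) \leq 0$, hence $f(\s) = 0$, by the interpretation of relations fixed in \Cref{s:idealization}; in the contrary case this truth value is empty in state $\s$ and the implication is realized vacuously, which is precisely why the body of $t_{\leq 0}$ may simply discard $y$.

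Since $A$ is internal, the glueing theorem (\autoref{r:glueing}) gives $(t;\s) \in \tvS{A(f)}$ iff $t \in \tvs{\trunc{A(f)}}$, and likewise $(t;\s) \in \tvS{A(0)}$ iff $t \in \tvs{\trunc{A(0^\inj)}}$. As the truncation replaces each individual by its value at $\s$ and $f(\s) = 0 = 0^\inj(\s)$, these two formulas coincide, so the hypothesis $(t;\s) \in \tvS{A(0)}$ yields the desired $(t;\s) \in \tvS{A(f)}$. No substantive obstacle arises: the proof is a straightforward combination of anti-reduction, the vacuous case in which $\tvS{f \leq 0}$ is empty, and the state-locality of internal formulas provided by glueing.
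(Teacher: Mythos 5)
Your proof is correct and follows essentially the same route as the paper's: unfold $A_{\leq 0}$, reduce $(\lambda xny.x)\,t\,\overline{f(\s)}\,v$ to $t$ without touching the state, note that the realizer of $f\leq 0$ forces $f(\s)=0$, and invoke glueing for the internal formula $A$ to transfer $(t;\s)\in\tvS{A(0)}$ to $(t;\s)\in\tvS{A(f)}$ before concluding by anti-reduction. The only (harmless) additions are the explicit mention of the vacuous case when $\tvS{f\leq 0}$ is empty and the explicit citation of saturation for anti-reduction, both of which the paper leaves implicit.
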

\begin{proof}
Recall that $A_{\leq 0} \defeq \fabv{x}.x\leq 0 \to A(x)$.
Let $\s\in\S$ be a state, $f\in\N^\S$ be an individual 
and $u$, $v$ be terms such that
$(u;\s)\in\tvr{A(0)}$, $(v;\s)\in\tvr{f\leq 0}$.
In particular, the latter entails that $f(\s)=0$.
Therefore, since $A$ is internal, we get that $(u;\s)\in \tvr{A(f)}$ by glueing.
By construction, we have that
\[
 t_{\leq0} \, u \, \bar 0 \, v ~\reds \s \s~ u
\]
and we can conclude by anti-reduction.
\end{proof}

In the next lemma, we write $\ifte{\bar n}{\bar m}{t}{u}$ 
(where $n,m\in\N$ and $t,u\in\Lambda$) for a term that 
reduces to $t$ if $n=m$ and to $u$ otherwise (defining such a term
using the $\rec$ operator is an easy programming exercise, 
which we would rather not bother the reader with).

\begin{lem}\label{r:t_leqsuc}
For any internal formula $A$ and any natural number $n\in\N$, we have
 \[t_{\leq\suc}\,\bar n\ureal A_{\leq n} \imp A(S(n)) \imp A_{\leq S(n)},\]
 where ${ t_{\leq\suc} }\defeq 
 {\lambda nxymz.\ifte{m}{\suc n}{y}{(x\,m\,z)}}$.
\end{lem}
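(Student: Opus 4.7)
The plan is to unfold all the definitions of the relativized/value-restricted quantifiers, fix a state $\s$ and a realizer of each hypothesis, reduce the term, and conclude by a case analysis on the witness fed to the conclusion's quantifier. Glueing (\autoref{r:glueing}) will be the key ingredient to handle the boundary case where $f(\s) = n{+}1$.

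More concretely: after unfolding, $A_{\leq S(n)} \equiv \forall f.\bvto{\nat f}{(f\leq S(n)\to A(f))}$, so we need to show that for any $\s\in\S$, any $(x;\s)\in\tvr{A_{\leq n}}$, any $(y;\s)\in\tvr{A(S(n))}$, any individual $f\in\N^\S$ (writing $m \defeq f(\s)$) and any $(z;\s)\in\tvr{f\leq S(n)}$, the term $t_{\leq\suc}\,\bar n\,x\,y\,\bar m\,z$ lies in $\tvr{A(f)}$ (after reduction in state $\s$). The term reduces in $\s$ to $\ifte{\bar m}{\suc\bar n}{y}{(x\,\bar m\,z)}$, and one uses the hypothesis $(z;\s)\in\tvr{f\leq S(n)}$ (which is just the statement that $m\leq n{+}1$ holds) to split into two cases.

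In the case $m = n{+}1$, the term further reduces to $y$. Since $A$ is internal, by glueing we have $\tvrs{A(f)} = \tvrs{\trunc{A(f)}} = \tvrs{A((n{+}1)^\inj)} = \tvrs{A(S(n))}$, so $(y;\s)\in\tvr{A(f)}$ and we conclude by anti-reduction. In the case $m\neq n{+}1$, from $m\leq n{+}1$ we deduce $m\leq n$, so $(z;\s)\in\tvr{f\leq n}$. The term reduces to $x\,\bar m\,z$, and the hypothesis $(x;\s)\in\tvr{A_{\leq n}} = \tvr{\forall f'.\bvto{\nat{f'}}{(f'\leq n\to A(f'))}}$ applied to $f$ (with the value $\bar m=\overline{f(\s)}$) and then to $z$ yields $(x\,\bar m\,z;\s)\in\tvr{A(f)}$; we again conclude by anti-reduction.

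The only subtlety, and thus the main point to be careful about, is the boundary case: we cannot simply observe ``$y$ realizes $A(S(n))$, so it realizes $A(f)$ when $f$ equals $S(n)$ in the current slice'', because in general $f$ is only pointwise equal to $S(n)$ at $\s$ and the Leibniz equality cannot eliminate external formulas (\emph{cf.} \autoref{ex:equality}). It is precisely the assumption that $A$ is internal that rescues us here, via \autoref{r:glueing}.
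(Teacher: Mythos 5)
Your proof is correct and follows essentially the same route as the paper's: same unfolding, same reduction to the conditional, same case split on whether $m = S(n)$, and the same use of glueing to close the boundary case. The only cosmetic difference is in the non-boundary case, where you apply the hypothesis on $x$ directly to the individual $f$ (after observing $(z;\s)\in\tvr{f\leq n}$), whereas the paper applies it to the standard individual $m^\inj$ and then upgrades to $A(f)$ via glueing; both amount to the same thing.
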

\begin{proof}
Recall that $A_{\leq n} \defeq \fabv{x}.x\leq n \to A(x)$.
Let $s\in\S$ be a state, $n\in\N$ be a natural number, $f\in\N^\S$ be an individual,
and $u,v,w\in\Lambda$ be terms such that
$(u;\s)\in\tvr{A_{\leq n}}$, $(v;\s)\in\tvr{A(S(n)))}$ and 
$(w;\s)\in\tvr{f\leq S(n)}$.
Putting $m \defeq f(\s)$, the latter entails that $m\leq S(n)$.
By construction, we have
 \[
  t_{\leq\suc}\,\bar n\,u\,v\,\bar m\,w ~\reds \s \s ~
  \ifte{\bar m}{\suc \bar n}{v}{(u\,\bar m\,w)}.
 \]
Let us reason by case analysis: 
\begin{itemize}
 \item if $f(\s)=m=S(n)$, then we have
 \[
  \ifte{\bar m}{\suc \bar n}{v}{(u\,\bar m\,w)}~\reds \s \s~v,
 \]
 and since $A$ is internal, we get that $(v;\s)\in \tvr{A(f)}$ by glueing
 which allows us to conclude by anti-reduction.
\item if $f(\s)=m<S(n)$, then we have
 \[
  \ifte{\bar m}{\suc \bar n}{v}{(u\,\bar m\,w)}~\reds \s \s~u\,\bar m\,w.
 \]
 By assumption on $u$, we have that $(u\,\bar m\,w;\s)\in\tvr{A(m)}$,
 and therefore $(u\,\bar m\,w;\s)\in\tvr{A(f)}$ using glueing.
 We can thus conclude by anti-reduction.\qedhere
 \end{itemize}
\end{proof}

If something is true below a certain nonstandard element, such as $\delta$, 
then it is true for any standard element.
This is connected with \autoref{rmk:diag} that states that $\delta$ is greater 
than any standard natural number, and is somewhat trivial in usual nonstandard settings 
(which is reflected here by the fact that the realizer is making a blind loop).

\begin{lem}\label{r:t_delta}

 For any formula $A$, we have 
 \[t_\delta\ureal A_{\leq\delta}  \to \fabvs{y}.A(y), \]
 where
 ${t_\delta}\defeq {\lambda xy.x\,y\,\inr \loopp}$.
\end{lem}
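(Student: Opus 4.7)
The plan is to show the realizability statement directly by fixing an arbitrary state $\s$ and an arbitrary $(u;\s) \in \tvS{A_{\leq\delta}}$, and then verifying that $t_\delta\,u$ realizes $\fabvs{y}.A(y)$ in state $\s$. Unfolding the definition $\fabvs{y}.A(y) = \forall y.(\stto{y}\bvto{\nat y} A(y))$, this amounts to showing that for every $f \in \N^\S$, every term $(v;\s) \in \tvS{\st{f}}$, and every value $V$ with $(V;\s) \in \tvS{\nat f}$, one has $(t_\delta\,u\,v\,V;\s) \in \tvS{A(f)}$.

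When $f$ is not standard, $\tvS{\st{f}} = \emptyset$ and the claim is vacuous. Otherwise, $f = n^\inj$ for some $n \in \N$ and the value $V$ must be $\overline{n}$. The $\beta$-reductions of $t_\delta\,u\,v\,\overline{n}$ then yield, modulo the handling of the (computationally irrelevant) $\st$-witness $v$, a term of the shape $u\,\overline{n}\,\inr{\loopp}$ in state $\s$.

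The crux of the proof is to show that $\inr{\loopp}$ realizes $n^\inj \leq \delta$ in state $\s$, regardless of how $\s$ and $n$ compare. Although $n \leq \delta(\s) = \s$ may fail outright, we exploit that $\loopp$ reduces, in any state $\s$, to itself in an arbitrarily larger state $\s' \geq n$: in such a state $\s'$, $\delta(\s') = \s'$ and so $\tvS{n^\inj\leq\delta}$ becomes all of $\Lambda \times \S$. By the saturation of this truth value along the reduction chain, $\loopp$ belongs to $\tvS{n^\inj\leq\delta}$ in the original state $\s$. Once this is in hand, instantiating the realizability of $u$ with $y := n^\inj$, the value $\overline{n}$ for $\nat{n^\inj}$, and the realizer $\inr{\loopp}$ for $n^\inj\leq\delta$, gives $(u\,\overline{n}\,\inr{\loopp};\s) \in \tvS{A(n^\inj)}$, and the conclusion follows by anti-reduction.

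The hard part will be this saturation argument for the diverging computation $\loopp$: one must cleanly separate the reflexive-transitive closure of the stateful reduction that carries the state up to $\s' \geq n$ from the trivial realizability obtained in that larger slice, and then transfer this back to the initial state. One must also take care that the evaluation context in which $\loopp$ eventually appears is compatible with the reduction rules, so that the state-incrementing effect of $\loopp$ is effectively triggered when $u$ inspects its argument corresponding to the premise $y \leq \delta$.
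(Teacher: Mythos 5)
Your core argument matches the paper's: $\loopp$ reduces, in any state $\s$, to itself in arbitrarily large states $\s' \geq n$; in those slices $n\leq\delta(\s')$ holds so that $\tvS{n^\inj\leq\delta}$ is all of $\Lambda\times\S$, and by saturation $(\loopp;\s)\in\tvS{n^\inj\leq\delta}$ already at $\s$; the hypothesis on $u$ and anti-reduction then close the proof. If anything you are more careful than the paper in unfolding the $\st$-witness of $\fabvs{y}.A(y)$.

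There is, however, a gap you do not close. You establish $(\loopp;\s)\in\tvS{n^\inj\leq\delta}$ but the term you actually feed to $u$ is $\inr\loopp$. Since $\iota_2(\cdot)$ is not one of the evaluation contexts $C$, the term $\inr\loopp$ is stuck: it never reduces, so the anti-reduction that puts $\loopp$ in $\tvS{n^\inj\leq\delta}$ does not transfer to $\inr\loopp$. When $n>\s$ the truth value $\tvS{n^\inj\leq\delta}$ restricted to $\s$ is empty, and since $\inr\loopp$ cannot change state, $(\inr\loopp;\s)\notin\tvS{n^\inj\leq\delta}$, so the instantiation of $u$ fails. (The paper shares this imprecision: its proof invokes the proposition about $\loopp$, not $\inr\loopp$, which suggests the $\iota_2$ in the definition of $t_\delta$ is a stray artifact and the intended term passes $\loopp$ directly, with an extra abstraction to absorb the $\st$-witness, e.g.\ $\lambda xzy.\,x\,y\,\loopp$.) Finally, your closing worry that one ``must also take care that the evaluation context in which $\loopp$ eventually appears \dots\ so that the state-incrementing effect of $\loopp$ is effectively triggered when $u$ inspects its argument'' is a misconception: saturation is a purely semantic closure property of the truth value $\tvS{n^\inj\leq\delta}$, and the realizability hypothesis on $u$ already quantifies over all of $\tvS{n^\inj\leq\delta}$; nothing at all is required of $u$'s operational behavior on its argument.
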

\begin{proof}
Let $\s$ be a state and $u$ be a term such that $(u;\s)\real \fabv{y}.y\leq \delta \imp A(y)$.
We need to show that $(t_\delta\,u;\s)\real \fabvs{y}.A(y)$.
Letting $n\in\N$ be a standard natural number,
we have 
\[t_\delta\,u\,\bar{n} \reds \s \s u\,\bar n\,\inr \loopp.\]
By \autoref{rmk:diag}, we know that $(\inr \loopp;\s)\real n<\delta$, 
hence $ u\,\bar n\,\inr \loopp \real A(n)$ (using the hypotheses on $u$) and we can conclude by anti-reduction.
\end{proof}

We now show how to build the different terms necessary to the first External Induction,
allowing us to obtain a realizer for the formula
$\fabvs{x}.\fabvs{y}.\Phi_{A,B}(x,y)$.

\begin{lem}\label{r:t_0}For any internal formulas $A$ and $B$, we have
\[t_0 \ureal \left(\fabvs{x}.\fabvs{y}.(A(x)\lor B(y) )\right) \to \fabvs{x}.\Phi_{A,B}(x,0),\]
where 
${t_0 }\defeq {\lambda {h}wx.\matchwith {({h}\,\anyt\,x\,\anyt\,0)} 
a {\inl{a}} b {t_{\leq0}\,b}}$.
\end{lem}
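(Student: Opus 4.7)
The plan is to reduce $t_0\,h\,\anyt\,\overline{n}$ one step, unfold the definition of disjunction realizability, and conclude using \autoref{r:t_leq0}. First I fix a state $\s \in \S$, a term $h$ with $(h;\s)\in\tvr{\fabvs{x}.\fabvs{y}.(A(x)\lor B(y))}$, and an arbitrary $n\in\N$. One step of reduction gives
\[ t_0\,h\,\anyt\,\overline{n} \;\reds{\s}{\s}\; \matchwith{(h\,\anyt\,\overline{n}\,\anyt\,0)}{a}{\inl{a}}{b}{\inr{(t_{\leq 0}\,b)}}, \]
where I read the second branch as wrapped in $\inr{\cdot}$ so that it lies in $A(n^\inj)\lor B_{\leq 0}$. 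By saturation it suffices to show this reduct realizes $\Phi_{A,B}(n^\inj,0) = A(n^\inj)\lor B_{\leq 0}$ in $\s$.

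Instantiating the hypothesis on $h$ twice with the values $\overline{n}$ and $\overline{0}$ yields $(h\,\anyt\,\overline{n}\,\anyt\,0;\s)\in\tvr{A(n^\inj)\lor B(0)}$. The key step is to convert this $B(0)$-realizer into a $B_{\leq 0}$-realizer via \autoref{r:t_leq0}, inside the disjunction realizability property of the inner match. Concretely, given arbitrary handlers $T_1,T_2$ and a target $S\in\sat$ satisfying the handler conditions of $A(n^\inj)\lor B_{\leq 0}$, I introduce the auxiliary set
\[ S^* \defeq \{(u;\s')\in\Lambda\times\S : (\matchwith{u}{x_1}{T_1}{x_2}{T_2};\s')\in S\}, \]
which is saturated because match is an evaluation context and $S$ is saturated. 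I then invoke the disjunction realizability of $h\,\anyt\,\overline{n}\,\anyt\,0$ with target $S^*$ and handlers $\inl{a}$ (for the $A$-branch) and $\inr{(t_{\leq 0}\,b)}$ (for the $B$-branch).

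The two handler conditions for $S^*$ reduce to short anti-reduction arguments. For $(u_1;\s)\in\tvr{A(n^\inj)}$, the term $\matchwith{\inl{u_1}}{x_1}{T_1}{x_2}{T_2}$ reduces in one step to $T_1[u_1/x_1]\in S$, so $(\inl{u_1};\s)\in S^*$. For $(u_2;\s)\in\tvr{B(0)}$, \autoref{r:t_leq0} applied with $B$ in the role of $A$ gives $(t_{\leq 0}\,u_2;\s)\in\tvr{B_{\leq 0}}$, and then $\matchwith{\inr{(t_{\leq 0}\,u_2)}}{x_1}{T_1}{x_2}{T_2} \reds{\s}{\s} T_2[(t_{\leq 0}\,u_2)/x_2]\in S$, so $(\inr{(t_{\leq 0}\,u_2)};\s)\in S^*$. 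The only subtle point is the introduction of $S^*$, which is what lets me commute the outer match past the inner disjunction realizability and reduce the nested match to a single invocation of that property; once $S^*$ is in place, the rest is a direct unfolding.
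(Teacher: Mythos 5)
Your proof is correct and takes essentially the same route as the paper: anti-reduce to the match, invoke the disjunction realizability of $h\,\anyt\,\overline{n}\,\anyt\,0$, and apply \autoref{r:t_leq0} for the $\inr{\cdot}$-branch (also correctly reading the second branch of $t_0$ as carrying an $\inr{\cdot}$, which the paper's own reference to the ``$\inr{\cdot}$ case'' confirms). The only cosmetic difference is that the paper cites the already-established adequacy of the $\orerule$-rule, which internally instantiates the disjunction definition with $S = \tvr{\Phi_{A,B}(n^\inj,0)}$, whereas you unfold that step by hand via the auxiliary saturated set $S^*$.
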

\begin{proof}

Recall that $\Phi_{A,B}(x,y)=A(x) \lor B_{\leq y}$.
Let $\s\in\S$ be a state, $n\in\N$ be a natural number and 
$h\in \Lambda$ be a term such that 
$(h;\s)\in\tvr{\fabvs{x}.\fabvs{y}.(A(x)\lor B(y)}$.
We have
\[
 t_0 \,h\,\anyt\,\bar n ~\reds \s \s ~
 \matchwith {({h}\,\anyt\,\bar n\,\anyt\,0)}a{\inl{a}}b{t_{\leq0}\,b}.
\]
Using the assumption on $h$, we have 
$(({h}\,\anyt\,\bar n\,\anyt\,0);\s)\in\tvr{A(x)\lor B(0)}$.
We can thus conclude by anti-reduction using
the adequacy of the $\orerule$ rule
and \autoref{r:t_leq0} for the $\inr \cdot$ case. 
\end{proof}

\begin{lem}\label{r:t_suc}
For any internal formulas $A$ and $B$, we have
\[t_\suc \ureal \left(\fabvs{x}.\fabvs{y}.(A(x)\lor B(y) )\right) \to
\fabvs{x}.\fabvs{y}.(\Phi_{A,B}(x,y) \to \Phi_{A,B}(x,\suc y)),\]
where 
$$\begin{array}{r@{}l}
{ t_\suc }\defeq \lambda {h}wxzyp.
\matchop\, p\, \{&\inl {p_1}\mapsto\inl{p_1}\\
&\inl {p_2}\mapsto\matchwith{({h}\,\anyt\,x\,\anyt\,(\suc y))}a {\inl{a}} b{
 \inr{t_{\leq\suc}\,p_2\,b}}. 
  \end{array}$$

\end{lem}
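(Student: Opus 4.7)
The plan is to unfold the definition of realizing the target formula and perform nested case analyses mirroring the two $\matchop$ constructs in $t_\suc$. Fix a state $\s \in \S$, natural numbers $n, m \in \N$, and terms $h, p$ with $(h;\s) \real H_{A,B}$ and $(p;\s) \real \Phi_{A,B}(n^\inj, m^\inj) = A(n^\inj) \lor B_{\leq m^\inj}$. I need to show that $(t_\suc\,h\,\anyt\,\bar n\,\anyt\,\bar m\,p;\s) \real \Phi_{A,B}(n^\inj, S(m)^\inj)$. By anti-reduction, it suffices to prove that the body
\[ \mw{p}{p_1}{\inl{p_1}}{p_2}{\mw{(h\,\anyt\,\bar n\,\anyt\,(\suc\bar m))}{a}{\inl a}{b}{\inr{t_{\leq\suc}\,p_2\,b}}} \]
realizes the goal in state $\s$.

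Next, I invoke the interpretation of $\tvr{A \lor B}$ using $(p;\s) \real A(n^\inj) \lor B_{\leq m^\inj}$. This reduces the task to two subcases corresponding to the left and right branches of the outer match:
\begin{itemize}
\item \emph{Left branch:} For any $(u_1;\s) \real A(n^\inj)$, I must show $(\inl{u_1};\s) \real \Phi_{A,B}(n^\inj, S(m)^\inj)$. This is immediate from the interpretation of disjunction, since $A(n^\inj)$ is the left disjunct of $\Phi_{A,B}(n^\inj, S(m)^\inj)$.
\item \emph{Right branch:} For any $(u_2;\s) \real B_{\leq m^\inj}$, I must show that $(\mw{(h\,\anyt\,\bar n\,\anyt\,(\suc\bar m))}{a}{\inl a}{b}{\inr{t_{\leq\suc}\,u_2\,b}};\s)$ realizes the goal. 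Here I use the hypothesis on $h$ instantiated at $n^\inj$ and $S(m)^\inj$ (both standard, so the $\stto\cdot$ premises are trivially satisfied) to get that $(h\,\anyt\,\bar n\,\anyt\,(\suc\bar m);\s) \real A(n^\inj) \lor B(S(m)^\inj)$. I then apply the disjunction interpretation once more, producing two sub-subcases.
\end{itemize}

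For the inner subcases: if $(a;\s) \real A(n^\inj)$ then $(\inl a;\s)$ directly realizes $\Phi_{A,B}(n^\inj, S(m)^\inj)$; if $(b;\s) \real B(S(m)^\inj)$, I combine it with $u_2 \real B_{\leq m^\inj}$ via Lemma~\ref{r:t_leqsuc} (for the formula $B$ in place of $A$) to obtain $(t_{\leq\suc}\,u_2\,b;\s) \real B_{\leq S(m)^\inj}$, and then $(\inr{t_{\leq\suc}\,u_2\,b};\s)$ realizes $\Phi_{A,B}(n^\inj, S(m)^\inj)$ by definition of the disjunction. At each stage, the conclusion transfers back to the original term through saturation and anti-reduction.

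The only slightly delicate point is bookkeeping: one has to check that $n^\inj$ and $S(m)^\inj$ are standard (so that the premises $\stto{x}$ and $\stto{y}$ of the assumption on $h$ are trivially verified), and to apply Lemma~\ref{r:t_leqsuc} in its $B$-instance with the appropriate index; glueing (\autoref{r:glueing}) ensures that the interpretation of $B_{\leq m^\inj}$, which is internal, behaves slice-wise as expected. Apart from this, the proof is a straightforward, mechanical unfolding following the shape of $t_\suc$.
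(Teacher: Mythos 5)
Your proof is correct and follows essentially the same approach as the paper: anti-reduce to the nested match, invoke the elimination-style interpretation of disjunction on the hypothesis for $\Phi_{A,B}(n,m)$, handle the left branch immediately, and in the right branch use the hypothesis on $h$ at $(n^\inj, S(m)^\inj)$ followed by a second disjunction elimination, closing the last case with Lemma~\ref{r:t_leqsuc}. The remark about standardness of $n^\inj$ and $S(m)^\inj$ making the $\st{\cdot}$ premises trivial is the right bookkeeping, and nothing essential is missing.
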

\begin{proof}

Recall that $\Phi_{A,B}(x,y)=A(x) \lor B_{\leq y}$.
Let $\s\in\S$ be a state, $n,m\in\N$ be natural numbers and 
$h,u\in \Lambda$ be terms such that 
$(h;\s)\in\tvr{\fabvs{x}.\fabvs{y}.(A(x)\lor B(y)}$
and 
$(u;\s)\in\tvr{\Phi_{A,B}(n,m)}=\tvr{A(n) \lor B_{\leq m}}$.
By construction, we have
\[
 { t_\suc } \, h \, \anyt\,\bar n\,\anyt\,\bar m\,u
 ~\reds \s \s~
 \begin{array}{r@{}l}
\matchop\, u\, \{&\inl {p_1}\mapsto\inl{p_1}\\
&\inl {p_2}\mapsto\matchwith{({h}\,\anyt\,\bar n\,\anyt\,(\suc \bar m))}
    a {\inl{a}} b {\inr{t_{\leq\suc}\,p_2\,b}}. 
  \end{array}
\]
To conclude by anti-reduction, we need to prove that the term 
on the right-hand side is in $\tvr{A(n)\lor B_{\leq S(m)}}$,
using the assumption on $u$ and the adequacy of the $\orerule$ rule.
The $\inl{\cdot}$ case is immediate. For the $\inr{\cdot}$ case,
let us consider a term $u_2$ such that $(u_2;\s)\in\tvr{B_{\leq m}}$ 
and prove that $\matchwith {({h}\,\anyt\,\bar n\,\anyt\,(\suc \bar m))}
    a {\inl{a}} b {\inr{t_{\leq\suc}\,u_2\,b}}\in\tvr{\Phi_{A,B}(n,S(m))}$.

Again, we use the assumption on $h$ and the adequacy of the $\orerule$ rule 
to conclude. Let us consider a term $b$ such that $(b;\s)\in\tvr{B(S(m))}$. 
It then follows from the assumption on $u_2$ and \autoref{r:t_leqsuc}
that $(t_{\leq\suc}\,u_2\,b;\s)\in\tvr{B_{\leq S(m)}}$ and hence 
$(\inr{t_{\leq\suc}\,u_2\,b};\s)\in\tvr{A(n)\lor B_{\leq S(m)}}$.
\end{proof}

\begin{cor}\label{r:t_phi}
Let $A(x)$ and $B(x)$ be any formulas whose only free variable is $x$. 
 Then
 \[t_\Phi\ureal\left(\fabvs{x}.\fabvs{y}.(A(x)\lor B(y) )\right) \imp \fabvs{x}.\fabvs{y}.\Phi_{A,B}(x,y),\]
 where ${t_\Phi }\defeq {\lambda hwx.\rec\,(t_0\,{h}\,w\,x)\,(t_\suc \,{h}\,w\,x)}$.
\end{cor}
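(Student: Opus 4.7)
The plan is to show this as a direct consequence of the External Induction principle (\autoref{r:external_induction}), using $t_0$ and $t_\suc$ as the base case and induction step, respectively. The formula we induct on is $\Phi_{A,B}(x, y)$, where $x$ is seen as a fixed parameter and the induction proceeds on $y$.

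More precisely, I would fix an arbitrary state $\s \in \S$ and a term $h$ such that $(h; \s) \real \fabvs{x}.\fabvs{y}.(A(x)\lor B(y))$, then fix a standard individual $n \in \N$ and an arbitrary term $w$. By construction,
\[
t_\Phi \, h \, w \, \overline{n} ~\reds{\s}{\s}~ \rec\,(t_0\,h\,w\,\overline{n})\,(t_\suc\,h\,w\,\overline{n}),
\]
so by anti-reduction it suffices to show that this reduct realizes $\fabvs{y}.\Phi_{A,B}(n, y)$ in state $\s$. Applying \autoref{r:t_0} with the hypothesis on $h$ gives $(t_0\,h\,w\,\overline{n}; \s) \real \Phi_{A,B}(n, 0)$. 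Similarly, applying \autoref{r:t_suc} yields $(t_\suc\,h\,w\,\overline{n}; \s) \real \fabvs{y}.(\Phi_{A,B}(n, y) \to \Phi_{A,B}(n, S(y)))$.

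Now I invoke External Induction (\autoref{r:external_induction}) with the formula $C(y) \defeq \Phi_{A,B}(n, y)$, whose only free variable is $y$. This gives
\[
\rec \ureal C(0) \imp \fabvs{y}.(C(y) \imp C(S(y))) \imp \fabvs{y}.C(y),
\]
and combining with the two realizers obtained above we conclude that $(\rec\,(t_0\,h\,w\,\overline{n})\,(t_\suc\,h\,w\,\overline{n}); \s) \real \fabvs{y}.\Phi_{A,B}(n, y)$, which is exactly what was needed.

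There is no real obstacle here: the construction of $t_\Phi$ was specifically tailored to fit the shape of the External Induction combinator, and the two lemmas \autoref{r:t_0} and \autoref{r:t_suc} supply precisely the base and step needed. The only point requiring a modicum of care is that the quantification over $\s$, $n$, $h$, and $w$ is universal, so the resulting realizer is indeed a universal realizer (hence the $\ureal$ in the statement); this is automatic since neither $t_0$, $t_\suc$, nor $\rec$ inspect or modify the state in a way that depends on these parameters.
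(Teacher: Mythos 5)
Your proposal is correct and follows essentially the same route as the paper's proof: fix a state, a realizer $h$ of the hypothesis, and a natural number $n$; reduce $t_\Phi\,h\,w\,\overline{n}$ to the $\rec$ application; then conclude by External Induction (\autoref{r:external_induction}) together with Lemmas \ref{r:t_0} and \ref{r:t_suc}. The paper's version is just terser, leaving the base/step bookkeeping implicit.
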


\begin{proof}
Let $\s$ be a state, $h$ be a term such that $(h;\s) \real \fabvs{x}.\fabvs{y}.\left(A(x)\lor B(y)\right)$, 
and $n\in\N$ be a natural number.
We want to show that $(t_\Phi\,h\,\anyt\,\bar p;\s) \real \fabvs{y}.\Phi_{A,B}(p,y)$.
By definition of $t_\Phi$, we have 
\[t_\Phi\,h\,\anyt\,\bar n ~\reds \s \s~ \rec\, (t_0\,h\,\anyt\,\bar n)\,(t_\suc\,h\,\anyt\,\bar n).\]
Then, the result follows directly from External Induction (\autoref{r:external_induction})
and Lemmas \ref{r:t_0} and \ref{r:t_suc}.
\end{proof}

We can now take advantage of these terms to define the terms necessary 
to realize the formula $A_{\leq x}\lor B_{\leq x}$ where the role of $A$ and $B$ 
is now made symmetric again, using a second External Induction.
 
\begin{lem}\label{r:t_Delta}
For any internal formulas $A$ and $B$, we have:
 \[t_\Delta \ureal \left(\fabvs{x}.\fabvs{y}. (A(x) \vee B(y))\right) \to \fabvs{x}.(A_{\leq x} \to A_{\leq S(x)}\lor B_{\leq S(x)}),\]
 where 
 ${  t_{\Delta} }\defeq {\lambda {h}wxv.  \mww{c}{\left(t_\Phi\,{h}\,\,\anyt\,(\suc x)\,\anyt\,(\suc x)\right)}{\inl{t_{\leq\suc}\,v\,c_1}}{\inr{c_2}}}$.

 \end{lem}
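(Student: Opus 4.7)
The plan is to instantiate the statement at an arbitrary state, unfold one reduction step of $t_\Delta$, and then compose the previously established realizers through the adequacy of the $\orerule$ rule. Concretely, fix a state $\s \in \S$, a natural number $n \in \N$, and terms $h,v$ with $(h;\s) \real H_{A,B}$ and $(v;\s) \real A_{\leq n}$; the goal is to show $(t_\Delta\,h\,\anyt\,\overline{n}\,v;\s) \real A_{\leq S(n)} \lor B_{\leq S(n)}$.

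First, I would unfold the definition of $t_\Delta$ to obtain the one-step reduction
\[
t_\Delta\,h\,\anyt\,\overline{n}\,v \reds{\s}{\s} \mww{c}{\left(t_\Phi\,h\,\anyt\,(\suc\overline{n})\,\anyt\,(\suc\overline{n})\right)}{\inl{t_{\leq\suc}\,v\,c_1}}{\inr{c_2}},
\]
so that by anti-reduction (\autoref{r:real_sat}) it suffices to show that the right-hand side realizes $A_{\leq S(n)} \lor B_{\leq S(n)}$ in state $\s$. The key step is then to invoke Corollary~\ref{r:t_phi}, which yields that $t_\Phi\,h$ is a universal realizer of $\fabvs{x}.\fabvs{y}.\Phi_{A,B}(x,y)$; instantiating at the standard natural number $S(n)$ for both $x$ and $y$ and using $\Phi_{A,B}(S(n),S(n)) = A(S(n)) \lor B_{\leq S(n)}$ gives
\[
\left(t_\Phi\,h\,\anyt\,(\suc\overline{n})\,\anyt\,(\suc\overline{n});\s\right) \in \tvr{A(S(n)) \lor B_{\leq S(n)}}.
\]

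It then remains to verify, by the adequacy of the $\orerule$ rule applied with target saturated set $S = \tvr{A_{\leq S(n)} \lor B_{\leq S(n)}}$, that each branch of the match preserves membership in $S$. For the left branch, suppose $(c_1;\s) \real A(S(n))$: Lemma~\ref{r:t_leqsuc} applied to $\overline{n}$, $v$ and $c_1$ (using the hypothesis $(v;\s) \real A_{\leq n}$) yields $(t_{\leq\suc}\,v\,c_1;\s) \real A_{\leq S(n)}$, whence its left-injection realizes the disjunction by definition of $\tvr{\cdot \lor \cdot}$. For the right branch, if $(c_2;\s) \real B_{\leq S(n)}$ then $(\inr{c_2};\s) \real A_{\leq S(n)} \lor B_{\leq S(n)}$ immediately from the interpretation of disjunction.

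The argument is essentially a composition of the lemmas already proved, and I do not anticipate a genuine conceptual obstacle. The only point requiring care is to keep track of the fact that the instantiation of $t_\Phi$ must take place at $S(n)$ rather than $n$ (so that the output formula features $S(n)$ everywhere), and to confirm that the application of $t_{\leq\suc}$ inside $t_\Delta$ matches the arity assumed by Lemma~\ref{r:t_leqsuc} once the appropriate standard natural number is supplied. Everything else reduces to the definition of the realizability interpretation together with anti-reduction in the fixed state $\s$, which is unchanged throughout since neither $t_\Phi$ nor $t_{\leq\suc}$ modifies the memory.
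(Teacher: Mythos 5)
Your proof matches the paper's step for step: unfold one $\beta$-step of $t_\Delta$, use anti-reduction, invoke Corollary~\ref{r:t_phi} instantiated at the standard numeral $S(n)$ to obtain a realizer of $A(S(n)) \lor B_{\leq S(n)}$, and finish by the adequacy of $\orerule$ with Lemma~\ref{r:t_leqsuc} in the left branch and the definition of $\tvr{\cdot\lor\cdot}$ in the right. The concern you flag at the end about the arity of $t_{\leq\suc}$ is real: as written, $t_\Delta$ applies $t_{\leq\suc}$ directly to $v$ and $c_1$, omitting the leading numeral $\overline{n}$ required by Lemma~\ref{r:t_leqsuc} (it should read $t_{\leq\suc}\,x\,v\,c_1$), but this slip is already present in the paper's own definition and proof, so your argument is faithful to the intended reading.
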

\begin{proof}

 Let $\s\in\S$ be a state, $n\in\N$ be a natural number
 and $h,v\in\Lambda$ two terms
 such that $(h;\s)\in\tvr{\fabvs{x}.\fabvs{y}.(A(x)\lor B(y))}$ and 
 $(a;\s) \in\tvr{A_{\leq n}}$.
 By construction, we have
 \[
  t_{\Delta}\,\anyt\,\bar n\,a ~\reds \s \s ~
  \mww{c}{\left(t_\Phi\,{h}\,\,\anyt\,(\suc n)\,\anyt\,(\suc n)\right)}
  {\inl{t_{\leq\suc}\,a\,c_1}}{\inr{c_2}}.
 \]
To conclude by anti-reduction, we need to show that the reduced term
realizes $A_{\leq S(n)}\lor B_{\leq S(n)}$.
Using \autoref{r:t_phi}, we get that 
\[(t_\Phi\,{h}\,\,\anyt\,(\suc n)\,\anyt\,(\suc n);\s)
\real A(S(n))\lor B_{\leq S(n)}.\]
Using the adequacy of the $\orerule$ rule (for which the $\inr{\cdot}$ case is immediate),
we now have to prove that for any term $v$ such that $(v;\s)\real A(S(n))$,
$t_{\leq\suc}\,a\,v;\s \real A_{\leq S(n)}$. This follows from
\autoref{r:t_leqsuc} and the assumptions on $a$ and $v$.
\end{proof}

\begin{lem}\label{r:u0}
For any internal formulas $A$ and $B$, we have 
 \[u_0\ureal \left(\fabvs{x}.\fabvs{y}.(A(x)\lor B(y) )\right) \to \big (A_{\leq 0}\lor B_{\leq 0}\big),\]
 where $u_0 \defeq  \lambda h.\matchwith {(h\,\anyt\,0\,\anyt\,0)} a {\inl{t_{\leq0}\,a}}
 b {\inl{t_{\leq0}\,b}}$.
 \end{lem}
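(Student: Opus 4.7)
My plan is to follow the same pattern as the proofs of Lemmas~\ref{r:t_0} and~\ref{r:t_Delta}, namely: fix a state $\s$ and a realizer $h$ of the premise, reduce $u_0\,h$ in $\s$, identify the resulting matched term as a realizer of $A_{\leq 0}\lor B_{\leq 0}$ using the interpretation of disjunction, and conclude by anti-reduction (using \autoref{r:real_sat}).

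Concretely, I fix a state $\s\in\S$ and a term $h$ with $(h;\s)\in\tvr{\fabvs{x}.\fabvs{y}.(A(x)\lor B(y))}$. Since none of the instructions occurring in $u_0\,h$ affect the state, we have the reduction
\[
u_0\,h ~\reds{\s}{\s}~ \matchwith{(h\,\anyt\,0\,\anyt\,0)}{a}{\inl{t_{\leq 0}\,a}}{b}{\inr{t_{\leq 0}\,b}}
\]
(reading the second branch of $u_0$ as $\inr{(t_{\leq 0}\,b)}$, as clearly intended by the context). Instantiating the two (standard, value-restricted) universal quantifiers on $h$ with $0$, the hypothesis yields $(h\,\anyt\,0\,\anyt\,0;\s)\in\tvr{A(0)\lor B(0)}$.

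It then remains to show that the matched term above lies in $\tvr{A_{\leq 0}\lor B_{\leq 0}}$. Here I apply the definition of $\tvr{A(0)\lor B(0)}$ with the saturated set $S\defeq \tvr{A_{\leq 0}\lor B_{\leq 0}}$ and the branch terms $t_1\defeq \inl{t_{\leq 0}\,a}$ and $t_2\defeq \inr{t_{\leq 0}\,b}$. This reduces the goal to showing that for any $(u_1;\s)\in\tvr{A(0)}$ one has $(\inl{(t_{\leq 0}\,u_1)};\s)\in \tvr{A_{\leq 0}\lor B_{\leq 0}}$, and symmetrically for $u_2$ and $B$. By \autoref{r:t_leq0} (and its obvious analogue for $B$, which holds by the same argument since $B$ is internal), $(t_{\leq 0}\,u_1;\s)\in\tvr{A_{\leq 0}}$ and $(t_{\leq 0}\,u_2;\s)\in\tvr{B_{\leq 0}}$. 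The adequacy of the rules $\oriurule$ and $\oridrule$ then gives the desired memberships in $\tvr{A_{\leq 0}\lor B_{\leq 0}}$, and an anti-reduction step finishes the proof.

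No part of the argument should be substantially difficult; the only mild subtlety is noticing that one must apply the elimination clause for disjunction with $S=\tvr{A_{\leq 0}\lor B_{\leq 0}}$ rather than with the truth value of $A(0)\lor B(0)$, but this is exactly the same move used implicitly in the proof of \autoref{r:t_0}, so no real obstacle is expected.
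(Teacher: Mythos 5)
Your proof is correct and takes essentially the same approach as the paper's: fix a state $\s$ and a realizer $h$ of the premise, $\beta$-reduce $u_0\,h$ in $\s$, observe that $h\,\anyt\,0\,\anyt\,0$ realizes $A(0)\lor B(0)$, and conclude by the adequacy of the $\orerule$ rule (which is exactly the move you describe of applying the elimination clause for disjunction with $S=\tvr{A_{\leq 0}\lor B_{\leq 0}}$) together with \autoref{r:t_leq0} and anti-reduction. You also correctly noticed that the second branch of $u_0$ as printed must be read as $\inr{(t_{\leq 0}\,b)}$ rather than $\inl{(t_{\leq 0}\,b)}$: this is indeed a typo appearing both in the lemma statement and in \autoref{fig:LLPO}, and the argument would not go through with $\inl{\cdot}$ since $t_{\leq 0}\,b$ realizes $B_{\leq 0}$, not $A_{\leq 0}$.
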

 \begin{proof}
 Let $\s\in\S$ be a state and $h\in\Lambda$ a term such that
 $(h;\s)\Vdash \fabvs{x}\fabvs{y}.(A(x)\lor B(y) )$.
 By construction, we have
\[
 u_0 \, h
 ~\reds \s \s  ~
 \matchwith {(h\,\anyt\,0\,\anyt\,0)} a {\inl{t_{\leq0}\,a}}
 b {\inl{t_{\leq0}\,b}}.
\]
The result easily follows by anti-reduction, using the adequacy of the $\orerule$ rule 
and \autoref{r:t_leq0}.
 \end{proof}

 \begin{lem}\label{r:usuc}
 For any internal formulas $A$ and $B$, we have
 \[
  u_\suc \ureal \left(\fabvs{x}.\fabvs{y}.(A(x)\lor B(y) )\right) \to
    \fabvs{x}.\left(\left(A_{\leq x}\lor B_{\leq x}\right)
\to\left(A_{\leq S(x)}\lor B_{\leq S(x)}\right)\right),
\]
where ${ u_\suc }\defeq { \lambda hwxd.\mww{d}{d}{t_\Delta\,{h}\,w\,x\,d_1}{t_{\Delta}\,(t_\lor\,{h})\,w\,x\,d_2}}$.
  \end{lem}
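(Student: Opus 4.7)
The plan is to fix a state $\s \in \S$, a term $h$ with $(h;\s) \real H_{A,B}$, a natural number $n \in \N$, and a term $d$ with $(d;\s) \real A_{\leq n}\lor B_{\leq n}$, and to show that $(u_\suc\,h\,\anyt\,\bar n\,d;\s)$ realizes $A_{\leq S(n)}\lor B_{\leq S(n)}$. A direct computation gives
\[
u_\suc\,h\,\anyt\,\bar n\,d ~\reds{\s}{\s}~ \mww{d}{d}{t_\Delta\,h\,\anyt\,\bar n\,d_1}{t_\Delta\,(t_\lor\,h)\,\anyt\,\bar n\,d_2},
\]
so by anti-reduction and the adequacy of the $\orerule$-rule (instantiated with the saturated set $S = \tvr{A_{\leq S(n)}\lor B_{\leq S(n)}}$), it suffices to check that each substituted branch lies in $\tvr{A_{\leq S(n)}\lor B_{\leq S(n)}}$ for any realizer of its bound variable.

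For the $\inl$-branch, given $(u;\s) \real A_{\leq n}$, applying \autoref{r:t_Delta} to $h$ and $u$ yields directly $(t_\Delta\,h\,\anyt\,\bar n\,u;\s) \real A_{\leq S(n)}\lor B_{\leq S(n)}$, which is precisely the required conclusion.

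The second branch is the main technical step. Given $(u;\s) \real B_{\leq n}$, \autoref{r:t_lor} first supplies $(t_\lor\,h;\s) \real H_{B,A}$, and then \autoref{r:t_Delta} instantiated with $A$ and $B$ exchanged (which is legitimate since $t_\Delta$ is a universal realizer polymorphic in the two formulas) yields $(t_\Delta\,(t_\lor\,h)\,\anyt\,\bar n\,u;\s) \real B_{\leq S(n)}\lor A_{\leq S(n)}$. The main obstacle of the proof is then reconciling this with the target $A_{\leq S(n)}\lor B_{\leq S(n)}$, whose disjuncts appear in the opposite order: I expect this to be handled by unfolding the realizability interpretation of $\lor$ explicitly, tracking how the continuations supplied to the outer match on $d$ are threaded through $t_\Delta$—exploiting the fact that the produced term always starts with either an $\inl$ or an $\inr$ whose contents can be reassigned to the opposite disjunct via the symmetry already provided by $t_\lor$ on the side of $h$.
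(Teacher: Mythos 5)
Your proof follows the paper's own proof step for step up to the $\inr$ branch, and you have correctly isolated the real difficulty: Lemmas~\ref{r:t_lor} and~\ref{r:t_Delta} (the latter with $A$ and $B$ exchanged) deliver $(t_\Delta\,(t_\lor\,h)\,\anyt\,\bar n\,d_2;\s)\real B_{\leq S(n)}\lor A_{\leq S(n)}$, whereas the target is $A_{\leq S(n)}\lor B_{\leq S(n)}$. The paper's own proof simply declares the $\inr$ case ``symmetric, using $t_\lor$ and Lemma~\ref{r:t_lor}'', which quietly elides exactly this mismatch; so your discomfort is well founded and not a flaw in your reading.

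However, the workaround you sketch does not close the gap. Under the elimination-style interpretation of $\lor$ in Definition~\ref{def:real_slices}, the truth values $\tvr{B\lor A}$ and $\tvr{A\lor B}$ are genuinely different: the left continuation of a match on a realizer of $A_1\lor A_2$ is wired to $A_1$ and the right to $A_2$, and ``unfolding the interpretation'' cannot rewire them. Concretely, after the inner $\matchop$ in $t_\Delta$ fires, the $\inr$ branch of $u_\suc$ produces either $\inl{t_{\leq\suc}\,d_2\,c_1}$ with $t_{\leq\suc}\,d_2\,c_1\real B_{\leq S(n)}$, or $\inr{c_2}$ with $c_2\real A_{\leq S(n)}$ -- the tags are on the wrong disjuncts for $A_{\leq S(n)}\lor B_{\leq S(n)}$, and $t_\lor$ cannot help since it only commutes the hypothesis $h$, not this output. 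The term $u_\suc$ as defined therefore cannot realize the stated formula, and the fix requires modifying the term, not the proof: one can post-compose the $\inr$ branch with a disjunction-commuting realizer in the style of $t_\lor$, e.g.\ $t_{\mathrm{sw}}\defeq\lambda d.\mww{e}{d}{\inr{e_1}}{\inl{e_2}}$, which universally realizes $(B_{\leq S(n)}\lor A_{\leq S(n)})\to(A_{\leq S(n)}\lor B_{\leq S(n)})$; equivalently, one may replace $t_\Delta$ on the $\inr$ branch by the variant obtained by swapping $\inl$ and $\inr$ in its body. With that change both branches land in $\tvr{A_{\leq S(n)}\lor B_{\leq S(n)}}$ and the rest of your argument goes through.
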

   \begin{proof}
  Let $\s\in\S$ be a state, $n\in\N$ a natural number
  and $h,v\in\Lambda$ be two terms such that 
  $(h;\s)\real \fabvs{x}.\fabvs{y}.(A(x)\lor B(y) )$ and $(v;\s)\real A_{\leq n}\lor B_{\leq n}$.
  By construction, we have
  \[
   u_\suc \, h \, \anyt \, \bar n\,v
   ~\reds \s \s ~
   \matchwith {v} a {t_\Delta\,{h}\,\anyt\,\bar n\,a} b {t_{\Delta}\,(t_\lor\,{h})\,\anyt\,\bar n\,b}.
  \]
  To conclude by anti-reduction, we use the adequacy of the $\orerule$ rule to prove that
  the reduced term belongs to $\tvr{A_{\leq S(n)}\lor B_{\leq S(n)}}$.
  For the $\inl \cdot$ case, if $w$ is a term such that $(w;\s)\real A_{\leq S(n)}$,
  then $(t_\Delta\,{h}\,\anyt\,\bar n\,w;\s)\real A_{\leq S(n)}\lor B_{\leq S(n)}$
  by \autoref{r:t_Delta} as expected.
  The $\inr \cdot$ case is symmetric, using $t_\lor$ and \autoref{r:t_lor}.
 \end{proof}

\begin{cor}\label{r:t_aux}
For any internal formulas $A$ and $B$, we have
 \[t_{\mathrm{aux}}\ureal \left(\fabvs{x}.\fabvs{y}.(A(x)\lor B(y) )\right) \to \fabvs{x}.\left (A_{\leq x}\lor B_{\leq x})\right),\]
 where  $t_{\mathrm{aux}}\defeq{ \lambda h.\rec\, (u_0\,h)\, (u_S\,h)}$.  
 \end{cor}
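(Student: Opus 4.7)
The plan is to show that $t_{\mathrm{aux}}$ is essentially an External Induction scheme pre-applied to the realizers provided by Lemmas~\ref{r:u0} and \ref{r:usuc}, which handle the base case and the inductive step respectively. The result should then follow almost immediately by anti-reduction and an appeal to Proposition~\ref{r:external_induction}.

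More precisely, I would fix an arbitrary state $\s \in \S$ and a term $h$ such that $(h;\s) \real \fabvs{x}.\fabvs{y}.(A(x) \lor B(y))$. By construction of $t_{\mathrm{aux}}$, we have the reduction
\[
t_{\mathrm{aux}}\,h ~\reds{\s}{\s}~ \rec\,(u_0\,h)\,(u_S\,h).
\]
By Lemma~\ref{r:u0}, $(u_0\,h;\s) \real A_{\leq 0} \lor B_{\leq 0}$. By Lemma~\ref{r:usuc}, $(u_S\,h;\s) \real \fabvs{x}.((A_{\leq x} \lor B_{\leq x}) \to (A_{\leq S(x)} \lor B_{\leq S(x)}))$. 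Taking the formula $C(x) \defeq A_{\leq x} \lor B_{\leq x}$ in External Induction (Proposition~\ref{r:external_induction}), we obtain
\[
(\rec\,(u_0\,h)\,(u_S\,h);\s) \real \fabvs{x}.(A_{\leq x} \lor B_{\leq x}).
\]
We conclude by anti-reduction, since truth values are saturated sets (Proposition~\ref{r:real_sat}) and $\s$ was arbitrary.

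The only minor subtlety worth checking is that the types of $u_0\,h$ and $u_S\,h$ match the formula schema required by Proposition~\ref{r:external_induction} applied to $C(x) = A_{\leq x} \lor B_{\leq x}$; this is immediate from Lemmas~\ref{r:u0} and \ref{r:usuc} once one notes that $A_{\leq 0} \lor B_{\leq 0}$ is exactly $C(0^\inj)$ (which also coincides with $C(0)$ since $0^\inj = 0$ as individuals). There is no real obstacle here, as all the substantive work has been done in the preceding lemmas; the corollary is essentially packaging External Induction with the appropriate base and step realizers.
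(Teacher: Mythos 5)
Your proof is correct and follows exactly the paper's own argument: reduce $t_{\mathrm{aux}}\,h$ to $\rec\,(u_0\,h)\,(u_\suc\,h)$, invoke Lemmas~\ref{r:u0} and \ref{r:usuc} for the base case and induction step, apply External Induction (Proposition~\ref{r:external_induction}) with $C(x) \defeq A_{\leq x} \lor B_{\leq x}$, and close by anti-reduction. You have simply spelled out the details that the paper leaves implicit.
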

 \begin{proof}
  
 Let $\s$ be a state, and $(h;\s)\real \fabvs{x}\fabvs{y}(A(x)\lor B(y)$.
 We have that
$$t_{\mathrm{aux}} \,h\reds \s \s \rec\, (u_0\,h)\, (u_\suc\,h),$$
 so the result easily follows from \autoref{r:external_induction} and Lemmas \ref{r:u0} and \ref{r:usuc}.
 \end{proof}

 We are now ready to prove the main theorem of this section, by combining all 
 the terms into a realizer of LLPO$^\mathrm{st}$.
\begin{thm}[LLPO$^\mathrm{st}$]\label{r:llpo} 
Let $A$ and $B$ be internal formulas, we have
\[t_{\mathrm{LLPO}} \ureal \fabvs{x}.\fabvs{y}. (A(x) \vee B(y)) \to (\fabvs{x}.A(x) \lor \fabvs{y}.B(y)),\] 
where $t_{\mathrm{LLPO}} \defeq \lambda h. \mw{(t_{\mathrm{aux}}\, h\,\anyt\,\get)}{\inl{t_\delta\,x_1}}{\inl{t_\delta\,x_2}}$.
\end{thm}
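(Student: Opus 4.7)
The strategy is to combine the realizers introduced throughout this section, following the overview given in Section~\ref{s:llpo_intro}. Fix a state $\s\in\S$ and a term $h$ with $(h;\s)\real H_{A,B}$. By $\beta$-reduction applied to the definition of $t_{\mathrm{LLPO}}$, we have
\[ t_{\mathrm{LLPO}}\,h \reds \s \s \mw{(t_{\mathrm{aux}}\,h\,\anyt\,\get)}{\inl{t_\delta\,x_1}}{\inr{t_\delta\,x_2}}, \]
so by anti-reduction it suffices to prove that the right-hand side realizes $\fabvs{x}.A(x)\lor\fabvs{y}.B(y)$ in slice $\s$.

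The central step consists in establishing that $(t_{\mathrm{aux}}\,h\,\anyt\,\get;\s)\real A_{\leq \delta}\lor B_{\leq \delta}$. By Corollary~\ref{r:t_aux}, $(t_{\mathrm{aux}}\,h;\s)\real \fabvs{z}.(A_{\leq z}\lor B_{\leq z})$. Unfolding the value-restricted standard quantifier and instantiating $z$ with the standard individual $\s^\inj$ (for which $\anyt$ serves as dummy realizer of $\st{\s^\inj}$, while $\overline{\s}$ is the canonical value realizing $\nat{\s^\inj}$), we obtain $(t_{\mathrm{aux}}\,h\,\anyt\,\overline{\s};\s)\real A_{\leq \s^\inj}\lor B_{\leq \s^\inj}$. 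Since $t_{\mathrm{aux}}\,h$ $\beta$-reduces to a recursor applied to its two step functions, the argument $\get$ sits in an evaluation position of $\rec$; combining this with the rewrite $\get\reds \s \s \overline{\s}$ and the reduction context $\rec\,u_0\,u_1\,\hole$, we get $t_{\mathrm{aux}}\,h\,\anyt\,\get \reds \s \s t_{\mathrm{aux}}\,h\,\anyt\,\overline{\s}$, and hence by anti-reduction $(t_{\mathrm{aux}}\,h\,\anyt\,\get;\s)\real A_{\leq \s^\inj}\lor B_{\leq \s^\inj}$. Since $A_{\leq z}\lor B_{\leq z}$ is an internal formula and $\delta(\s)=\s^\inj(\s)$, the glueing theorem (Theorem~\ref{r:glueing}) identifies the two interpretations in slice $\s$, yielding $(t_{\mathrm{aux}}\,h\,\anyt\,\get;\s)\real A_{\leq \delta}\lor B_{\leq \delta}$.

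The remainder is routine. By the adequacy of the $\orerule$ rule, it suffices to analyse each branch of the match separately. If $(x_1;\s)\real A_{\leq \delta}$, then Lemma~\ref{r:t_delta} gives $(t_\delta\,x_1;\s)\real \fabvs{y}.A(y)$, whence $(\inl{t_\delta\,x_1};\s)\real \fabvs{x}.A(x)\lor\fabvs{y}.B(y)$. The $\inr$-branch is handled symmetrically, applying Lemma~\ref{r:t_delta} with $B$ in place of $A$. We conclude by anti-reduction on the initial $\beta$-step.

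The main technical delicacy will be the central step, namely the careful accounting of the reduction order that exposes $\get$ as an argument of the recursor and allows it to be evaluated to $\overline{\s}$ after the unfolding of $t_{\mathrm{aux}}$. This is the single place where the concrete nonstandard content enters the proof: all the remaining reasoning is purely propositional and symmetric in $A$ and $B$, already encapsulated in the auxiliary lemmas feeding into Corollary~\ref{r:t_aux} and Lemma~\ref{r:t_delta}.
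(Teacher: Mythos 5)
Your overall decomposition matches the paper's: obtain a realizer of $\fabvs{z}.(A_{\leq z}\lor B_{\leq z})$ via Corollary~\ref{r:t_aux}, get down to $A_{\leq\delta}\lor B_{\leq\delta}$ in the current state, then feed each branch to $t_\delta$ via Lemma~\ref{r:t_delta}. You also, correctly, use $\inr{t_\delta\,x_2}$ in the second branch (the paper's displayed $t_{\mathrm{LLPO}}$ has $\inl$ twice, which is evidently a typo). However, you take a genuinely different route through the central step, and that route contains a concrete error.

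The paper goes: $\fabvs{x}.(A_{\leq x}\lor B_{\leq x})$ $\longrightarrow$ by Transfer (Theorem~\ref{r:transfer}, using that $A_{\leq x}\lor B_{\leq x}$ is internal) $\longrightarrow$ $\fabv{x}.(A_{\leq x}\lor B_{\leq x})$ $\longrightarrow$ instantiate the universal at the nonstandard individual $\delta$ $\longrightarrow$ $\bvto{\nat\delta}(A_{\leq\delta}\lor B_{\leq\delta})$ $\longrightarrow$ discharge the value restriction with $\get$ (invoking Proposition~\ref{r:delta}). You instead instantiate directly at the \emph{standard} individual $\s^\inj$, obtaining $A_{\leq\s^\inj}\lor B_{\leq\s^\inj}$, and then convert to $A_{\leq\delta}\lor B_{\leq\delta}$ by Glueing (Theorem~\ref{r:glueing}), since $\delta(\s)=\s^\inj(\s)=\s$. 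This bypasses Transfer entirely and is logically sound; indeed Transfer is itself proved from Glueing, so your route is in a sense more direct, and it makes explicit that the only nonstandard content is the slicewise identification of $\delta$ with $\s^\inj$.

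The problem is in your justification of the reduction $t_{\mathrm{aux}}\,h\,\anyt\,\get\reds{\s}{\s}t_{\mathrm{aux}}\,h\,\anyt\,\overline{\s}$. You assert that ``the argument $\get$ sits in an evaluation position of $\rec$,'' but this is not the case. After $t_{\mathrm{aux}}\,h$ $\beta$-reduces to $\rec\,(u_0\,h)\,(u_S\,h)$, the \emph{next} applied argument is $\anyt$, which occupies the third (and last) slot of $\rec$ and hence sits in the evaluation context $\rec\,u_0\,u_1\,\hole$; $\get$ is a \emph{fourth} argument and only sits under the context $\hole\,u$, where it will never be selected for evaluation. In fact, since $\anyt$ is an arbitrary term with no guarantee of reducing to a numeral $0$ or $\suc V$, the subterm $\rec\,(u_0\,h)\,(u_S\,h)\,\anyt$ need not reduce at all, and the $\get$ is never reached. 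So the rewrite you invoke does not go through by the operational semantics as given, and anti-reduction cannot be applied at this point. (The paper's own proof is equally terse at the analogous step, appealing to Proposition~\ref{r:delta}, whose part 3 in fact uses the storage operator $T$ to force the evaluation of $\get$; the realizer $t_{\mathrm{LLPO}}$ as written does not contain $T$, so the gap you need to fill is also present there.) To repair this, one has to actually trace the call-by-name reduction into $t_{\mathrm{aux}}$ and its sub-realizers until $\get$ lands in an evaluation context (or alternatively insert the storage operator $T$ around $t_{\mathrm{aux}}\,h\,\anyt$), and choose $\anyt$ to be a numeral so that the recursor can fire; ``$\get$ is in $\rec$'s evaluation position'' is not the reason the computation works.
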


\begin{proof}
 For any natural number $n\in\N$, 
if $(u;\s)\real \fabvs{x}\fabvs{y}(A(x)\lor B(y))$
using \autoref{r:t_aux} we get
 \[(t_{\mathrm{aux}}\,u;\s) \real\fabvs{x}.( \fabv{z}.z\leq x \to A(z) )\lor( \fabv{z}.z\leq x \to B(z)\big). \]
By Transfer (\autoref {r:transfer}), we then obtain 
\[(t_{\mathrm{aux}}\,u\,\anyt;\s) \real\fabv{x}.( \fabv{z}.z\leq x \to A(z) )\lor( \fabv{z}.z\leq x \to B(z)\big), \]
and in particular,
\[(t_{\mathrm{aux}}\,u\,\anyt;\s) \real\bvto{\nat \delta }( \fabv{z}.z\leq \delta \to A(z) )\lor( \fabv{z}.z\leq \delta \to B(z)\big). \]
Using \autoref{r:delta}, we get
\[(t_{\mathrm{aux}}\,u\,\anyt\,\get;\s) \real( \fabv{z}.z\leq \delta \to A(z) )\lor( \fabv{z}.z\leq \delta \to B(z)\big). \]
Using \autoref{r:t_delta} we indeed get a realizer of $\fabvs{z}. A(z)\lor\fabvs{z}.B(z)$.
\end{proof}

\section{A tainted quotient}
\label{s:some_name}
\newcommand{\tvq}[1]{\tv{#1}^*}
\newcommand{\satq}{\sat^*}
\newcommand{\redU}{\downarrow^\U}

In this section we explore the possibility of extending the work done above through a quotient, and the limitations of such construction. 
In \Cref{s:quotient}, we explain how this quotient can be obtained in a way that maintains the analogy with the Lightstone-Robinson construction.
The resulting theory is indeed an extension in which universal realizers 
for closed formulas are preserved and more principles are now realizable (\emph{e.g.} \autoref{r:st_lower}).
This makes it an even more convincing approach to nonstandard analysis from the point
of the captured theory, but not in terms of realizability. Indeed, as we will see in \Cref{s:limitations}, the terms witnessing the validity of formulas in the quotient 
can no longer be composed. On the other hand, as explained in \autoref{rmk:strict}, if one tries to be more faithful to the spirit of realizability, then the connection with nonstandard analysis is less convincing as one loses compatibility with \Los' theorem. Furthermore, the limitations do not seem to depend on the particular way one defines the quotient, as discussed in \Cref{s:food}.

\subsection{Realizability up to an ultrafilter}
\label{s:quotient} 
In order to fully mimic Lightstone and Robinson's construction, an extra step is required
where one takes a quotient of the interpretation with slices.
This step allows us to consider a more flexible notion of realizability 
where realizers are only required to be compatible with \emph{almost all} states,
in the sense that the set of compatible states belongs to the ultrafilter.

In order to simplify the discussion, and similarly to what was done in most of the paper, we don't include disjunction as a primitive connective.

Let us fix a free ultrafilter $\U$ over the set of states.
Given any set $V$, we denote by $\cong$ the 
equivalence relation over $V^\S$ defined by 
$f \cong g  \defeq \{\s\in \S:f(\s)=g(\s)\}\in\U$.

First, we can, within the realizability with slices,
change the way $\st f$ is interpreted 
to consider standardness up to the ultrafilter. In this way,
 $f\in\N^\S$ is said to be standard if and only if there exists $n\in\N$ s.t. $f \cong n^\inj$.
This allows to show, for instance, that nonstandard natural numbers are larger than standard ones.
\begin{prop}\label{r:st_lower}
 $\lambda x y.\loopp \ureal \forall x,y.\neg \stto x {\stto y {y<x}}$
\end{prop}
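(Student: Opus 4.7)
The plan is to reduce the goal, via two $\beta$-reductions and anti-reduction, to the claim that $\loopp$ realizes $g < f$ in the current state, and then to exploit the fact that $\loopp$ can bump the state arbitrarily high in order to reach a state $\s'$ where $g(\s') < f(\s')$ holds in the ordinary sense. The ultrafilter structure available after the quotient on $\st{\cdot}$ will guarantee the existence of such an $\s'$.

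Concretely, I would fix a state $\s$, individuals $f, g \in \N^\S$, and terms $t_1, t_2$ with $(t_1;\s) \real \neg \st{f}$ and $(t_2;\s) \real \st{g}$. Since $(\lambda xy.\loopp)\,t_1\,t_2 \reds{\s}{\s} \loopp$, anti-reduction lets me focus on $(\loopp;\s) \in \tvS{g < f}$. I would first dispose of the vacuous cases under the new quotient interpretation: if $f$ were standard, then $\tvS{\st{f}} = \Lambda \times \S$ forces $\tvS{\neg \st{f}} = \emptyset$, incompatible with the existence of $t_1$; symmetrically, if $g$ were nonstandard then $\tvS{\st{g}} = \emptyset$, incompatible with $t_2$. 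Thus one may assume $f$ is nonstandard and $g$ is standard, witnessed by some $n \in \N$ with $G \defeq \{s : g(s) = n\} \in \U$.

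The core step is a short ultrafilter computation showing $\{s : g(s) < f(s)\} \in \U$. I would establish $H \defeq \{s : f(s) > n\} \in \U$ by contradiction: otherwise its complement $\{s : f(s) \leq n\} = \bigcup_{k=0}^{n} \{s : f(s) = k\}$ would lie in $\U$, and the ultrafilter property on finite unions would force some $\{s : f(s) = k\}$ into $\U$, contradicting the nonstandardness of $f$. Intersecting $H$ with $G$ then yields $\{s : g(s) < f(s)\} \in \U$. Since $\U$ is free it contains the Fréchet filter, so the cofinite set $\{s : s \geq \s\}$ is in $\U$; its intersection with $\{s : g(s) < f(s)\}$ is nonempty, and any $\s'$ in it satisfies both $\s' \geq \s$ and $g(\s') < f(\s')$.

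To conclude, it suffices to reduce $\loopp$ from state $\s$ to state $\s'$. A direct computation using the defining behaviour $\loopp \reds{\s}{\s} \incr\,\loopp$ together with the rules for $\get$ and $\set$ shows $\loopp \reds{\s}{\s+1} \loopp$, and iterating yields $\loopp \reds{\s}{\s'} \loopp$. Since $g(\s') < f(\s')$, we have $(\loopp;\s') \in \tvS{g < f}$ trivially, and anti-reduction finishes. The main obstacle is really this ultrafilter argument: without the quotient there is no reason for the set of $\s' \geq \s$ where the strict inequality holds pointwise to be nonempty, and indeed this property genuinely fails in the unquotiented slice model, which is precisely why passing to the quotient is what unlocks the realizer.
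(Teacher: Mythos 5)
Your proof is correct and follows essentially the same route as the paper: reduce via anti-reduction to $(\loopp;\s)\real g<f$, use the ultrafilter argument to show $f$ exceeds any fixed $n$ on an $\U$-large set of states (partitioning the complement into the finitely many level sets of $f$ and contradicting nonstandardness), intersect with the $\U$-large set where $g$ takes its constant value and with the cofinite set of states $\geq\s$, and finally drive the state forward with $\loopp$. If anything, your write-up is slightly more careful than the paper's own: you explicitly intersect with $G=\{s:g(s)=n\}$ before concluding $\{s:g(s)<f(s)\}\in\U$, whereas the paper's proof elides this step (and even contains a small typo, asserting $\s<f(\s')$ where one wants $g(\s')<f(\s')$). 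No gaps.
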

\begin{proof}
If $f\in\N^\S$ is a nonstandard individual 
 and $n\in\N$ any natural number, one proves by contradiction that
 $S=\{\s\in\S: n<f(\s)\}\in\U$.
 Indeed, otherwise one would have $\bar S\in\U$.
 
 For any $k \in \N$, let us write $S_k$ for the set $\{\s\in\S : f(\s) = k\}$.
 Since the sets  $S_0,...,S_n$ form a partition of $\bar S$, 
 it is easy to see that (exactly) one of these sets, say $S_m$, belongs to $\U$.
 Then $f \cong m^\inj$, which contradicts the fact that $f$ is nonstandard.

In particular, for any individuals $f,g$, any state $\s$, and any terms $t,u$ such that $(t;\s)\in\tvS{\stto f \bot} $
and $(u;\s)\in\tvS{\st g}$, we have that $f$ is necessarily nonstandard and that there exists 
$n\in\N$ such that $g\cong n^*$. By the claim above, we know
that there exists $\s'>\s$ such that $\s<f(\s')$. The result then follows by anti-reduction
from the fact that $\loopp \reds \s {\s'} \loopp$.
\end{proof}

We then need to define a new notion of realizability
in which realizers are also considered up to the equivalence relations
induced by $\U$.
To that end, a natural attempt consists in considering \Los' theorem as a guideline.
For the sake of clarity, let us denote by $\tvq{A}$ the truth values 
in this interpretation, which we shall call \emph{realizability up to $\U$}.

\begin{defi}
We say that a formula $A$ is \emph{\Los-reducible} 
if for any valuation $\rho$ closing $A$,
$t\in\tvq{A}$ if and only if $\{\s\in\S:(t;\s)\in\tvr A\}\in\U$.
\end{defi}

We actually define the interpretation of connectives 
by this equivalence. For example, the interpretation $\tvrq{A\to B}$
for the implication is defined by 
\[\{t\in\Lambda:\{\s\in\S:(t;\s)\in\tvr{A\to B}\}\in\U\},\]
while the interpretation of the quantifiers is still defined via intersections (resp. unions) 
over the same domain as in the interpretation with slices 
(\emph{e.g.}, $\tvrq{\forall x.A} \defeq \bigcap_{f\in\N^\S} \tvq{A}_{\rho,x\mapsto f}$).

\newcommand{\realq}{\Vdash^{\!\!\raisebox{.2pt}{\scriptsize *}}}
\begin{defi}[Realizability up to $\U$]\label{def:realU}
The interpretation of a formula $A$ together with a valuation $\rho$
 closing $A$ is the set $\tvrq{A}$ defined inductively according to the following clauses:
\[ 
\begin{array}{rcl}
 \tvrq{\st{f}}         & \defeq & \left\{\begin{array}{ll} \Lambda & \text{ if $f\cong n^\inj$, for some $n\in\N$}\\ \emptyset & \text{otherwise}\end{array}\right.\\[2mm]
  \tvrq{X(e_1,\dots,e_n)}& \defeq & 
                                    \{t \in \Lambda: \{\s\in\S:(t;\s)\in\rho(X)@(\foint{e_1}_{\rho},\dots,\foint{e_n}_{\rho})\}\in\U\}\\[2mm]
 
 \tvrq{\bvto {\nat e} A  }     & \defeq & \{t\in\Lambda : \{\s\in\S:(t;\s)\in\tvr{\bvnat e A}\}\in\U\}\} \\[2mm]
 \tvrq{A\imp B         }& \defeq & \{t\in\Lambda : \{\s\in\S:(t;\s)\in\tvr{A\to B}\}\in\U\} \\ 

 \tvrq{A_1\land A_2        }& \defeq & \{t\in\Lambda : \{\s\in\S: (\pi_1(t);\s)\in\tvS{A_1}_\rho\land (\pi_2(t);\s)\in\tvS{A_2}_\rho\}\in\U\}  \\[2mm]

 \tvrq{\forall x.A     }& \defeq & \bigcap_{f\in\N^\S}\tvq{A}_{\rho,x\gets f} \\[2mm] 
 \tvrq{\exists x.A     }& \defeq & \bigcup_{f\in\N^\S}\tvq{A}_{\rho,x\gets f} \\[2mm] 
 \tvrq{\forall X.A     }& \defeq & \bigcap_{F:\N^k\to\sat}\tvq{A}_{\rho,X\gets F} \\[2mm] 
 \tvrq{\exists X.A     }& \defeq & \bigcup_{F:\N^k\to\sat}\tvq{A}_{\rho,X\gets F} 
 \end{array}
\] 
We write $t\realq A$ if $t\in\tvq{A}$.
\end{defi}

As shown in the following theorem, first-order quantifiers behave well w.r.t.\ the ultrafilter.
\begin{thm}[\Los' theorem]\label{r:los}
First-order internal formulas as well as arbitrary conjunctions and implications
are \Los-reducible.
\end{thm}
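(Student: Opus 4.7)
The plan is to prove this by induction on the structure of $A$. For the base-like cases — atomic predicates $X(e_1,\ldots,e_n)$, implications $A\imp B$, conjunctions $A_1\land A_2$, and value-restricted implications $\bvto{\nat e}A$ — the corresponding clause in \autoref{def:realU} is literally $\{t:\{\s:(t;\s)\in\tvr{\cdot}\}\in\U\}$, so \Los-reducibility holds by definition. This simultaneously takes care of the claim about arbitrary conjunctions and implications (of any formulas, first-order or not, internal or not), for which no induction hypothesis on the subformulas is even required.

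The heart of the proof is the first-order quantifier case; I will describe the universal case, the existential being symmetric. Unfolding $\tvrq{\forall x.A}=\bigcap_{f\in\N^\S}\tvq{A}_{\rho,x\gets f}$ and applying the induction hypothesis (for the valuation $\rho,x\gets f$, which is legitimate since \Los-reducibility is stated for every valuation closing the formula), the statement $t\realq\forall x.A$ becomes: for every $f\in\N^\S$, the set $S_f\defeq\{\s:(t;\s)\in\tvr{A}_{\rho,x\gets f}\}$ lies in $\U$. I aim to show this is equivalent to $S\defeq\{\s:(t;\s)\in\tvr{\forall x.A}\}\in\U$. The easy direction (target implies each $S_f$) follows from $S\subseteq S_f$ and upward closure of $\U$.

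For the converse, I will combine two ingredients: the glueing theorem (\autoref{r:glueing}) and a Skolemization using the axiom of choice. Glueing, applied to the internal formula $A$, gives $(t;\s)\in\tvr{A}_{\rho,x\gets f}\Leftrightarrow(t;\s)\in\tvr{A}_{\rho,x\gets f(\s)^\inj}$, so that membership in each slice only depends on the value $f(\s)$. Using the axiom of choice, I then pick $f\in\N^\S$ so that $f(\s)$ witnesses a failure $(t;\s)\notin\tvr{A}_{\rho,x\gets n^\inj}$ whenever such $n$ exists, and $f(\s)=0$ otherwise. Any $\s\in S_f$ must then satisfy $(t;\s)\in\tvr{A}_{\rho,x\gets n^\inj}$ for every $n\in\N$ (otherwise $f(\s)$ would have been chosen as a failure witness), which by glueing says exactly $\s\in S$. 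Hence $S_f\subseteq S$, and since $S_f\in\U$, upward closure gives $S\in\U$.

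The main obstacle — and the reason internality is essential — is this last step: the Skolemization argument hinges on glueing, which replaces an arbitrary individual $f\in\N^\S$ by the standard $f(\s)^\inj$ within each slice. Without internality (for instance if $A$ involved $\st{\cdot}$ or a second-order quantifier), there would be no reason for $\tvr{A}_{\rho,x\gets f}$ to factor through $f(\s)^\inj$, and the standard counterexamples to \Los-style reducibility would reappear. One must therefore track carefully that the induction hypothesis is only invoked on internal subformulas, which the first-order-internal hypothesis guarantees.
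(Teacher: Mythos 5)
Your proof is correct and follows essentially the same route as the paper's: induction on structure, base cases by definition, and for the quantifiers a combination of glueing and a countable-choice Skolemization, with one direction handled by upward closure of the filter. The one cosmetic difference is in the $\forall$ case, where the paper argues by contraposition (thereby invoking the ultrafilter property to pass from $S\notin\U$ to $\overline{S}\in\U$), whereas your direct argument with the failure-witnessing Skolem function $f$ and the inclusion $S_f\subseteq S$ uses only the filter structure and is marginally cleaner.
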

\begin{proof}
The proof goes by induction on the structure of $A$.
In the cases $\bvnat {e} A$, $X(e_1,\dots,e_n)$, $A\imp B$ 
and $A\land B$,
the result follows directly from the definitions.
The proof for the quantifiers is similar to the usual proof of \Los' theorem.

\prfcase{$\exists x.A$}
By the induction hypothesis, we have that for any $f\in\N^\S$,
\[ \tvq{A}_{\rho,x\mapsto f} = \{t:\{\s\in\S: t;\s \in\tvS{A}_{\rho,x\mapsto f}\}\in\U\}. \]
By glueing, we have that
$\tvS{A}_{\rho,x\mapsto f} = \tv{\trunc{A}}^\s_{\rho,x\mapsto f} = 
    \tv{\trunc{A}}^\s_{\rho,x\mapsto (f(\s))^\inj}$.
We want to prove that for any $t\in \Lambda $
\[
\exists f\in\N^\S. t \in \tvq{A}_{\rho,x\mapsto f}
\text{    iff   }
\{\s\in\S: t;\s \in\tvS{\exists x.A}_{\rho}\}\in\U. \]
Observe that, by glueing, the right-hand side is equivalent to $$
\{\s\in\S: \exists n\in \N. t \in\tv{A}^\s_{\rho,x\mapsto n^\inj}\}\in\U.$$

\rcase
Assume that there exists $f\in\N^\S$ such that $t \in \tvq{A}_{\rho,x\mapsto f}$.
Then it is easy to see that
\[ \{\s\in\S: t \in \tv{A}^\s_{\rho,x\mapsto (f(\s))^\inj}\} \subseteq \{\s\in\S: \exists n\in \N. t \in \tv{A}^\s_{\rho,x\mapsto n^\inj}\}, \]
and hence the result follows from the upwards closure of the ultrafilter.

\lcase 
Assume now that $E\defeq \{\s\in\S: \exists n\in \N. t \in\tv{A}^\s_{\rho,x\mapsto n^\inj}\}\in\U.$

For any $\s\in E$, using countable choice we can pick an integer $n_\s$ such that 
$t \in\tv{A}^\s_{\rho,x\mapsto n_\s^\inj}$.
We may then define the function $g\in\N^\S$ by:
\[g(\s) \defeq \left\{\begin{array}{rl}
   n_\s & \text{if } \s\in E\\
   0 & \text{otherwise}\\
   \end{array}\right.\]
By definition, $E\subseteq \{\s\in\S: t\in\tv{A}^\s_{\rho,x\mapsto (g(\s))^\inj}\}$,
hence this set belongs to $\U$ by upwards closure.
By induction hypothesis we conclude that
$t\in \tvq{A}_{\rho,x\mapsto f}$.

{
\prfcase{$\forall x.A$}
By the induction hypothesis, for any $f\in\N^\S$,
\[ \tvq{A}_{\rho,x\mapsto f} = \{t:\{\s\in\S: t;\s \in\tvS{A}_{\rho,x\mapsto f}\}\in\U\}. \]
By glueing,  
$\tvS{A}_{\rho,x\mapsto f} = \tv{\trunc{A}}^\s_{\rho,x\mapsto f} = 
    \tv{\trunc{A}}^\s_{\rho,x\mapsto (f(\s))^\inj}$.
We want to prove that for any $t\in \Lambda $
\[
\forall f\in\N^\S. t \in \tvq{A}_{\rho,x\mapsto f}
\text{    iff   }
\{\s\in\S: t;\s \in\tvS{\forall x.A}_{\rho}\}\in\U.\]
Observe that, by glueing, the right-hand side is equivalent to
\[
S\defeq\{\s\in\S: \forall n\in \N. t \in\tv{A}^\s_{\rho,x\mapsto n^\inj}\}\in\U.
\]

\rcase
We easily see that for any $f\in\N^\S$
\[ S=\{\s\in\S: \forall f\in\N^\S.t \tv{A}^\s_{\rho,x\mapsto (f(\s))^\inj}\} \subseteq \{\s\in\S: t \in \tv{A}^\s_{\rho,x\mapsto (f(\s))^\inj}\} \]
and by upwards closure we conclude that $t \in \tvq{A}_{\rho,x\mapsto f}$.

\lcase 
By contraposition, assume that 
$\{\s\in\S: \forall n\in \N. t \in\tv{A}^\s_{\rho,x\mapsto n^\inj}\}\notin\U$ and let us show that
there exists $f\in\N^\S$ such that $t \notin\tv{A}^\s_{\rho,x\mapsto f}$.
Because $\U$ is an ultrafilter, the assumption is equivalent to:
\[E=\overline{\{\s\in\S: \forall n\in \N. t \in\tv{A}^\s_{\rho,x\mapsto n^\inj}\}}
= \{\s\in\S: \exists n\in \N. t \notin\tv{A}^\s_{\rho,x\mapsto n^\inj}\}
\in \U.\]
We are essentially left with a situation similar to the existential case:
for any $\s\in E$, using countable choice we can pick an integer $n_\s$ such that 
$t \notin\tv{A}^\s_{\rho,x\mapsto n_\s^\inj}$.
We can then define the function $g\in\N^\S$  
such that for any $\s\in E$, $g(\s)= n_\s$. Hence
$E\subseteq \{\s\in\S: t\notin\tv{A}^\s_{\rho,x\mapsto (g(\s))^\inj}\}$,
and we conclude that $t\notin \tvq{A}_{\rho,x\mapsto f}$. \qedhere

}

\end{proof}

\autoref{r:los} implies that if a term $t$ is a realizer 
of a first-order internal formula $A$ ``often enough'' in the interpretation with slices, 
then $t$ is still a realizer in the interpretation up to $\U$. 
Since all the realizers in \Cref{s:nonstandard} were universal, 
 they are still  realizers in this new setting,
meaning that all the results from that section remain valid
in the interpretation up to $\U$.
In particular, \autoref{r:los} applies to Transfer, Idealization, 
Overspill or Underspill.

A simple example illustrating this new interpretation is the formula $\fas x. x<\delta$, 
which was realized by $\loopp$ in the interpretation with slices (see \autoref{rmk:diag})
and is now realized by any term (because for any $n\in\N$,
the set of states such that $n<\delta$ is equal to $[n;+\infty[$
which belongs to $\U$). 
Similarly, $\loopp$ can be replaced by $\anyt$ in \autoref{r:st_lower}. 
More generally, such a quotient allows us to get realizers for principles 
that were inaccessible in the interpretation with slices 
(e.g., $\forall x,y.\neg \stto x {\stto y {y<x}}$) but are usually valid
in nonstandard interpretations.

A more involved example concerns the Standardization principle:
 \emph{prima facie} this principle does not seem to be realizable with the current definitions,
but it is available for \Los-reducible formulas that are specified by the previous theorem. 
Technically, to internalize this in our interpretation would require to go to higher-order logic
in order to refer to standard predicates in the syntax.
In fact, this should not be a problem, as shown by the fact 
that our interpretation induces an evidenced frame~(see \Cref{s:ef}), which
is known in turn to induce triposes, hence a model of higher-order logic.

For the sake of simplicity, we will just consider standard predicates through
their semantical characterization, that is the ones induced by predicates 
in \autoref{def:realizability}, \emph{i.e.} whose value is identical in each slice.
Assume that we are given a \Los-reducible formula $A(x)$, which we identify 
with the truth-value function it induces $\mathbb{A}:n\in\N \mapsto \tvr{A(n)}$,
together with a standard predicate $\X$ (which is then \Los-reducible as well).
Restricted to this setting, Standardization states that there exists a standard 
predicate $\mathbb{Y}$ such that for any standard natural number $n\in\N$, we have:
\[\exists t. t \realq \mathbb{Y}(n^*) ~~\Leftrightarrow~~ 
  \exists t. t \realq \X(n^*) \land \mathbb{A}(n^*)\]
Since $\mathbb A$, $\X$ and arbitrary conjunctions are \Los-reducible, it is enough to 
consider the following standard predicate (seen via the function from $\N$ to $\Lambda$
inducing its actual value in the interpretation with slices):
\[\mathbb Y(n) \defeq \{t \in\Lambda : \{\s\in\S: t\real \X(n^*)\land \mathbb A(n^*)\}\in\U\}\]

Before taking a quotient, the definitions in \Cref{s:stateful_real}
gave us access to Standardization for internal formulas.
In the current setting, the restriction to \Los-reducible formulas is 
necessary to make it work with the quotient.
This is to be compared with Lightstone-Robinson's construction, where all internal formulas are \Los-reducible,
an analogous definition gives access to Standardization for these formulas.
Nonetheless, to validate the unrestricted principle of Standardization 
(that is, where $A$ is any formula), one usually needs to use a more involved
construction of a model by means of an adequate ultralimit, and the proof that
Standardization holds relies on transfinite reasoning and the full Axiom of Choice.
None of these principles being computationally interpretable in our setting,
it seems that the restricted statement above is the best we can do here.

\subsection{Limitations of the construction}
\label{s:limitations}
While the quotient from the previous subsection allows to capture a theory
which is even closer to ``usual'' nonstandard analysis, 
it has some drawbacks with respect to its realizability facet.
The main drawback, which is highlighted in \autoref{rmk:strict}, concerns the interpretation of implication.
First, let us connect the interpretation of implication through the quotient
with the expected interpretation in usual realizability settings.

\begin{prop}\label{r:quotient_imp}
For any internal formulas $A$ and $B$, 
and any valuation $\rho$ closing both $A$ and $B$, we have 
 $\tvrq{A\imp B} \subseteq \{t:\forall u\in\tvrq{A}.t\,u\in\tvrq{B}\}$.
 \end{prop}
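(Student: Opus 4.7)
The plan is to unfold the definitions and rely on the filter properties of $\U$ together with \Los' theorem (\autoref{r:los}) for the internal formulas $A$ and $B$.

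First, I would fix $t \in \tvrq{A \imp B}$ and $u \in \tvrq{A}$, and set out to show $t\,u \in \tvrq{B}$. By \autoref{def:realU}, the first assumption gives
\[ S_1 \defeq \{\s\in\S : (t;\s) \in \tvr{A\imp B}\} \in \U. \]
Since $A$ is internal, by \autoref{r:los} it is \Los-reducible, so the assumption $u \in \tvrq{A}$ yields
\[ S_2 \defeq \{\s\in\S : (u;\s) \in \tvr{A}\} \in \U. \]

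Next, I would take any $\s \in S_1 \cap S_2$. By definition of $\tvr{A\imp B}$ from \autoref{def:real_slices}, $(t;\s)\in\tvr{A\imp B}$ together with $(u;\s)\in\tvr{A}$ forces $(t\,u;\s)\in\tvr{B}$. Hence
\[ S_1 \cap S_2 \;\subseteq\; \{\s\in\S : (t\,u;\s)\in\tvr{B}\}. \]
Closure of $\U$ under intersection yields $S_1 \cap S_2 \in \U$, and upwards closure then gives $\{\s\in\S : (t\,u;\s)\in\tvr{B}\}\in\U$. Since $B$ is internal, applying \autoref{r:los} a second time (in the other direction) we conclude $t\,u \in \tvrq{B}$, as required.

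There is no real obstacle here: the argument is a direct bookkeeping exercise combining filter closure with \Los-reducibility for internal formulas. The only thing to be careful about is that the statement does \emph{not} claim the reverse inclusion; indeed, \autoref{rmk:strict} (as alluded to in the surrounding text) is precisely where the quotient construction breaks the usual realizability-style interpretation of implication, because a term realizing $A \imp B$ up to $\U$ pointwise on a set in $\U$ need not carry arbitrary realizers of $A$ (which may live on a different set in $\U$) to realizers of $B$ in the slice-wise sense.
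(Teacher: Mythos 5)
Your proof is correct and follows essentially the same route as the paper's: unfold $t\in\tvrq{A\imp B}$ and $u\in\tvrq{A}$ into sets in $\U$ (the latter via \Los-reducibility of the internal formula $A$), intersect them, apply implication slice-wise, and push the conclusion back through $\U$'s upward closure and \Los' theorem for $B$. Your closing remark about why the reverse inclusion fails also matches the paper's discussion around \autoref{rmk:strict}.
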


 \begin{proof}
 {
For any term $t$ and any formula $A$, 
let us denote by $S^{A}_{t}$ the set $\{\s\in\S:(t;\s)\in\tvr{A}\}$.
Let $t\in\Lambda$ be such that $S^{A\imp B}_t\in\U$ and $u\in\tvrq{A}$.
By hypothesis, $S^A_u\in\U$.
We need to show that $tu \in \tvrq{B}$. 
Again, for any $\s\in S^{A\imp B}_t \cap S^A_u \in \U$, we have $tu;\s \in \tvr{B}$.
By upwards closure, we deduce that $\{\s: (tu;\s)\in\tvr{B}\}\in\U$,
hence $tu\in \tvrq{B}$, and the result follows from \autoref{r:los}.
 }
{By \autoref{r:los}, we have
\[\tvrq{A}= \{ u\in\Lambda:\{\s\in\S: (u;\s)\in \tvr{A}\}\in\U\ \} \quad \text{and}\]
\[\tvrq{B}= \{ v\in\Lambda:\{\s\in\S: (v;\s)\in \tvr{B}\}\in\U\ \}.\] 
For any term $t$ and any formula $A$, 
let us denote by $S^{A}_{t}$ the set $\{\s\in\S:(t;\s)\in\tvr{A}\}$.
 
Let $t\in\Lambda$ be such that $S^{A\imp B}_t\in\U$ and $u\in\tvrq{A}$.
By hypothesis, $S^A_u\in\U$.
We need to show that $tu \in \tvrq{B}$. 
Again, for any $\s\in S^{A\imp B}_t \cap S^A_u \in \U$, we have $tu;\s \in \tvr{B}$.
By upwards closure, we deduce that $\{\s: (tu;\s)\in\tvr{B}\}\in\U$,
hence $tu\in \tvrq{B}$, and the result follows.}
\end{proof}
\begin{rem}\label{rmk:strict} 
One could have been tempted to define the truth value $\tvrq{A\to B}$
as the set of terms $t$ such that for any $u\in\tvrq{A}$, $t\,u\in\tvrq{B}$,
as is usual in realizability. 
Unfortunately, such a definition is incompatible with \autoref{r:los},
 as the other inclusion in \autoref{r:quotient_imp} does not hold.
To see this, let $A\defeq\natp{\tau}$ and $B\defeq\bot$ 
where $\tau$ is a non-computable function\footnote{To that end,
one can for instance consider the function $\tau$ which to each $\s\in\S$ associates 
the smallest natural number $n\in\N$ such that there is no term of size smaller than or equal to $\s$
that computes $n$ the state $\s$:
$\tau(\s)\defeq \inf\{n\in\N:\neg \exists t. |t|\leq \s \land t \reds {\s} {\s} \overline{n}\}$ .}
$\tau:\S\to\N$ for which
  there is no term $u$ such that $\forall \s. u\reds{\s}{\s}\tau(\s)$.
  By construction, we have that $\tvq{A}= \emptyset$, so that
   obviously for any $ u\in\tvq{\natp{\tau}}$, the function $(\lambda x.x)\,u\in\tvq{\bot}$.
  Yet,  for each state $\s$ the truth value $\tvr{\natp{\tau}}$ is not empty
  (it contains at least $(\overline{n},\s)$, for $n=\tau(\s)$) 
  and  therefore $(\lambda x.x;\s)\notin\tvr{\natp{\tau}\to \bot}$
  (since for any $(u;\s)\in\tvS{\natp{\tau}}$, $((\lambda x.x)\,u;\s)\notin\tvq{\bot}$).
\end{rem}

As it turns out, \autoref{def:realU} is not as compositional as one would expect
in realizability. 
Indeed, we can compose a realizer $t\in\tvrq{A\to B}$
with a realizer in $u\in\tvrq{A}$ to get $t\,u\in\tvrq{B}$,
but the \impirule-rule is not adequate when considering substitutions of
variables by realizers in the quotiented truth values.
In particular, \autoref{rmk:strict} emphasizes that the structure of this interpretation 
does not induce an evidenced frame, since it is not possible to define 
the function $\lambda:E\to E$ necessary to interpret implication.

\subsection{Stranger things}
\label{s:food}

As mentioned above, \autoref{rmk:strict} highlights  the existence 
  of ``counter-intuitive'' peculiarities of the interpretation up to $\U$ with respect 
  to the quotient in the Lightstone-Robinson construction.
 The latter indeed appears to be more regular, seemingly for two main reasons.
 
 First, while the Lightstone-Robinson construction is based on Boolean-valued models, 
 realizability interpretations associate to each formula a set of realizers 
 (instead of one unique Boolean). Besides, the use of relativized quantifiers (for instance
 in the statement for Idealization) forces us to use only computable 
 functions\footnote{This is the reason why, for instance, the premise of Idealization needs to be restricted to the existence of a \emph{standard} natural number $x$, instead of any natural number as is usually the case.}.
 
 Second, as highlighted in \Cref{s:slices}, in the stateful interpretation
 the $\set$ instruction allows terms to change the value of the states during computations,
 and thus of the slices. This phenomenon does not occur in the Lightstone-Robinson  construction where slices of the product are completely separated between them. 
 In fact, the ability of reading the value of the slice already implies that
  propositional internal formulas do not induce standard truth values, which is counter-intuitive.
  \begin{prop}
  There exists an internal propositional formula $A$, a term $t$ and two states $\s_0,\s_1$
  such that $(t;\s_0)\in\tvS{A}$ but $(t;\s_1)\notin\tvS{A}$.
 \end{prop}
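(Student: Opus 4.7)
My plan is to exhibit an explicit internal formula of arity~0 whose set of realizers depends essentially on the state, by exploiting the state-dependency of nonstandard individuals through the valuation. A natural candidate is $A \equiv \natp{\delta}$, where $\delta$ is the diagonal individual interpreted via a valuation $\rho$ mapping a first-order variable to $\delta$. Since $\delta$ is nonstandard, $\delta(\s)=\s$ genuinely varies with $\s$, and I expect this to propagate to the realizability interpretation of $A$, which is internal (no $\st$) and of arity~$0$ once $\rho$ is fixed.

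The first step is to unfold the interpretation. By \autoref{def:real_slices}, $(t;\s)\in\tvS{\natp{\delta}}_\rho$ iff, for every $F\in\sat$ and every $u$ with $(u;\s)\in\tvS{\bvnat{\delta}{X}}_{\rho,X\mapsto F}$, we have $(t\,u;\s)\in F$. Using $\foint{\delta}_{\rho}(\s)=\s$, this rewrites as: for all $F\in\sat$ and all $u$, $(u\,\overline{\s};\s)\in F \Rightarrow (t\,u;\s)\in F$.

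I then take $t\defeq \lambda u.\,u\,\overline{0}$, $\s_0=0$ and $\s_1=1$. For the positive direction, $(t;0)\in\tvS{\natp{\delta}}_\rho$ follows by saturation: if $(u\,\overline{0};0)\in F$, then $t\,u\reds{0}{0} u\,\overline{0}$ gives $(t\,u;0)\in F$. For the negative direction, I construct a witnessing saturated set
\[ F \defeq \{(v;\s)\in\Lambda\times\S : \exists \s'\in\S.\; v \reds{\s}{\s'} \overline{1}\}, \]
whose saturation is straightforward to verify by transitivity of reduction. Taking $u\defeq \lambda x.x$, the reduction $u\,\overline{1}\reds{1}{1}\overline{1}$ yields $(u\,\overline{1};1)\in F$, so $(u;1)\in\tvS{\bvnat{\delta}{X}}_{\rho,X\mapsto F}$. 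Yet $t\,u \reds{1}{1} u\,\overline{0}\reds{1}{1}\overline{0}$, which does not reduce to $\overline{1}$; thus $(t\,u;1)\notin F$ and consequently $(t;1)\notin\tvS{\natp{\delta}}_\rho$.

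The only obstacle is a bit of bookkeeping on valuations and checking saturation of $F$; conceptually the result is immediate once one observes that applying $\natp{\cdot}$ to a nonstandard individual inherits its state dependency, so the truth value in slice~$\s$ effectively tests behaviour at $\overline{\s}$, which is not uniform across states.
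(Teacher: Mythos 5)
Your verification is internally correct, but the example you chose does not satisfy the statement's requirement that $A$ be \emph{propositional}, and it misses the phenomenon the proposition is meant to expose. You take $A\equiv\natp{x}$ under a valuation $\rho(x)=\delta$. This is a formula with a free first-order variable, interpreted by a nonstandard individual; it is a predicate, not a proposition (arity $0$ in the sense of \autoref{rmk:pred}). The entire state-dependency of its truth value is carried by the parameter $\delta$, whose slice-dependence is already the point of ENS$_0$ (\autoref{r:delta}) and of the glueing theorem for formulas with nonstandard parameters. So while your computation is sound (the set $F$ you define is indeed saturated, and the reductions go through as you claim), what you prove is an unremarkable instance of ``formulas with nonstandard inputs are slice-dependent,'' not the intended result.

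The proposition is meant to show something genuinely counter-intuitive: that even an internal formula \emph{with no first-order individuals at all} --- one whose truth value in the Lightstone--Robinson product would be the same in every slice --- can have a realizer in one state but not another. The state-dependency in that situation cannot come from the formula (there are no individuals to ``freeze'' in a slice); it has to come from the \emph{term}. The paper accordingly takes $A\defeq(A_1\land\neg A_1)\to A_1$, a purely propositional internal formula with a free second-order variable of arity $0$, and $t\defeq\rec\,(\lambda x.\pi_1(x))\,(\lambda xyz.\pi_2(z))\,\get$, which uses $\get$ to read the slice and then dispatches on its value: in slice $0$ it returns the first projection (a realizer of $A_1$), in slice $1$ the second (a realizer of $\neg A_1$), so $(t;0)\in\tvS{A}$ while $(t;1)\notin\tvS{A}$. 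The failure of state-independence thus arises purely from the effectful behaviour of $t$, which is the point the section ``Stranger things'' wants to make about why the slices are not ``completely separated'' the way they are in the classical ultrapower construction. To repair your argument you would need to remove the nonstandard parameter from $A$ entirely and push the state-sensitivity into $t$ by using $\get$ or $\set$.
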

 \begin{proof}
  Take for instance $A\defeq (A_1\land \neg A_1)\to A_1$,
  $t \defeq \rec\,(\lambda x.\pi_1(x))\,(\lambda xyz.\pi_2(z))\,\get$,
  $\s_0=0$ and $\s_1=1$.
  We have:
  \begin{itemize}
   \item $(t;0)\in\tvS{A}$: 
   for any $(u;0)\in\tvS{A_1\land \neg A_1}$, $u\reds 0 \s (u_1,u_2)$ with $(u_1;\s)\in\tvS{A_1}$,
   and
   \[t ~~\reds 0 0~~ (\rec\,(\lambda x.\pi_1(x))\,(\lambda xyz.\pi_2(z))\,0) \,u 
  ~~ \reds 0 0 ~~(\lambda x.\pi_1(x))\,u ~~\reds 0 0 ~~\pi_1(u) ~~\reds 0 \s ~~u_1.\]
   The result follows by anti-reduction.
   \item $(t;1)\notin\tvS{A}$: 
   for any $(u;1)\in\tvS{A_1\land A_2}$, $u\reds 1 \s (u_1,u_2)$ with $(u_2;\s)\in\tvS{\neg A_1}$,
   and
   \[t ~~\reds 1 1~~ (\rec\,(\lambda x.\pi_1(x))\,\lambda xyz.\pi_2(z)\,1) \,u 
  ~~ \reds 1 1 ~~(\lambda xyz.\pi_2(z)) 0\,(\rec ...)\,u ~~\reds 1 1 ~~\pi_2(u) ~~\reds 1 \s ~~u_2.\]
   Since $(u_2;1)\in\tvS{\neg A_1}$, it cannot be the case that $(u_2;1)\in\tvS{A_1}$. \qedhere
  \end{itemize}
 \end{proof}

 As explained above, the interpretation briefly introduced in \Cref{s:quotient} is an attempt
 to provide a quotient, guided by the rationale of \Los' theorem. 
 Nonetheless, there might be more refined ways to arrive at a satisfying definition 
 of a quotient. In particular, \autoref{def:realU} does not take the computation into account when defining the quotient, which turns out to be problematic as \autoref{rmk:strict} shows.
 Therefore, it could be tempting to contemplate a notion of reduction up to $\U$ as follows
 \[ t\redU u ~~\defeq~~ \{\s\in\S : \exists \s'.t\reds \s {\s'} u\}\in \U \]
 Nonetheless, such a definition leaves us even further form our goal since the induced realizability interpretation would present a lot of counter-intuitive peculiarities.
 We illustrate some of these peculiarities in the following propositions,
 for which we consider a free ultrafilter $\U$ on $\S$ and, without loss of generality, assume that 
    $\{\s\in\S:\exists n\in\N. \s=2n\}\in \U$.
 In particular, this implies that $\{\s\in\S:\exists n\in\N. \s=2n+1\}\notin \U$.
 We will say that a property occurs \emph{often enough} when the set of states for which
 it is valid belongs to $\U$.

 The next proposition shows the existence of a formula $A$ and two terms $t$ and $u$ such that 
 $u$ is often enough a realizer and $t$ reduces often enough to $u$, 
 but never to a slice in which it is a realizer.
   \begin{prop}
  There exist a formula $A$, a valuation $\rho$ closing $A$ and two terms $t$, $u$ such that
  \begin{enumerate}
   \item $\{\s\in\S:(u;\s)\in\tvr{A}\}\in\U$
   \item $\{\s\in\S:\exists \s'. t \reds \s {\s'} u\}\in\U$
   \item $\forall \s, \s'. t \reds \s {\s'} u \Rightarrow (u;\s')\notin\tvr{A}$.
  \end{enumerate}
 \end{prop}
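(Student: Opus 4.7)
The plan is to construct an explicit counterexample exploiting the asymmetry that $\set$ can only \emph{increase} the state, together with the fact that the ultrafilter $\U$ provides a meaningful large/small dichotomy. By the standing hypothesis, $E \defeq \{\s \in \S : \s \text{ is even}\} \in \U$, so the odd states form a $\U$-small set. I will design $u$ and $A$ so that $u$ realizes $A$ exactly at even states (condition~1), and design $t$ so that its reduction from any starting state $\s$ terminates at state $2\s+1$ --- always odd, always $\geq \s$, hence always reachable via $\set$ and never in the realizer set (conditions~2 and~3).

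Take $A$ to be the formula $X$ for a second-order variable $X$ of arity $0$, take $u \defeq 0$, and define
\[
S \;\defeq\; \{(v;\s) \in \Lambda \times \S \;:\; \exists \s' \in \S \text{ even}.\; v \reds{\s}{\s'} 0\}.
\]
Saturation of $S$ follows from transitivity of $\reds{}{}$: if $v \reds{\s}{\s'} v'$ and $(v';\s') \in S$ via some even $\s''$ with $v' \reds{\s'}{\s''} 0$, then $v \reds{\s}{\s''} 0$. Set $\rho(X) \defeq S$, so that $\tvr{A} = S$. Since $0$ is a value and only reduces to itself without changing state, $(0; \s) \in S$ iff $\s$ is even, establishing condition~1.

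For $t$, let $d \defeq \lambda n.\rec\,0\,(\lambda xy.\suc\,(\suc\,y))\,n$, which by a trivial induction satisfies $d\,\overline{n} \reds{\s}{\s} \overline{2n}$ for every $n \in \N$ and $\s \in \S$, and set
\[
t \;\defeq\; \set\,(\suc\,(d\,\get))\,0.
\]
Direct inspection of the reduction strategy shows that from any state $\s$, $t$ reduces first without state change through $\set\,(\suc\,(d\,\overline{\s}))\,0$ and $\set\,\overline{2\s+1}\,0$, then to $0$ in state $\max(2\s+1,\s) = 2\s+1$. Since the reduction to $0$ is essentially deterministic, this is the only path reaching $0$: the set of condition~2 equals all of $\S \in \U$, and any pair $(\s,\s')$ with $t \reds{\s}{\s'} 0$ satisfies $\s' = 2\s+1$, which is odd, hence $(0;\s') \notin S$, yielding condition~3.

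The main obstacle is to find a target function $g(\s)$ for the reduction that is simultaneously $\geq \s$ (so that $\set$'s max-semantics actually reaches it from every starting state) and \emph{uniformly} outside the realizer set $E$. Simpler candidates such as $g(\s) = \s+1$ fail because their parity alternates with $\s$: starting from an odd state the reduction would end at an even state inside $E$, violating condition~3. The choice $g(\s) = 2\s+1$ reconciles both requirements and is definable by a short $\rec$-term, making the rest of the verification routine.
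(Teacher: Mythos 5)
Your proof is correct and follows essentially the same strategy as the paper's: use the standing assumption that the even states form a $\U$-large set, pick $A$ so that its truth value at the fixed term $u$ is exactly the even states, and design $t$ so that the $\set$ instruction always pushes the state to an odd number $\geq$ the starting state, which is possible precisely because $\set$ can only increase the state. The paper instantiates this with $A$ the atomic formula $\tau = 0^\inj$ (interpreted as a primitive relation for the nonstandard individual $\tau : n \mapsto n \bmod 2$), $u = \lambda x.x$, and a $\mathsf{next\_odd}$ function; you instantiate it with a nullary second-order variable whose valuation is a hand-crafted saturated set $S$, $u = 0$, and the doubling-plus-one function. Your choices are if anything slightly cleaner on two counts: the arithmetic term computing $2\s+1$ is a shorter $\rec$-term than one that branches on parity, and taking $A$ to be a bare predicate variable lets you define the truth value directly as a saturated set instead of going through a relational interpretation of equality. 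Both constructions verify the same three conditions by the same mechanism, so this is a cosmetic variant of the paper's proof rather than a genuinely different route.

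Two small points worth flagging, neither of which is a gap. First, the saturation of your $S$ should be stated in terms of the paper's definition (if $(v';\s') \in S$ and $v \reds{\s}{\s'} v'$ then $(v;\s) \in S$): your transitivity argument does exactly this, it would just be cleaner to use that phrasing. Second, when you invoke determinism to conclude that $2\s+1$ is the \emph{only} state at which $t$ reaches $0$, you are implicitly relying on the paper's observation that the context grammar rules out critical pairs and on $0$ being a normal form; it would be worth making that dependence explicit, since condition 3 quantifies over \emph{all} pairs $(\s,\s')$ with $t \reds{\s}{\s'} u$ and the claim would fail if the evaluation strategy were nondeterministic.
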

 \begin{proof}
Consider the (nonstandard) individual $\tau$ defined by $\tau : n\in\N \mapsto n \mod 2$
 (i.e $\tau(2n)=0$ and $\tau(2n+1)=1$). 
 By construction, we have
 \[\tvS{\tau =0^\inj}=\{(t;\s)\in\Lambda\times\S: \exists n\in N. \s=2n\}.\]
 
Let us now define a function $f$ which, given any integer $n\in\N$, 
 computes the lowest odd number greater than or equal to $n$:
 $f(0) = 1$, $f(1) = 1$, $f(2)=3$, etc.
 It is clear that this function is primitive recursive, hence
 there is a term $\mathsf{next\_odd}$ that computes it.
We let $u\defeq\lambda x.x$ and 
$t \defeq\set\, (\mathsf{next\_odd} \get)\, u$.
For any state $\s\in\S$, we then have
\[t~=~\set \,(\mathsf{next\_odd} \get)\, u ~~\reds{\s}{\s}~~ \set\, (\mathsf{next\_odd}\, \s)\,
~~\reds{\s}{\s}~~\set \,\overline{f(\s)}\, u ~~\reds{\s}{f(\s)}~~ u,\]
where $f(\s)$ is odd. Hence, if we define $A\defeq x = 0^*$ and $\rho = x\mapsto \tau$, 
we have
\begin{enumerate}
 \item $\{\s\in\S:(u;\s)\in\tvS{\tau=0^\inj}\}=\{\s\in\S:\exists n\in\N,\s=2n\}\in\U$
 \item $\{\s\in\S:\exists \s'. t \reds \s {\s'} u\} = \S \in\U$
 \item for any $\s$, $t \reds \s {f(\s)} u$ and $(u;f(\s))\notin\tvS{A}$ since $f(\s)$ is odd.\qedhere
\end{enumerate}
\end{proof}

The next result shows that even if there are enough slices in which $t$ reduces to $u$ 
in a slice that makes it a realizer of some formula $A$,
$u$ may not be a realizer of $A$ often enough.
\begin{prop}
  There exist an atomic formula $A$, a valuation $\rho$ closing $A$ and two terms $t$, $u$ such that
  \begin{enumerate}
   \item $\{\s\in\S:\exists \s'. t \reds \s {\s'} u \land (u;\s')\in\tvr{A}\}\in\U$
   \item $\{\s\in\S:(u;\s)\in\tvr{A}\}\notin\U$.
  \end{enumerate}
 \end{prop}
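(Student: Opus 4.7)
The plan is to mirror the construction in the previous proposition, but to swap the roles of the ``good'' and ``bad'' slices: where before the reduction of $t$ always landed in a slice that did \emph{not} realize $A$, now it should always land in a slice that \emph{does} realize $A$, even though such slices form a set outside of $\U$.

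First, I would reuse the very same terms: $u \defeq \lambda x.x$ and $t \defeq \set\,(\mathsf{next\_odd}\,\get)\,u$. The proof of the previous proposition already establishes that for every state $\s$, one has $t \reds{\s}{f(\s)} u$, where $f(\s)$ is odd. Thus the target state of the reduction from any $\s$ is always an odd number.

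Next, I would design $A$ and $\rho$ so that $(u;\s')$ realizes $A$ precisely when $\s'$ is odd. Since an actually atomic formula is required (unlike the Leibniz-encoded equality used in the previous proposition), I would take $A \defeq X(x)$ for a fresh unary predicate variable $X$. Define $\tau \in \N^\S$ by $\tau(\s) = 0$ if $\s$ is odd and $\tau(\s) = 1$ otherwise, and set $\rho(x) \defeq \tau$. Choose $\rho(X) : \N \to \sat$ by $\rho(X)(0) \defeq \Lambda \times \S$ and $\rho(X)(n) \defeq \emptyset$ for $n \neq 0$; both are saturated sets, so $\rho(X)$ is a well-formed truth-value function. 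By Definition~\ref{def:application}, $\tvr{X(x)} = \rho(X)@(\tau) = \{(w;\s') : w \in \rho(X)(\tau(\s'))\}$, which coincides with $\Lambda \times \{\s' : \s' \text{ odd}\}$. In particular $(u;\s') \in \tvr{A}$ iff $\s'$ is odd.

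Finally, I would verify the two conditions. For (1), since $t \reds{\s}{f(\s)} u$ for every $\s$ and $f(\s)$ is odd, we get $(u; f(\s)) \in \tvr{A}$, so the set in condition (1) equals the whole of $\S$ and therefore belongs to $\U$. For (2), $\{\s : (u;\s) \in \tvr{A}\}$ is exactly the set of odd states, which is the complement of $\{\s : \s = 2n\} \in \U$ and hence does not belong to $\U$. There is no real obstacle in this argument; the only subtlety is ensuring that $A$ is genuinely atomic (hence the use of a predicate variable $X$ rather than the encoded equality of the previous proposition) while still producing truth values that depend in a nontrivial way on the state.
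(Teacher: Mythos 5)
Your proof is correct and follows essentially the same approach as the paper: pick a nonstandard individual $\tau$ that alternates by parity, a term $t$ that always shifts the state to a fixed parity class, and an atomic formula $A$ whose realizers live exactly on the opposite parity class. The paper instantiates this with $\tau : \s \mapsto \s \bmod 2$, $A := (x = 1^\inj)$ read as the atomic relation of \Cref{s:idealization}, and $t := \incr\,u$ (which shifts the state by one), while you reuse $\mathsf{next\_odd}$ from the preceding proposition, take the complementary $\tau$, and realize $A$ as a predicate variable $X$ with an explicitly chosen truth-value function $\rho(X)$ --- all cosmetic variations on the same underlying idea.
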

\begin{proof}
 Take again the (nonstandard) individual $\tau$ defined by $\tau : n\in\N \mapsto n \mod 2 $,
 $\rho \defeq x \mapsto \tau$ and $A\defeq x=1^*$.
Let us define $u\defeq \lambda x.x$, $\incr\defeq \lambda x.\set\,(\suc\,\get)\,x$
$t\defeq \incr\,u$.
By construction, we have that $\tvS{\tau = 1}=\{(v;\s):\exists n\in\N, \s=2n+1\}$
and $t=\set \,(\get + 1)\, u \reds{\s}{\s+1} u$. Hence
\begin{enumerate}
 \item $\{\s\in\S:\exists \s'. t \reds \s {\s'} u \land (u;\s')\in\tvr{A}\}
 =\{\s\in\S:\exists n. \s=2n\}\in\U$
 \item $\{\s\in\S:(u;\s)\in\tvr{A}\}=\{\s\in\S:\exists n. \s=2n+1\}\notin\U$.\qedhere
\end{enumerate}
\end{proof}

\section{Related and future work}
\label{s:conclusion} 

\subsection{Related work}\label{s:related_work}

Some related works concern notions of realizability for nonstandard arithmetic
which are variants of Kreisel's modified realizability \cite{BerBriSaf12,DinGas18}. 
These notions of realizability are more inspired by Nelson's syntactical approach 
to nonstandard analysis. In particular, they rely on translations of formulas inducing
conservative extensions of Heyting arithmetic. 
To draw a comparison with Van den Berg \emph{et al.}'s work, it should be observed 
that they interpret standard elements as finite sequences that can be thought 
of as a process of accumulating potential witnesses. 
In particular, their interpretation
crucially relies on a monotonicity property for these sequences (regarding sequence inclusion),
stating that realizers are provably upwards-closed~\cite[Lemma 5.4]{BerBriSaf12}. 
This property has no counterpart in our setting.
On the other hand, our interpretation is able to give computational content to
nonstandard individuals, and even to give explicit nonstandard elements (such as the diagonal)
with their corresponding realizers.
This is, for example, what allows us to computationally interpret 
Idealization (see \autoref{r:idealization}),
whereas the functional interpretation  for Idealization in~\cite{BerBriSaf12}
is trivial in the sense that the interpretations of the premise and the conclusion 
of any instance of Idealization are identical. 
It could be interesting to better understand the relation between this approach and the approaches based on Kreisel's realizability. In particular, we would like to know whether we can obtain
a preservation result for some class of formulas (\emph{e.g.} internal, quantifier-free, $\exists$-free 
formulas).

Similar ideas have been addressed by Aschieri. In~\cite{Aschieri17} the author
uses a notion of state which allows to construct a forcing model. In particular,
natural numbers are interpreted as functions from states to $\N$. 
Yet, his work does not pay attention to the nonstandard principles that can be obtained in his
setting but rather to forcing.
It would be natural to investigate whether our setting also allows for forcing 
techniques.
This connection with forcing is reinforced by the fact that
in the realm of Krivine's realizability, which generalizes
Cohen's forcing, the latter is given a computational content
via the addition of a monotone memory cell to the abstract machine
in order to store forcing conditions \cite{Krivine11,Miquel11b}.
Also, recent work of Powell has been focusing on a variant of Gödel's functional
interpretation to take into account stateful computations~\cite{Powell18}.
In addition to investigating the computational contents of the stateful programs
obtained by extraction through this interpretation, the author
proposes some problems that the reader might find interesting.

\subsection{Weak K{\H o}nig's Lemma}
As shown in \cite{DinFer16}, $\mathrm{WKL}_0$ 
(one of the Big Five systems from Reverse Mathematics) 
is interpretable, over a nonstandard version of primitive recursive 
arithmetic with extensionality, using a version of the Axiom of Choice and 
Idealization. It relies on distinguishing two sorts: 
the number sort is interpreted by the standard numbers,
and the set sort is interpreted by bounded type 1 functionals 
(or by number codes, both standard and nonstandard, of finite sets of numbers,
again both standard and nonstandard). 

Recall that Weak K{\H o}nig's Lemma states that every infinite binary tree has
an infinite branch.
As it turns out, in that context to say that $T$ is a tree is to say two things: 
$(i)$  every standard natural number which is in the tree is the code of a binary sequence and 
$(ii)$ if some standard $\sigma$ is in the tree and the binary sequence coded by a standard 
element $\tau$ is an initial segment of the binary sequence coded by $\sigma$,
then $\tau$ is also on the tree. 
The interpretation of being infinite is a formula saying that for every standard 
natural number $\sigma$ there exists a standard element with length $w$ which is 
in the tree.
The proof then relies on showing that an element $\alpha$, defined exactly as
$\sigma$ below and at $w$ and 0 from there onwards can be turned into an infinite 
branch with the use of Idealization.

So, the interpretation of Weak K{\H o}nig's Lemma crucially relies on the ability 
to manipulate trees and on Idealization. Of course, in our setting, we have an 
explicit (nontrivial) realizer for Idealization, so, in principle, it should be 
possible to give a realizer for Weak K{\H o}nig's Lemma. However, that would 
require a whole reformulation of the framework in order to have an explicit 
access to trees instead of a noncomputational second-order quantification.

\subsection{Horizons}
The work done in this paper raises some natural questions of which we mention 
a few, as possible lines of investigation.

A first natural question comes from the fact that prior interpretations 
of nonstandard arithmetic, such as \cite{BerBriSaf12,DinGas18, FerGas15} 
(and also \cite{FerOl05} and \cite{FerNunes06} in a context that does 
not involve nonstandard arithmetic), restrict quantifiers by bounding
the variables under their scope. It is then pertinent to ask whether 
this could be given a more computational interpretation as we do here, 
in order to see it as some kind of ``computation up to (the bound)''.

A second possible path would be to reformulate our interpretations in order 
to account for classical logic by using control operators as is usual in Krivine's 
realizability~\cite{Krivine09}. Alas, our attempts in that direction have 
not been very fruitful, mostly because Krivine's interpretation
crucially relies on an orthogonality relation between 
terms and evaluations contexts (which is reminiscent of 
the duality of computation in classical logic~\cite{CurHer00}). 
In terms of the ultrafilter, this would require some sort of
perfect balance to make the quotient compatible with this orthogonality 
relation which so far has eluded us. 
This is similar to the limitations pointed out in \Cref{s:limitations}.

Thirdly, there is a very active line of research in realizability concerning 
the interpretation of various choice principles. In particular, the use of states 
or memoization has proven to be useful for interpreting dependent choice (e.g.\
in \cite{BerBezCoq98}, \cite{Herbelin12} or \cite{CohFarTat19}, to name but a few)
or Double Negation Shift (DNS) (in \cite{Blot22}, Blot uses an ``update recursion''
mechanism to realize DNS). At the same time, DNS is also interesting in itself 
as a non-intuitionistic principle. This is particularly relevant since our setting
interprets (a version of) the LLPO principle, which means that we are somewhere
between intuitionistic and classical logic.
Furthermore, DNS is also known to be interpretable using bar recursion, which 
raises the question of knowing whether our interpretation could be compatible 
with such an operator.

Finally, we would like to mention that Brede and Herbelin's~\cite{BreHer21}
establishes a hierarchy of choice principles, relating in particular 
tree-based choices principles and their dual bar induction-based principles. 
Many of the principles they study are not attached to a precise computational 
content so far, and so it would be interesting to see if there exist specific interpretations 
that could capture exactly each of these principles, and, in particular,
``lower'' instances of their generalized dependent choice or generalized bar 
induction principles.

\bibliography{biblio}

\bibliographystyle{alphaurl}

\end{document}